\let\proof\@undefined
\let\endproof\@undefined
\newcommand{\cX}{{\mathcal X}}
\newcommand{\cC}{{\mathcal C}}
\newcommand{\cT}{{\mathcal T}}
\newcommand{\cS}{{\mathcal S}}
\newtheorem{observation}{Observation}
\newcommand{\cass}{\textsc{Cass}}
\newcommand{\cassinden}{\textsc{Cass$^{\textsc{DC}}$}}
\begin{document}
\title{On the elusiveness of clusters}
\author{Steven Kelk\inst{1}\thanks{Steven Kelk was 
partly funded by a Computational Life Sciences grant of The Netherlands Organisation for Scientic Research (NWO)}, Celine Scornavacca\inst{2}, Leo van Iersel\inst{3}\thanks{Leo van Iersel was funded by the Allan Wilson Centre for Molecular Ecology and Evolution.}}
\institute{Department of Knowledge Engineering (DKE), Maastricht University,\\ P.O. Box 616, 6200 MD Maastricht, The Netherlands. steven.kelk@maastrichtuniversity.nl \\ \and 
Center for Bioinformatics (ZBIT),
T\"ubingen University, 
Sand 14,\\ 72076 T\"ubingen, Germany. scornava@informatik.uni-tuebingen.de\\ \and University of Canterbury, Department of Mathematics and Statistics,\\Private Bag 4800, Christchurch, New Zealand. l.j.j.v.iersel@gmail.com}
\maketitle

\begin{abstract}
Rooted phylogenetic networks are often used to represent conflicting phylogenetic signals. 
Given a set of clusters, a network is said to represent these clusters in the \emph{softwired} sense if, for each cluster in the input set, at least one tree embedded in the network contains that cluster. 
Motivated by parsimony we might wish to construct such a network using as few reticulations as possible, or minimizing the \emph{level} of the network, i.e. the maximum number of reticulations used in any ``tangled'' region of the network.
Although these are NP-hard problems, here we prove that, for every fixed $k \geq 0$, it is polynomial-time solvable to construct a phylogenetic network with level equal to $k$ representing a cluster set, or to determine that no such network exists. However, this algorithm does not lend itself to a practical implementation.  
We also prove that the comparatively efficient \textsc{Cass} algorithm correctly solves this problem (and also minimizes the reticulation number) when input clusters are obtained from two not necessarily binary gene trees on the same set of taxa  but  does not always  minimize level for general cluster sets. Finally, we describe a new algorithm which generates in polynomial-time all binary phylogenetic networks with exactly $r$ reticulations representing a set of input clusters (for every fixed $r \geq  0$).
\end{abstract}



\section{Introduction}
\label{sec:introall}

The  
traditional abstraction for modeling evolution is the phylogenetic tree. The underlying principle of such a tree is that
the observed diversity in a set of species (or, more abstractly, a set of \emph{taxa}) can be explained by branching
events that cause lineages to split into two or more sublineages \cite{SempleSteel2003,MathEvPhyl,reconstructingevolution}. However, there is increasing attention for the situation
when observed data cannot satisfactorily be modeled by a tree. The field of phylogenetic networks has arisen with this
challenge in mind. Phylogenetic networks generalize phylogenetic trees, but within this very general characterization there
are many different definitions and models \cite{HusonRuppScornavacca10}.
In this article we
are concerned with
\emph{rooted}  phylogenetic networks.  Such networks assume that the observed data evolves from a unique starting point (the root) and that evolution is directed away from this root. The main way these networks differ from rooted phylogenetic trees is the presence of \emph{reticulation} nodes: nodes with indegree 2 or higher. 
For the remainder of this article we will use the term phylogenetic network, or just network, to refer to rooted phylogenetic networks. We refer the reader to \cite{HusonRuppScornavacca10,Nakhleh2009ProbSolv,Semple2007,husonetalgalled2009,twotrees} for detailed background
information.

Constructing a phylogenetic network that ``explains'' the observed data is a trivial problem if no optimality criteria are
imposed upon the constructed network. One simple optimality criterion that has attracted a great deal of attention in the
literature, \emph{reticulation  minimization}, is to compute a phylogenetic network that explains the observed data but using as few reticulation events (essentially, reticulation nodes) as possible.
This is an algorithmically hard problem, irrespective of the exact construction technique that is being applied \cite{twotrees}. A related optimality criterion, \emph{level minimization} \cite{JanssonSung2006,JanssonEtAl2006,lev2TCBB,reflections,tohabib2009} is motivated
by the observation that a phylogenetic network can be regarded as some kind of tree backbone decorated with \emph{tangles} of reticulate activity \cite{HusonRuppScornavacca10}. Here the challenge is to construct
a phylogenetic network that explains the observed data but such that the maximum number of reticulation events inside any biconnected component, the \emph{level}, is as low as possible. Reticulation  minimization is
thus a global optimality criterion, and level minimization is in some sense a local optimality criterion; see Figure
\ref{fig:bcc}. Both criteria will have an important role in this article.

\begin{figure}[h]
  \centering
  \includegraphics[scale=.2]{./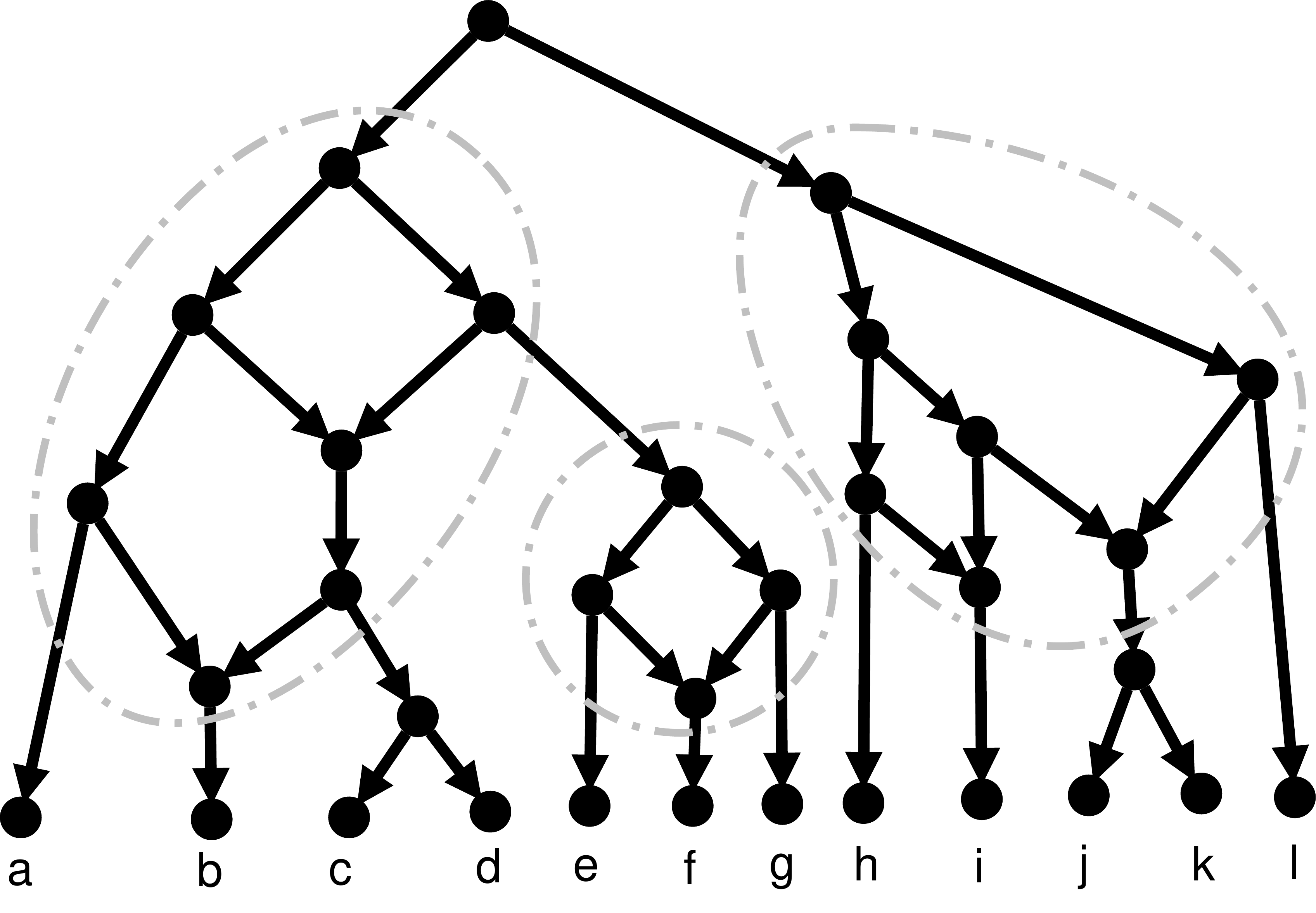}
  \caption{Example of a phylogenetic network with five reticulations. The encircled subgraphs form its biconnected components, also known as its ``tangles''. This binary network has level equal to 2 since each biconnected 
component contains at most two reticulations.}
  \label{fig:bcc}
\end{figure}

\begin{figure}[t]
  \centering
  \begin{subfigure}[]
  {
    \centering
    \includegraphics[scale=0.16]{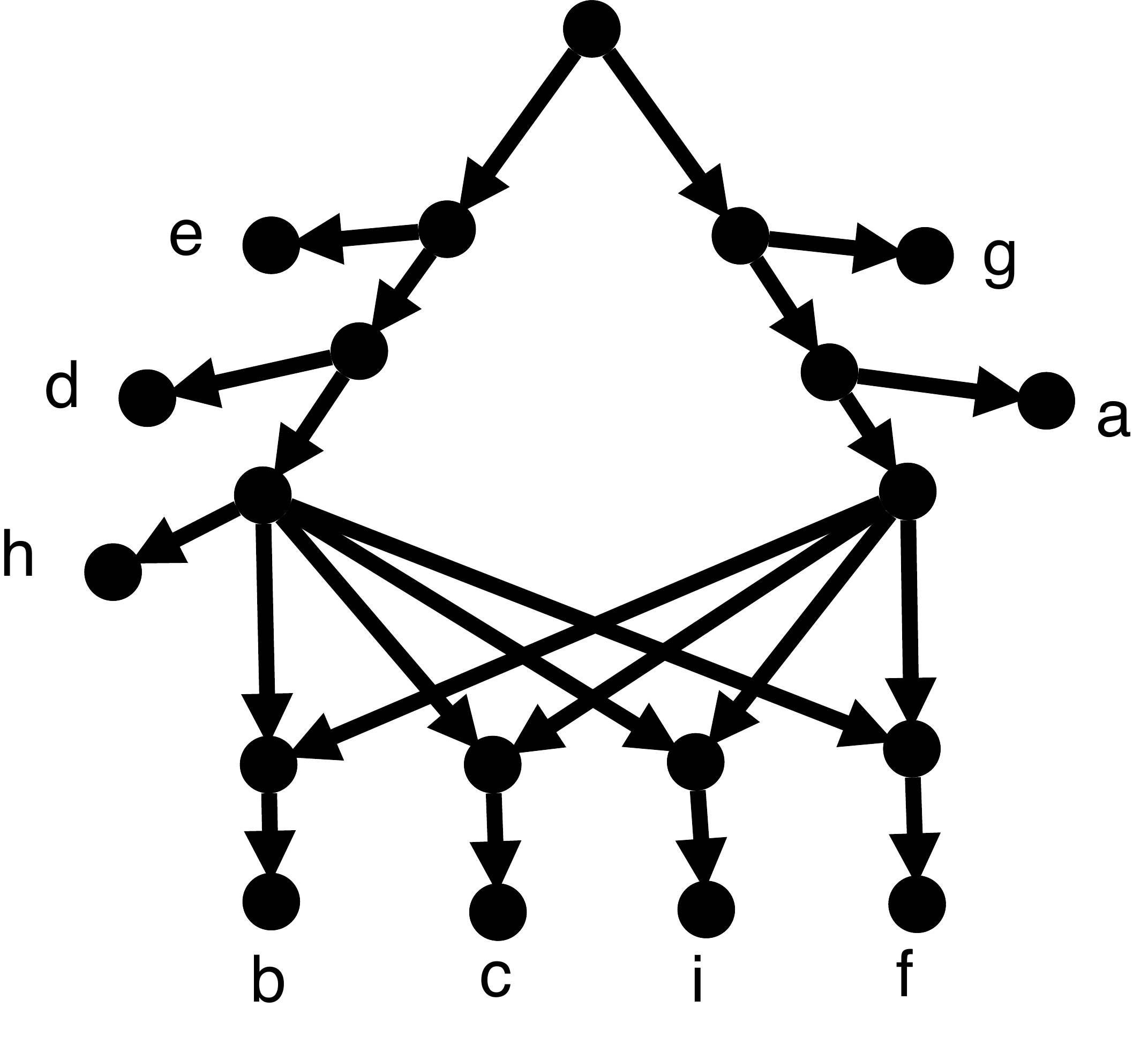}
    \label{fig:gallednetwork}
    \vspace{.5cm}
  }
  \end{subfigure}
  \begin{subfigure}[]
  {
    \centering
    \includegraphics[scale=0.16]{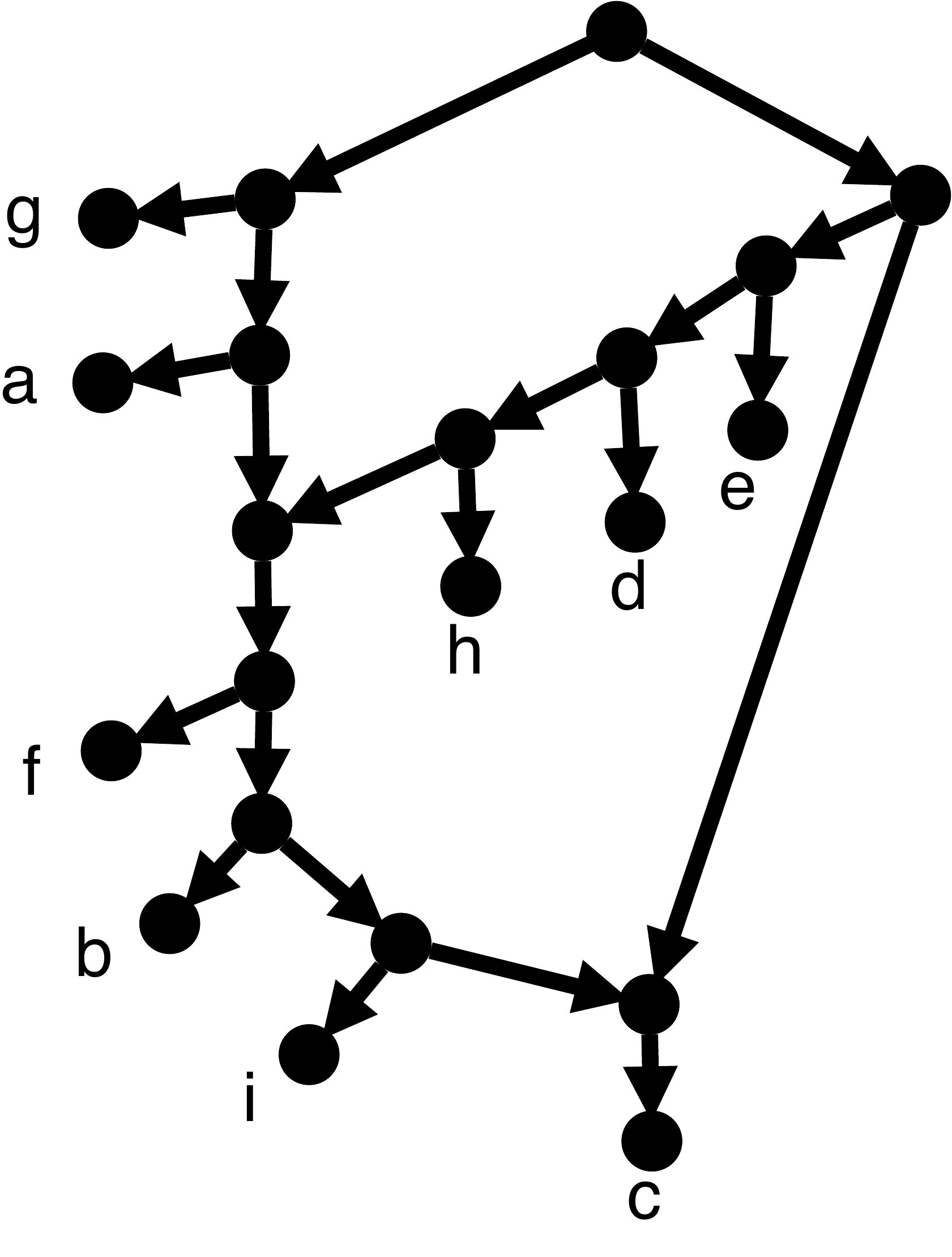}
    \label{fig:level2example}
  }
  \end{subfigure}
  \caption{(a) The output of the galled network algorithm~\cite{husonetalgalled2009} for $\mathcal{C}=\{\{a,b,f,g,i\}$, $\{a,b,c,f,g,i\}$, $\{a,b,f,i\}$, $\{b,c,f,i\}$, $\{c,d,e,h\}$, 
$\{d,e,h\}$, $\{b,c,f,h,i\}$, $\{b,c,d,f,h,i\}$, $\{b,c,i\}$, $\{a,g\}$, $\{b,i\}$, $\{c,i\}$, $\{d,h\}\}$ and (b) a (simple) network with two fewer reticulations that also represents this set of 
clusters.}
  \label{fig:example}
\end{figure}

The question remains, when does a phylogenetic network explain the observed data? This depends very much on the exact construction technique being applied. The classical problem is
motivated by the biological observation that, although the evolution of a set of organisms might best be explained by a phylogenetic network, the individual genes of the organisms
will generally undergo treelike evolution \cite{Nakhleh2009ProbSolv}. In such a case one can think of the gene trees as being \emph{displayed by} (i.e. topologically embedded within) the species network. The reticulation nodes then have an explicit biological interpretation as  (for example) hybridization, recombination or horizontal gene transfer events. Hence the following problem: given a set of rooted phylogenetic trees, all on the same set of taxa, compute a phylogenetic network with a minimum number of reticulations that displays all the input trees. The problem is already NP-hard (and APX-hard) for the case of two input trees \cite{bordewich}. However, extensive research by different authors has shown that, by exploiting the fixed parameter tractability of the problem \cite{bordewich2,sempbordfpt2007},
the two-tree problem can be solved to satisfaction for many instances \cite{quantifyingreticulation,wuISBRA2010}. The case of more than two input trees, or input trees that are not all binary (i.e. some nodes have outdegree three or higher), has been considerably less well studied \cite{pirnISMB2010,huynh}.

A parallel, and related, line of research concerns ``piecewise'' assembly of phylogenetic networks. Whereas the tree problem described above concerns the combination of
a small number of large hypotheses (e.g. gene trees) into a phylogenetic network, an alternative strategy is to combine a large number of small hypotheses into a phylogenetic
network. Examples of such small hypotheses include \emph{rooted triplets} (phylogenetic trees defined on size-3 subsets of the taxa) \cite{simplicityAlgorithmica,reflections}, (binary) characters (e.g. whether or
not the taxon is vertebrate) \cite{gusfielddecomp2007,gusfield2,WuG08,myers2003}, and \emph{clusters} (clades)
\cite{HusonKloepper2007,husonetalgalled2009,cass}. Proponents of such piecewise assembly techniques argue that in this way it is easier (than with trees) to discard
parts of the input that are not well-supported. In this article we focus specifically on clusters, although the classical tree problem and all the other piecewise construction techniques
do play a secondary role. This secondary role is linked to the fact, as observed in \cite{twotrees}, that under certain circumstances all these different models behave in a unified way. We
shall return to this point later.

Let us then say more about the cluster model. A cluster $C$ is a subset of the taxa and we say that a phylogenetic network \emph{represents} the cluster in the softwired sense if \emph{some} tree embedded in the network contains a clade equal to that cluster \cite{husonetalgalled2009}. In other words: \emph{some} tree $T$ embedded in the network has an edge such that $C$ is exactly the set of all taxa reachable from the head of that edge by directed paths. The general problem is, given a set of clusters, to construct an optimal phylogenetic network that represents all the input clusters. The set of input clusters can be constructed in an ad-hoc fashion, but often the set of clusters is generated by extracting the set of clusters induced by a set of rooted phylogenetic trees and then possibly excluding weakly supported clusters. This is the technique applied in the program \textsc{Dendroscope} \cite{Dendroscope3}. A disadvantage of this technique is that some of the topology of the original trees can be lost \cite{twotrees}, but on the plus side it permits a focus on only well-supported clades, which is a major concern of practicing phylogeneticists.

In \cite{husonetalgalled2009} it was shown, given a set of clusters, how to construct a \emph{galled network} with a small number of reticulations that represents the clusters. However, given that galled networks are a restricted subclass of phylogenetic networks, it was unclear how far that algorithm actually minimizes the number of reticulations (or the level)
when ranging over the entire space of phylogenetic networks, see for example Figure \ref{fig:example}. This is the context in which the \textsc{Cass} algorithm was developed \cite{cass}. The \textsc{Cass} algorithm, in some sense a natural follow-up to the algorithm of \cite{husonetalgalled2009}, was formally shown to produce solutions of minimum \emph{level} whenever the minimum level is at most two. However, the optimality of the \textsc{Cass} algorithm for ``higher-level'' inputs, and the performance of the algorithm in terms of minimizing number of reticulations, remained unclear. On the practical side the good news was that
\textsc{Cass} produced solutions with fewer reticulations and lower level than the algorithm from \cite{husonetalgalled2009}. Intriguingly it was also observed in \cite{cass} that, for several sets of input clusters induced by two binary trees, the networks produced by \textsc{Cass} had an identical number of reticulations to networks generated by algorithms that aim to display the trees themselves. This observation was the inspiration behind \cite{twotrees} in which it was proven that, in the case of two binary trees, the choice of construction technique (tree, triplets, characters, clusters) does not affect the number of reticulations required. However, this unification was shown to break down for data obtained from three or more binary trees.

After \cite{twotrees} several important questions about clusters remained open. Does \textsc{Cass} always minimize level? If not, can we find a different algorithm that efficiently minimizes level? Under which circumstances does \textsc{Cass} also minimize the number of reticulations? In how far do the unification results of \cite{twotrees} hold for non-binary trees?

The results in this article settle many of these open questions.
Firstly, we show that in the case of clusters obtained from two not necessarily binary trees, 
a divide and conquer algorithm using  \textsc{Cass}  as   subroutine, called here \cassinden and implemented in \textsc{Dendroscope} \cite{Dendroscope3}, 
does 
minimize both the number of reticulations, and the level. 
Spin-off results from this include non-binary versions of several unification results from \cite{twotrees}, culminating in the observation that, in the case of clusters obtained from two  trees,  {\cassinden} also
computes the minimum number of reticulations required to display the trees themselves (in the sense of \cite{linzsemple2009}), rather
than just the clusters from the trees. We also obtain deeper insights into why the two-tree case is so special and not representative for the problem on three or more trees. In particular, the two-tree case seems to be best understood as the only point at which a very natural lower bound is guaranteed to be tight.

Secondly we show that \textsc{Cass} does not, unfortunately, always minimize level when the input data requires solutions of level 3 or higher. We give an explicit 
counterexample and explain what goes wrong with the \textsc{Cass} algorithm in this case.

To offset this negative result we describe a polynomial-time algorithm that shows, for every fixed natural number $k$, how to determine whether a set of clusters can be represented by a 
network with level $k$. This algorithm, which is very different to \textsc{Cass}, is purely theoretical but does give important insights into the underlying structure of the cluster model. Also on the positive side we show that, for sets of clusters induced by arbitrarily large sets of \emph{binary} trees, a simple polynomial-time algorithm can construct \emph{all} binary phylogenetic networks with $r$ reticulations that represent
the clusters, for every fixed $r \geq 0$.  To demonstrate an important design principle first observed in \cite{twotrees} we give a practical implementation of this algorithm, \textsc{Clustistic}, which elegantly ``bootstraps'' an existing software package for merging rooted triplets into a phylogenetic network.

To summarize, the results in this article help advance our understanding of the cluster model considerably. Nevertheless, many questions remain, and in the final section of this
article we discuss a number of them. Perhaps the biggest question, which is the motivation for the title of this article, concerns the fact that the cluster model so far has not enjoyed the same kind of steady algorithmic improvements witnessed in the tree literature. Why does it seem harder to work with the clusters inside the trees than the trees themselves? To what, exactly, can the elusiveness  of clusters be attributed?

%
%

\section{Preliminaries}\label{sec:prelim}

Consider a set~$\mathcal{X}$ of taxa. A \emph{rooted phylogenetic network} (on~$\mathcal{X}$), henceforth \emph{network}, is a directed acyclic graph with a single node with indegree zero 
(the \emph{root}), no nodes with both indegree and outdegree equal to 1, and leaves bijectively labeled by~$\mathcal{X}$. In this article we identify the leaves with $\cX$. The indegree of a 
node~$v$ is denoted~$\delta^-(v)$ and~$v$ is called a \emph{reticulation} if~$\delta^-(v)\geq 2$. An edge~$(u,v)$ is called a 
\emph{reticulation edge} if its target node
$v$ is a reticulation and is called a \emph{tree edge} otherwise.
 When counting reticulations in a network, we count reticulations with more than two incoming 
edges more than once because, biologically, these reticulations represent several reticulate evolutionary events. Therefore, we formally define the \emph{reticulation number} of a 
network~$N=(V,E)$ as \[r(N) = \sum_{\substack{v\in V: \delta^-(v)>0}}(\delta^-(v)-1) = |E| - |V| + 1 \enspace.\]

A \emph{rooted phylogenetic tree} on $\mathcal{X}$, henceforth \emph{tree}, is simply a network that has reticulation number zero. 
We say that a network $N$ on $\mathcal{X}$ \emph{displays}
a tree $T$ if~$T$ can be obtained from $N$ by performing a series of node  and edge deletions and eventually by suppressing nodes with both indegree and outdegree equal to 1. 
We assume without loss of generality that each reticulation has outdegree at least one. Consequently, each leaf has indegree one. We say that a network is \emph{binary} if
every reticulation node has indegree 2 and outdegree 1 and every tree node that is not a leaf has outdegree 2.

 Proper subsets of~$\mathcal{X}$ are called \emph{clusters}, and a cluster $C$ is a \emph{singleton} if $|C|=1$. We say that an edge $(u,v)$ of 
a tree \emph{represents} a cluster $C \subset \cX$ if $C$ is the set of leaf descendants of $v$. A tree $T$ represents a cluster $C$ if it contains an 
edge that represents $C$. It is well-known that the set of clusters represented by a tree is a 
laminar set, and uniquely defines that tree. We say that a network $N$ represents a cluster $C \subset \mathcal{X}$ ``in the hardwired sense'' if there exists a tree edge  $(u,v)$ of $N$ such that $C$ is the set of leaf descendants of $v$. Alternatively, we say that  $N$ represents $C$ ``in the softwired sense'' if $N$ displays some tree $T$ on $\cX$ such that $T$ 
represents $C$. In this article we only consider the softwired notion of cluster representation and henceforth assume this implicitly.  A network represents a set of clusters $\mathcal{C}$ if it 
represents every cluster in $\mathcal{C}$ (and possibly more). 
The set of softwired  clusters of a network can be obtained as follows. 
For a network $N$, we say that a \emph{switching} of $N$ is obtained by, for each reticulation node, deleting all but one of its
incoming edges. Given a network $N$ and a switching $T_N$ of $N$, we say that an edge $(u,v)$ of $N$ represents a cluster $C$ w.r.t. $T_N$ if $(u,v)$ is an edge of $T_N$ and $C$ is the set of leaf 
descendants of $v$ in $T_N$. The set of softwired  clusters of $N$ is the set of clusters represented by all edges of $N$ w.r.t. $T_N$, where $T_N$ ranges over all possible switchings \cite{HusonRuppScornavacca10}. It is
also natural to define that an edge $(u,v)$ of $N$ represents a cluster $C$ if there exists some switching $T_N$ of $N$ such that $(u,v)$ represents $C$ w.r.t $T_N$.
Note that, in general, an edge of $N$ might represent multiple clusters, and a cluster might be represented by multiple edges of $N$.

Given a set of clusters $\cC$ on $\cX$, throughout the article  we assume that, for any taxon $x \in \cX$, $\cC$ contains at least one cluster $C$
containing $x$. 
For a set $\mathcal{C}$ of clusters on $\mathcal{X}$ we define $r(\mathcal{C})$ as $\min \{ r(N) | N \text{ represents } \mathcal{C} \}$, we sometimes refer to this as the \emph{reticulation number} 
of $\mathcal{C}$. 
The related concept of {\em level} 
 requires some more background. A directed acyclic graph is \emph{connected} (also called ``weakly connected'') if there is an undirected path 
(ignoring edge orientations) between each pair of nodes. A node (edge) of a directed graph is called a \emph{cut-node} (\emph{cut-edge}) if its removal disconnects the graph. A directed graph 
is \emph{biconnected} if it contains no cut-nodes. A biconnected subgraph~$B$ of a directed graph~$G$ is said to be a \emph{biconnected component} if there is no biconnected subgraph~$B' \neq 
B$ of $G$ that contains~$B$. A phylogenetic network is said to be a 
\emph{${\mbox{level-}\leq k}$ network}  
if each biconnected component has reticulation number less than or equal to $k$.\footnote{Note that to 
determine the reticulation number of a biconnected component, the indegree of each node is computed using only edges belonging to  this biconnected component.}
A ${\mbox{level-}\leq k}$ 
network is called a \emph{simple level-$\leq k$ network} if the
removal of a cut-node or a cut-edge creates two or more connected components of which  
at most one is non-trivial (i.e. contains at least one edge). 
A (simple) level-$\leq k$ network $N$ is called a (simple) \emph{ level-$k$ network} if  the maximum reticulation number among the biconnected components of $N$  is precisely~$k$. 
For example, the network in Figure \ref{fig:bcc} is a  level-2 network while the one in  Figure 
\ref{fig:gallednetwork} is a simple level-4 network. Note that a tree is a level-0 network. For a set $\mathcal{C}$ of clusters on $\mathcal{X}$ we define $\ell(\mathcal{C})$, the
\emph{level} of $\mathcal{C}$, as the smallest $k \geq 0$ such that there exists a level-$k$ network that represents $\mathcal{C}$.  It is immediate that for every cluster set
$\mathcal{C}$ $r(\mathcal{C}) \geq \ell(\mathcal{C})$, because a level-$k$ network always contains at least one biconnected component containing $k$ reticulations. 


We say that two clusters~$C_1,C_2\subset\mathcal{X}$ are \emph{compatible} 
if either~$C_1\cap C_2=\emptyset$ or~$C_1\subseteq C_2$ or~$C_2\subseteq C_1$. Consider a set of clusters~$\mathcal{C}$. 
The \emph{incompatibility graph}~$IG(\mathcal{C})$ 
of~$\mathcal{C}$ is the undirected graph~$(V,E)$ that has node set~$V=\mathcal{C}$
and edge set
$E=\{\{C_1,C_2\}\enspace |\enspace C_1$ \mbox{ and } $C_2$\mbox{ are incompatible\newline} \mbox{ clusters in } $\mathcal{C}$\}.
We say that a set of taxa~$\cX' \subseteq\mathcal{X}$ is 
\emph{separated} (by~$\mathcal{C}$) if there exists a cluster~$C\in\mathcal{C}$ that is incompatible 
with~$\cX'$, and \emph{unseparated} otherwise.

We say that a set of clusters $\mathcal{C}$ on $\mathcal{X}$ is  {\em separating} if it separates all sets of taxa $\cX'$ such that $\cX' \subset \cX$ and $|\cX'| \geq 2$.
We say that a set of clusters $\mathcal{C}$ on $\mathcal{X}$ is  {\em tangled} \cite{HusonRuppScornavacca10} if:
\begin{enumerate}
\item  $IG(\mathcal{C})$ is connected and has more than one node;
\item every pair of taxa $x$ and $y$ in $\mathcal{X}$ is separated by $\mathcal{C}$.
\end{enumerate}

Remember that here we assume that any taxon of $\cX$ is contained in at least one cluster $C\in \cC$.
Then, it can easily be verified that, given a tangled set of clusters $\mathcal{C}$ on $\cX$,  $\mathcal{C}$ is separating. 

The incompatibility graph and the concept of tangled clusters
are important because they highlight an important difference between (the computation of) $r(\mathcal{C})$ and $\ell(\mathcal{C})$.
In \cite{cass} the authors show that, if $\ell(\mathcal{C})=k$, then a level-$k$ network that represents $\mathcal{C}$ can be constructed  by  combining in polynomial time simple level-$\leq k$ 
networks constructed independently for each connected component
of $IG(\mathcal{C})$. The actual procedure is slightly more involved but it
shows in any case that a polynomial-time algorithm for constructing simple level-$\leq k$ networks can easily be extended to a polynomial-time algorithm for constructing level-$\leq k$
networks. We will make use of this fact in Section \ref{sec:theory}.

Unfortunately, as has been observed by several authors, the same procedure does \emph{not} necessarily lead to networks that
have reticulation number $r(\mathcal{C})$. In other words, computation of $r(\mathcal{C})$ requires something more complicated than independently optimizing each connected component
of $IG(\mathcal{C})$. An important special case, however, is when 
$\cC$ is separating;
in this case any network $N$ that represents $\mathcal{C}$ is simple (or can be trivially modified to become simple) and $r(\mathcal{C}) = \ell(\mathcal{C})$. We will formalize this in due course.
%

To conclude the preliminaries we note that, throughout the article,
we often write that an algorithm is ``polynomial time'' without formally specifying what the input size is. Unless
otherwise specified the input is a set of clusters $\cC$ on taxa set $\cX$. It is sufficient to take
$|\cC| + |\cX|$ as a lower bound on the size of the input. In some cases (such as
Lemma \ref{lem:core} in Section \ref{sec:theory}) $|\cC|$ is at most a constant factor larger than $|\cX|$ and then it is sufficient to prove a running time polynomial in $|\cX|$. In other cases a running time of the form $O( |\cC|^{a}|\cX|^{b} )$ is obtained,
for constants $a$ and $b$, and this is clearly polynomial in $|\cC| + |\cX|$ because $(|\cC| + |\cX|)^2 \geq |\cC||\cX|$.

\subsection{Structure of the article}
\label{subsec:struc}

To facilitate the mathematical exposition we build the results of this article up in a specific order, which differs from the order presented in the introduction.
We begin with Section \ref{sec:theory}: \emph{A theoretical polynomial-time algorithm for constructing level-$k$ networks}, where we prove that, for every fixed $k \geq 0$,  the problem of
determining whether a level-$k$ network that represents $\cC$ exists (and if so to construct such a
network) is solvable in polynomial time. 
 This section is, compared to the rest of the article, comparatively self-contained. In Section
\ref{sec:stsets}: \emph{From theory to practice: the importance of ST-sets} we describe several fundamental properties of the ST-set, a special structure that plays a central role throughout the rest of the article. In
Section \ref{sec:bitrees}: \emph{Clusters obtained from sets of \emph{binary} trees on $\cX$} we show how,
given a set $\cT$ of binary trees on $\cX$, and for each fixed $r \geq 0$, it is possible to construct in polynomial time
all binary phylogenetic networks with reticulation number $r$ that represent all the clusters in the input trees. We also describe \textsc{Clustistic},
which is our implementation of this algorithm built on top of already-existing software. In Section
\ref{sec:wit}: \emph{Witnesses and a natural lower bound} we further develop the theory surrounding the
ST-set, explicitly relating it to the computation of reticulation number. This is  used extensively in Section \ref{sec:casseverything}: \emph{The optimality and non-optimality of \textsc{Cass}}
where we give both positive and negative results for the {\cass} algorithm, and in the process develop a number of powerful generalizations
of unification results from \cite{twotrees}.  

\section{A theoretical polynomial-time algorithm for constructing level-$k$ networks}
\label{sec:theory} In this section we prove that, for every fixed $k \geq 0$,  the problem of
determining whether a level-$k$ network exists that represents $\cC$, and if so to construct such a
network, can be solved in polynomial time. 

We first require some auxiliary lemmas and definitions. The proofs are rather technical so we defer them to the appendix. For a node $v$ let $\cX(v) \subseteq \cX$ be the set of all taxa reachable from $v$ by directed paths. For an edge $e=(u,v)$ we define $\cX(e)$ to be equal to $\cX(v)$.

\begin{observation} \label{obs:nocutfreedom} Let $\cC$ be a 
separating 
set of clusters on $\cX$. Let
$N$ be any network that represents $\cC$. Then each node of $N$ has at most one leaf child and for
each cut-edge $(u,v)$ in $N$, $|\cX(v)|=1$ or $\cX(v)=\cX$. \end{observation}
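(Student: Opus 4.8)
The plan is to derive both assertions from a single principle: if a region of $N$ forces a proper subset $\cX'$ of $\cX$ with $|\cX'|\geq 2$ to be ``grouped together'' --- in the sense that every softwired cluster of $N$ is compatible with $\cX'$ --- then this contradicts $\cC$ being separating, because $\cC$ must then contain a cluster incompatible with $\cX'$ and $N$ represents that cluster. Throughout I would work with the switching description of softwired clusters (namely $C$ is represented by $N$ iff $C$ is the set of leaf descendants of the head $b$ of some edge $(a,b)$ of some switching $T_N$ of $N$, taken in $T_N$), and I would first record the elementary fact, provable by a one-line induction along a topological order of the nodes, that every cut-edge of $N$ is a tree edge and hence lies in every switching of $N$. (The case $|\cX|\leq 2$ is degenerate, so I assume $|\cX|\geq 3$.)

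For the claim about leaf children, I would suppose some node $w$ has two distinct leaf children $x$ and $y$. The edges $(w,x)$ and $(w,y)$ are tree edges, hence present in every switching $T_N$; so whenever $x$ is a leaf descendant of the head $b$ of some edge of $T_N$, the directed path from $b$ to $x$ in $T_N$ must run through $w$, which makes $b$ an ancestor of $w$ (or $b=w$) and therefore forces $y$ into the leaf-descendant set of $b$ as well, and symmetrically with $x$ and $y$ interchanged. Thus no softwired cluster of $N$ contains exactly one of $x$ and $y$. But $\{x,y\}$ is a proper subset of $\cX$ of size two, so separation supplies $C\in\cC$ incompatible with $\{x,y\}$; unwinding the definition of incompatibility, such a $C$ contains exactly one of $x$ and $y$, and since $N$ represents $C$ this is a contradiction.

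For the cut-edge claim, given a cut-edge $(u,v)$ I would partition the nodes of $N$ into the part $A$ lying in the same component of $N-(u,v)$ as the root (which also contains $u$) and the part $B$ lying in the same component as $v$. Because $(u,v)$ is a cut-edge it is the only edge of $N$ between $A$ and $B$, and since $N$ is acyclic it is oriented from $A$ to $B$; hence every directed path from the root into $B$ passes through $v$, and a short argument then shows that the leaves lying in $B$ are exactly the taxa of $\cX(v)$. Assuming for contradiction that $1<|\cX(v)|<|\cX|$, separation gives $C\in\cC$ incompatible with $\cX(v)$, and I would fix taxa $p\in C\cap\cX(v)$, $q\in C\setminus\cX(v)$ and $p'\in\cX(v)\setminus C$. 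Taking a switching $T_N$ and an edge of $T_N$ with head $b$ witnessing $C$: since $q\in C$ lies in $A$ and no edge of $N$ leaves $B$, the node $b$ must lie in $A$; since $p\in C$ lies in $B$, the directed path from $b$ to $p$ in $T_N$ crosses from $A$ to $B$ and hence uses $(u,v)$, so $v$ is a descendant of $b$ in $T_N$. Finally --- and this is the crux --- because $(u,v)$ is a tree edge of $T_N$ and the only edge between $A$ and $B$, every leaf of $B$ is still reachable from the root in $T_N$ only through $v$, so the leaf descendants of $v$ in $T_N$ comprise all of $\cX(v)$; hence $\cX(v)\subseteq C$, contradicting $p'\in\cX(v)\setminus C$.

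The steps I would treat as routine bookkeeping are the topological-order induction that cut-edges are tree edges, the identification of the leaves of $B$ with $\cX(v)$ (one inclusion because a directed path starting at $v$ cannot traverse the in-edge $(u,v)$ without closing a cycle, the other because every directed path from the root into $B$ must use $(u,v)$), and the purely set-theoretic extraction of $p$, $q$ and $p'$ from incompatibility. The one point requiring genuine care --- the main, if mild, obstacle --- is the last step of the cut-edge argument: ruling out that a switching $T_N$ ``shrinks'' the leaf set reachable below $v$. This is exactly where one uses that the cut-edge is a tree edge and is the unique $A$--$B$ edge, so that all of $\cX(v)$ is forced through $v$ in every switching; once that is secured, the contradiction is immediate.
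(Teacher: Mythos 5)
Your proof is correct and follows essentially the same route as the paper's (much terser) appendix proof: you show that every softwired cluster of $N$ is forced to be compatible with $\{x,y\}$ (for two leaf children of one node) and with $\cX(v)$ (for a cut-edge $(u,v)$, via the fact that the cut-edge lies in every switching and funnels all of $\cX(v)$ through $v$), which contradicts the separating property of $\cC$. The only nitpick is your intermediate claim that \emph{no} softwired cluster of $N$ contains exactly one of $x$ and $y$ --- the singletons $\{x\}$ and $\{y\}$ do, which is why the paper restricts to non-singleton clusters --- but since the cluster supplied by separation must contain a taxon outside $\{x,y\}$ it is non-singleton, its witnessing edge-head is not a leaf, and your path-through-$w$ argument then applies, so the contradiction stands.
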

\begin{proof}
Deferred to the appendix.
\end{proof}

\begin{lemma} \label{lem:simpleexists} Let $\cC$ be a 
separating  set of clusters on $\cX$. Let $N$ be
any network that represents $\cC$. Then there exists a simple network $N^{*}$ with at most one
leaf-child per node such that $\ell(N^{*}) \leq \ell(N)$. \end{lemma}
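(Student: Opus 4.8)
The condition ``at most one leaf child per node'' will come essentially for free: by Observation~\ref{obs:nocutfreedom} the given $N$ already has it, and all of the operations below only delete nodes and edges and suppress degree-$2$ nodes, so the property is preserved (one small point here is that these suppressions cannot accidentally create a node with two leaf children, which I would justify using the structural fact established in the next paragraph). So the real task is to turn $N$ into a \emph{simple} network that still represents $\cC$ and has level at most $\ell(N)$. I would do this by iterating a single ``cut'' until the network is simple: if $N$ is already simple we are done; otherwise Observation~\ref{obs:nocutfreedom} tells us that every cut-edge of $N$ goes either to a single leaf or to a node $v$ with $\cX(v)=\cX$ (a \emph{full} cut-edge), and one checks that non-simplicity is then witnessed by a full cut-edge $(u,v)$ with $u\ne\rho$ (the cut-\emph{node} case reducing to this by an analogous argument). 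The operation is: replace $N$ by $N_v$, the subnetwork on the nodes reachable from $v$, re-rooted by a fresh node joined to $v$, with degree-$2$ nodes suppressed. Since $\cX(v)=\cX$ no leaf is lost; since we only remove material no biconnected component grows, so the level cannot increase; and since the edge count strictly decreases, the iteration terminates in a network $N^{*}$ that can no longer be cut, which by Observation~\ref{obs:nocutfreedom} applied to $N^{*}$ is exactly the condition of being simple.

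The key structural fact, and the step I expect to be the main obstacle, is that the cut preserves representation of $\cC$, i.e.\ that $N_v$ (with its fresh root) still represents every cluster of $\cC$. I would first establish that, because $(u,v)$ is a cut-edge, \emph{every} directed path of $N$ from $\rho$ to a leaf must pass through $v$: otherwise such a path, avoiding the edge $(u,v)$, would keep that leaf connected to $\rho$ after deleting $(u,v)$, while $v$ would remain connected to $\rho$ through its own subnetwork (which, as $\cX(v)=\cX$, reaches every leaf), contradicting that $(u,v)$ is a cut-edge. A switching $T_N$ of $N$ is a tree rooted at $\rho$ with leaf set $\cX$ whose edges lie inside $N$, so every leaf is a descendant of $v$ in $T_N$; hence $\cX_{T_N}(v)=\cX$ for \emph{every} switching $T_N$, where $\cX_{T_N}(\cdot)$ denotes the analogue of $\cX(\cdot)$ computed inside $T_N$.

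Now take $C\in\cC$ and an edge $e=(w,w')$ that represents $C$ with respect to some switching $T_N$, so $C=\cX_{T_N}(w')$, which is non-empty since $C$ is a cluster. Picking a leaf $x\in C$ and concatenating the $T_N$-paths from $\rho$ to $w'$ and from $w'$ to $x$, this walk must meet $v$. If $v$ lies on the first segment then $v$ is an ancestor of $w'$, so $w'$ is reachable from $v$ and $e$ lies inside $N_v$; if instead $v$ lies on the second segment then $w'$ reaches $v$ in $T_N$, whence $\cX_{T_N}(w')\supseteq\cX_{T_N}(v)=\cX$, contradicting $C\subsetneq\cX$. So $w'$ lies in $N_v$; and since $\cX_{T_N}(w')$ depends only on the switching choices at and below $w'$, all internal to $N_v$ (whose node set is downward-closed), there is a switching of $N_v$ in which an incoming edge of $w'$ still represents $C$. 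Therefore $N_v$ represents all of $\cC$. What then remains are the more routine points: verifying that $N^{*}$ is a genuine phylogenetic network, and carrying out the reduction of the cut-node case together with the termination bookkeeping; these I expect to follow from Observation~\ref{obs:nocutfreedom} with some care, and I would relegate them to the appendix.
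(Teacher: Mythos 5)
Your core contraction step is sound: contracting at a full cut-edge $(u,v)$ (one with $\cX(v)=\cX$) to the subnetwork rooted at $v$ is exactly one of the operations the paper's proof uses, and your switching argument that every root-to-leaf path passes through $v$, hence every cluster-representing edge survives in $N_v$, is correct and in fact more detailed than the paper's one-line assertion. The gap is the sentence claiming that ``non-simplicity is then witnessed by a full cut-edge $(u,v)$ \dots\ (the cut-node case reducing to this by an analogous argument).'' That reduction is false, and it is precisely where the real work lies. Observation~\ref{obs:nocutfreedom} constrains only cut-\emph{edges}; a network can satisfy it, possess no full cut-edge whatsoever, and still fail to be simple because of a cut-\emph{node}. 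Concretely, take $\cX=\{1,2,3\}$ and the separating set $\cC=\{\{1,2\},\{2,3\},\{1\},\{2\},\{3\}\}$, and let $N$ have nodes $\rho,a,b,u,q_1,q_2,r$ plus the three leaves, with edges $(\rho,a)$, $(\rho,b)$, $(a,b)$, $(a,u)$, $(b,u)$, $(u,q_1)$, $(u,q_2)$, $(q_1,1)$, $(q_1,r)$, $(q_2,r)$, $(q_2,3)$, $(r,2)$. This is a legal network (no indegree-$1$/outdegree-$1$ nodes) representing $\cC$; its only cut-edges are the three edges into leaves, so your iteration halts immediately; yet removing the cut-node $u$ leaves the two non-trivial components $\{\rho,a,b\}$ and $\{q_1,q_2,r,1,2,3\}$, so $N$ is not simple.

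What is missing is a mechanism for \emph{manufacturing} a suitable cut-edge when only a bad cut-node is present. The paper does this with three further operations: splitting any node of indegree $\geq 2$ and outdegree $\geq 2$ (which repairs the example above, since splitting $u$ creates a cut-edge with $\cX$ below it, after which your contraction applies); refining a tree node of high outdegree so that a subset of its children is gathered under a new cut-edge; and replacing a sub-network whose leaf set is a single taxon $x$ by the leaf $x$ itself. The last two handle the other family of configurations your procedure cannot touch, namely non-trivial ``blobs'' hanging off a cut-node whose leaf set is a single taxon (the separating hypothesis forces the leaf set of any hanging component to be a singleton or all of $\cX$, but it does not forbid such blobs). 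The proof that once none of these operations applies no violating cut-node survives is a genuine case analysis on the children of the would-be cut-node, not an analogue of the cut-edge argument, so it cannot be relegated to ``some care with Observation~\ref{obs:nocutfreedom}.'' The remaining points of your plan (preservation of representation, level monotonicity, termination, and the leaf-child condition, which indeed follows by applying Observation~\ref{obs:nocutfreedom} directly to the final network) are fine.
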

\begin{proof}
Deferred to the appendix.
\end{proof}

Observation \ref{obs:nocutfreedom} and Lemma \ref{lem:simpleexists} formalize the idea that any network that represents a 
separating  set of clusters
is simple or can easily be made simple by deleting certain redundant parts of it. The following lemma shows that, in terms of minimizing reticulation number or level, we can assume without loss of generality that networks are binary.

\begin{lemma} \label{lem:transfBinary} Let $N$ be a phylogenetic network on $\cX$. Then we can
transform N into a binary phylogenetic network $N'$ such that $N'$ has the same reticulation number
and level as $N$ and all clusters represented by $N$ are also represented by $N'$. \end{lemma}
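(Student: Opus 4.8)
The plan is to transform $N$ into a binary network $N'$ by locally replacing high-degree nodes with small gadgets that leave the reticulation number, the level, and the set of represented clusters unchanged. There are two kinds of offending nodes: tree nodes $v$ with outdegree $d \geq 3$, and reticulation nodes $v$ with indegree $d \geq 3$ (recall we have already assumed, without loss of generality, that every reticulation has outdegree at least $1$, and we may further assume it has outdegree exactly $1$, since if a reticulation has outdegree $\geq 2$ we can insert a new tree node below it absorbing all its out-edges, which changes nothing relevant). First I would handle the tree nodes: if $v$ has children $w_1,\dots,w_d$ with $d \geq 3$, replace the $d$ out-edges of $v$ by a small binary ``caterpillar'' of $d-2$ new tree nodes $v = v_1, v_2, \dots, v_{d-1}$, where $v_i$ has children $v_{i+1}$ and $w_i$ for $i < d-1$, and $v_{d-1}$ has children $w_{d-1}$ and $w_d$. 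These new nodes and edges are all tree nodes/tree edges, so no reticulation is created or destroyed; the new edges all lie in the same biconnected component structure in the sense that they are cut-edges (each new node $v_i$ has the cut-edge $(v_i, w_i)$ or the two leaf-ward edges), hence they do not increase the reticulation number of any biconnected component, so the level is preserved.

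Next I would handle the reticulation nodes. If $v$ is a reticulation with incoming edges $(u_1,v),\dots,(u_d,v)$, $d \geq 3$, and (after the preprocessing above) a unique child, then replace $v$ by a binary caterpillar of reticulations: introduce new reticulation nodes $v = v_1, \dots, v_{d-1}$ where $v_1$ has incoming edges from $u_1$ and $u_2$ and its single out-edge goes to $v_2$, each $v_i$ ($2 \leq i \leq d-2$) has incoming edges from $u_{i+1}$ and from $v_{i-1}$ and out-edge to $v_{i+1}$, and $v_{d-1}$ has incoming edges from $u_d$ and from $v_{d-2}$ and out-edge to the original child of $v$. This replaces the single node $v$ of indegree $d$ (contributing $d-1$ to the reticulation number) by $d-1$ reticulations each of indegree $2$ (contributing $1$ each), so the total reticulation number $r(N')=r(N)$ is unchanged; and because these $d-1$ new reticulations sit on a path that was previously collapsed into the single node $v$, they all belong to the same biconnected component that $v$ belonged to, and that component's reticulation number is also unchanged, so the level is preserved. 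Performing both transformations exhaustively yields a binary $N'$.

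The remaining, and in my view main, obstacle is verifying that the set of softwired clusters is preserved — more precisely that $N'$ represents every cluster $N$ represents. This requires showing that the switchings of $N'$ induce (at least) the same clusters as the switchings of $N$. For the tree-node gadget this is immediate, since switchings don't touch tree edges and for any switching the leaf-descendant sets $\cX(w_i)$ are unchanged while the only ``new'' clusters are unions $\cX(w_i) \cup \dots \cup \cX(w_d)$ at the new internal nodes $v_i$, so no old cluster is lost. For the reticulation-node gadget the key point is the correspondence between switchings: a switching of $N$ picks one of the $d$ edges $(u_j, v)$; I claim that for each $j$ there is a switching of the gadget in $N'$ that routes exactly the edge from $u_j$ all the way down to the original child of $v$ (namely, in the caterpillar, at $v_i$ select the incoming edge from $u_{i+1}$ if that is the chosen one, otherwise select the edge coming from the previous reticulation in the chain), so every tree displayed by $N$ is still displayed by $N'$, and hence every cluster represented by $N$ is represented by $N'$. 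I would also note that $N'$ has no node with indegree and outdegree both $1$ (the gadgets are designed to avoid this) and that leaf labels are untouched, so $N'$ is a genuine phylogenetic network on $\cX$. The write-up should be careful to state these gadget constructions precisely and to invoke the definition of softwired representation via switchings to close the cluster-preservation argument; the running-time remark (the transformation is clearly polynomial, adding $O(|\cX|)$ nodes overall) can be mentioned in passing.
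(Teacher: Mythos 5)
Your reticulation gadgets are fine (splitting off out-edges of a reticulation onto a new tree child, and replacing an indegree-$d$ reticulation by a chain of $d-1$ indegree-$2$ reticulations preserves $r$, and since a reticulation and all of its parents always lie in one biconnected component, the chain stays inside that component, so the level is also preserved — this matches the paper's steps (a) and (b)). The genuine gap is in your tree-node step. You replace the out-edges of a tree node $v$ of outdegree $d\geq 3$ by an \emph{arbitrary} binary caterpillar and assert that the new edges are cut-edges, so no biconnected component's reticulation number can grow. That assertion is false in general: the out-edges of $v$ may belong to \emph{different} non-trivial biconnected components, and a caterpillar that interleaves children from different components forces the new internal edges onto undirected cycles from both components, merging them. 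Concretely, let $v$ have children $x_1,x_2,x_3$, where $x_1,x_2$ are the two parents of a reticulation $r_1$ (component $K_1$, one reticulation), and $x_3$ is a parent of a reticulation $r_2$ whose other parent is reached via $v$'s parent $p$ (component $K_2$, one reticulation, containing the edge $(p,v)$). The network has level $1$, but your caterpillar with $w_1=x_1$ (so the new node $v_2$ has children $x_2,x_3$) puts the edge $(v,v_2)$ on a cycle of $K_1$ (through $x_1,r_1,x_2$) \emph{and} on a cycle of $K_2$ (through $p,\dots,r_2,x_3$), so $K_1$ and $K_2$ merge into one biconnected component with two reticulations and the level increases to $2$, even though $r(N)$ is unchanged.

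This is precisely the subtlety the paper's proof is built around: before doing any chain/caterpillar replacement, it repeatedly applies a $Q'$-refinement where $Q'$ is the set of children of $v$ lying in a single biconnected component $K$ that contains at least two but not all of the edges leaving $v$ (its operation (d)), i.e. it first groups the children by biconnected component; only after no such refinement is possible does it replace remaining high-outdegree tree nodes by binary chains (operation (e)), and then no merge of biconnected components can occur. To repair your proof you must either adopt this grouping step or otherwise choose the caterpillar ordering so that, for every non-trivial biconnected component $K$ through $v$, the children of $v$ belonging to $K$ appear consecutively at the bottom of (a subtree of) the caterpillar. The rest of your argument — reticulation-number bookkeeping and the switching correspondence showing every displayed tree, hence every represented cluster, survives the gadgets — is sound and agrees with what the paper needs.
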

\begin{proof}
Deferred to the appendix.
\end{proof}

\noindent
Note that, given a simple binary network $N$,  each node of $N$ has at most one leaf child. 
Armed with these technical results we are ready to prove the main result of this section.

\begin{lemma}
\label{lem:core}
Let $\cC$ be 
separating  set of clusters
on $\cX$. Then, for every fixed
$k \geq 0$, it is possible to determine in polynomial time whether a level-$k$ network
exists that represents $\cC$, and if so to construct such a network.
\end{lemma}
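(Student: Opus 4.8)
The plan is to bound the search space so that we can enumerate all ``relevant'' simple level-$\leq k$ networks in polynomial time, test each of them for cluster representation, and then lift a solution to the whole instance using the connected-component decomposition of $IG(\mathcal{C})$ referenced earlier. By Lemmas~\ref{lem:simpleexists} and~\ref{lem:transfBinary}, if a level-$k$ network representing $\mathcal{C}$ exists then there is one that is \emph{simple} and \emph{binary}, and in such a network every node has at most one leaf child (the note following Lemma~\ref{lem:transfBinary}). So it suffices to decide whether some simple binary level-$\leq k$ network represents $\mathcal{C}$ (remembering that the level-$k$ requirement, as opposed to level-$\leq k$, only matters once we reassemble, where we can check whether any strictly-smaller $k'$ already works). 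First I would observe that a simple binary level-$\leq k$ network has a bounded ``reticulate core'': contracting all leaf edges and all cut-edges, the remaining structure is a simple biconnected binary graph with at most $k$ reticulations, hence with a number of vertices and edges bounded by a function of $k$ only. The leaves/subtrees of $\mathcal{X}$ hang off the edges of this core.

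Next I would make the crucial reduction in the number of taxa that must be attached to the core individually. By Observation~\ref{obs:nocutfreedom}, for a separating $\mathcal{C}$ every cut-edge $(u,v)$ satisfies $|\mathcal{X}(v)|=1$ or $\mathcal{X}(v)=\mathcal{X}$; since the network is simple this means that below each edge of the core we attach either nothing, a single leaf, or — in the degenerate root case — everything. In other words the network is obtained from the bounded core by \emph{subdividing} its edges and pendant-attaching leaves of $\mathcal{X}$, and because each node has at most one leaf child, at most a bounded number ($O(k)$, essentially the number of core edges) of these attachment positions carry a taxon at all in any ``essential'' sense — the remaining taxa must be grouped. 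Here I would invoke the ST-set machinery (or, more elementarily, a direct argument) to argue that two taxa that are never separated by any cluster of $\mathcal{C}$ can be placed at the same position of the core, so we may as well quotient $\mathcal{X}$ by the ``unseparated'' equivalence; since $\mathcal{C}$ is separating, every class has size at most one except possibly one class — i.e. essentially all taxa are already distinguished, which is the point of the separating hypothesis. The upshot is that the number of genuinely distinct ways to place $\mathcal{X}$ onto a fixed core of bounded size is polynomial in $|\mathcal{X}|$ (choosing which $O(k)$ edges of the core receive which taxa, and ordering the taxa along a subdivided edge — but on a tree edge below a reticulation-free region all but one taxon would create a cut-edge with $1<|\mathcal{X}(v)|<|\mathcal{X}|$, contradicting Observation~\ref{obs:nocutfreedom}, so in fact the placement is genuinely sparse).

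Putting it together: for fixed $k$, enumerate the $O(1)$ many simple biconnected binary core graphs with at most $k$ reticulations; for each, enumerate the polynomially-many placements of the (quotiented) taxa onto its edges; for each resulting candidate network, compute its set of softwired clusters and check whether $\mathcal{C}$ is contained in it. Computing the softwired clusters of a fixed network is polynomial: it has $O(1)$ reticulations, so there are $O(1)$ switchings (here we use that the network is binary with at most $k$ reticulations, giving at most $2^{k}$ switchings per biconnected component), and each switching is a tree whose clusters are read off in linear time. If any candidate represents $\mathcal{C}$, output it; otherwise report that no level-$k$ network exists. Finally, to get the exact level rather than level-$\leq k$, run the procedure for $k'=0,1,\dots,k$ and return the first success (or note that the reassembly result from \cite{cass} and the simplicity of separating instances mean that for separating $\mathcal{C}$ the network is already simple, so ``level-$\leq k$'' and ``level-$k$ or less'' coincide with a single biconnected component).

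The main obstacle I anticipate is not the enumeration of cores — that is immediate from $k$ being fixed — but controlling the number of \emph{distinct} taxon placements onto a core and proving rigorously that it is polynomial. The danger is that a long subdivided edge could, a priori, host many taxa in sequence; the separating hypothesis (via Observation~\ref{obs:nocutfreedom}) is exactly what forbids this, but marshalling that observation carefully — especially at the root edge, which is the one allowed exception, and in the reassembly step where $\mathcal{C}$ restricted to a component of $IG(\mathcal{C})$ must still be massaged into a separating instance on a smaller taxon set — is where the real work lies. A secondary subtlety is ensuring that when we ``quotient'' unseparated taxa and later ``blow them back up'', the represented cluster set is preserved; this is where the ST-set lemmas of Section~\ref{sec:stsets} (or a self-contained version thereof) earn their keep.
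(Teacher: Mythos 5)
There is a genuine gap at exactly the point you flag as the main obstacle, and your proposed resolution of it is incorrect. You claim that Observation~\ref{obs:nocutfreedom} forces the taxon placement on the core to be ``genuinely sparse'' --- that only $O(k)$ attachment positions can carry a taxon, because a long subdivided side would ``create a cut-edge with $1<|\cX(v)|<|\cX|$.'' It does not. When a side of the generator is subdivided many times with one pendant leaf hanging off each subdivision node, the side itself lies inside the biconnected component and is not a cut-edge; the only cut-edges created are the pendant edges $(u,x)$ with $|\cX(x)|=1$, which Observation~\ref{obs:nocutfreedom} explicitly permits. A single side can therefore carry $\Theta(n)$ leaves even for a separating cluster set (the worked example in Figure~\ref{fig:exLemma1} has three leaves on one side, and nothing stops there being linearly many). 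Consequently the number of ways to distribute and linearly order $n$ taxa over the constantly many sides of the core is exponential in $n$, and your enumeration step is not polynomial. The quotient by unseparatedness does not rescue this: as you yourself note, for a separating $\cC$ every pair of taxa is separated, so the quotient is trivial and reduces nothing.

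The paper closes this gap with an idea your proposal lacks: it guesses only a \emph{constant} amount of information per side --- whether the side has $0$, $1$, $2$, or more leaves, and in the last case the topmost leaf $s^{+}$ and bottommost leaf $s^{-}$ --- which gives polynomially many guesses, and then proves that the \emph{remaining} leaves on each side, and their order, are uniquely determined by $\cC$ and computable in polynomial time. Concretely, it processes sides in a ``lowest side first'' order, uses the relation $x\rightarrow y$ (every non-singleton cluster containing $x$ contains $y$) together with the existence of a ``split cluster'' $C$ with $\{x,s^{+},s^{-}\}\cap C=\{x,s^{-}\}$ to test membership of a leaf on side $s$, and uses $\rightarrow$ again to order the leaves along the side; the separating hypothesis is what guarantees these tests are decisive. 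So the combinatorial explosion you correctly identified is real, and it is defused by a guess-the-endpoints-then-deduce argument, not by sparsity. (A minor additional point: the reassembly over components of $IG(\cC)$ belongs to Theorem~\ref{thm:core}, not to this lemma, which already assumes $\cC$ is separating.)
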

\begin{proof}
From Lemmas \ref{lem:simpleexists} and \ref{lem:transfBinary} it is sufficient to focus on simple binary networks. We assume then that, for fixed $k$, there exists a binary simple level-$k$ network
$N$ that represents $\cC$. Let $|\cX|=n$. Then $\cC$ will contain at most $2^{k+1}(n-1)$ clusters, 
because there are at most $2^{k}$ trees displayed by a 
simple 
level-$k$ network, and each tree represents at
most $2(n-1)$ clusters. Thus, for  fixed $k$, the size of the input is 
polynomial in $n$. It follows from these observations that
 whether a set of clusters is
represented by a 
given simple level-$k$ network can be 
checked in polynomial time. 

\begin{figure}[t]
  \centering
  \includegraphics[scale=.15]{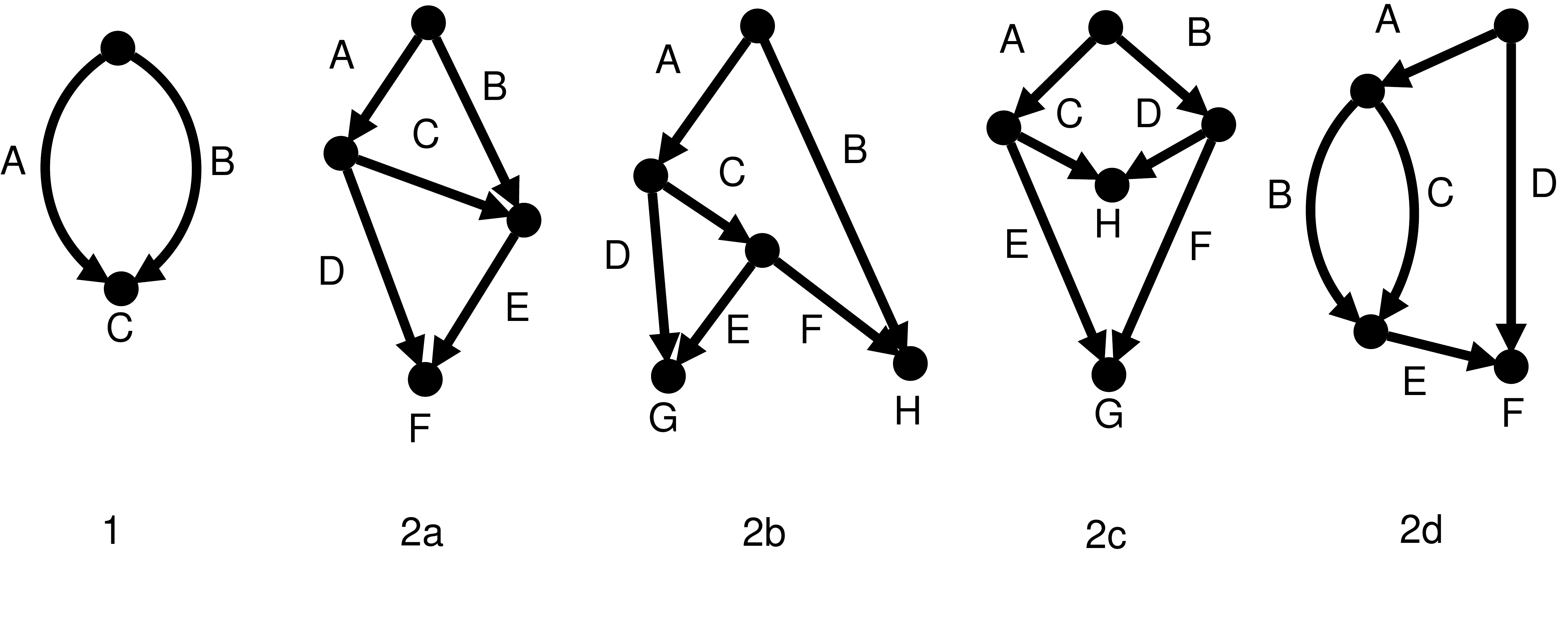}
  \caption{The single level-1 generator and the four level-2 generators. Here the sides have been labelled with capital letters.}
  \label{fig:gen}
\end{figure}

\begin{figure*}[t]
  \centering
  \includegraphics[scale=.2]{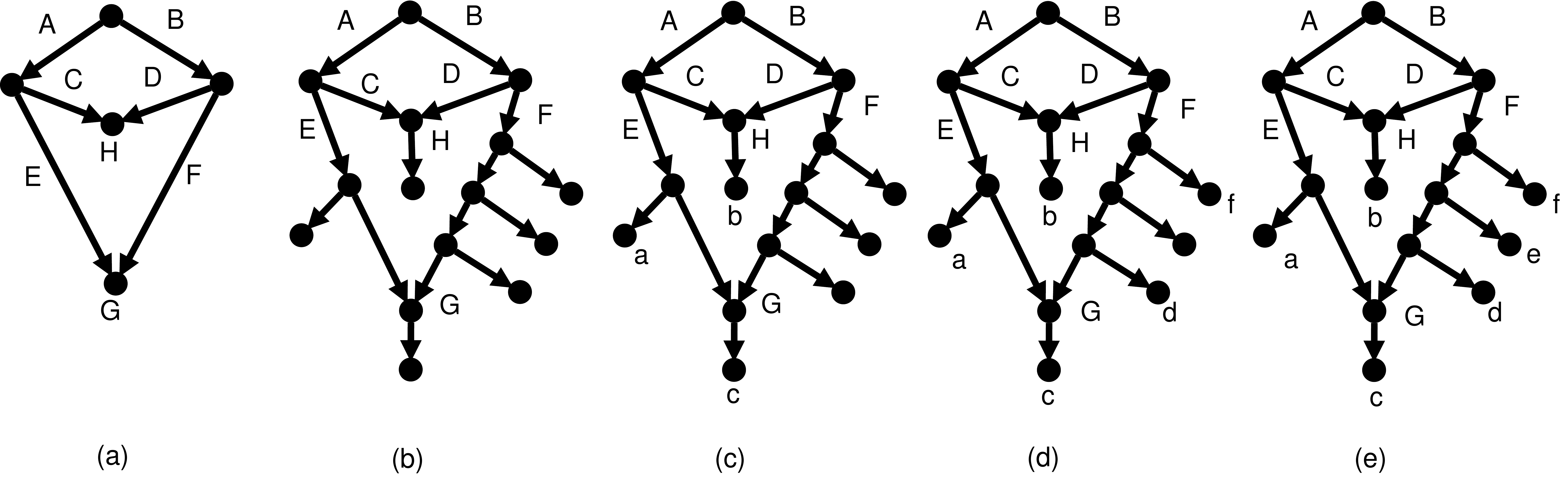}
\caption{An example of the execution of the algorithm outlined in Lemma \ref{lem:core} for the 
separating set of clusters
$\mathcal{C}=\{$$\{a,b\}$, $\{a,c\}$, $\{c,d\}$, $\{d,e\}$, $\{a,b,c\}$, $\{c,d,e\}$, $\{d,e,f\}$, $\{c,d,e,f\}$, $\{b,d,e,f\}$, $\{b,c,d,e,f\}, \{a\},\{b\},\{c\},\{d\},\{e\},\{f\}\}$ on $\cX=\{a,b,c,d,e,f\}$. The value of $k$ is fixed to 2. (a) The chosen generator $g$ is the  generator $2c$ in Figure \ref{fig:gen}. (b) We guess that the sides $E$, $H$ and $G$ contain one leaf while the side $F$ contains more than 2 leaves and all other sides contain zero leaves.  (c) We guess the single leaves on sides $E$, $H$ and $G$. (d)  We guess the leaves $s^{+}$ and $s^{-}$ on side $F$. (e) We deduce the last leaf in $F$ and we obtain a simple level-2 network on $\cX$. By a stroke of luck, our first guess represents $\cC$.}
  \label{fig:exLemma1} 
\end{figure*}

It is known that, if the leaves of $N$ are removed and all nodes with both 
indegree and outdegree
equal to 1 are suppressed, the resulting structure will
be a level-$k$ generator, defined in \cite{lev2TCBB}. See also Figure \ref{fig:gen}. For fixed $k$, there are only 
a constant
number of level-$k$ generators \cite[Proposition 2.5]{Gambette2009structure}. Recall that the \emph{sides} of a 
level-$k$ generator are
defined as the union of its edges and its nodes of indegree-2 and 
outdegree-0. For fixed $k$ the maximum number of sides ranging
over all level-$k$ generators, is a constant.

For a cluster set $\cC$ on $\cX$, we write $x \rightarrow y$ if and only if every 
non-singleton cluster in $\cC$ that contains $x$, also contains $y$. For example, for the cluster set of Figure \ref{fig:exLemma1},  we have $e \rightarrow d$ and $f \rightarrow e$. 

In the remainder of the proof, we illustrate a simple algorithm for determining whether a binary simple level-$k$ network that represents $\cC$ exists
 by attempting to reconstruct such a network. 
Let $g$ be the generator underlying $N$. 
We only require polynomially many tries
to compute $g$, because there are only a constant number of generators. So 
assume we know $g$. For each side of $g$, we guess 
whether there are 0, 1, 2 or more than 2 leaves on that side. For each side containing 
exactly one leaf, we guess what that is. 
For each side $s$
of $g$ containing 2 or more leaves, we guess the leaf $s^{+}$ that
is nearest to the root on that side, and the leaf $s^{-}$ that is
furthest from the root on that side. For an example see Figure \ref{fig:exLemma1}. Note that, since for each side we have only four options (0, 1, 2 or more than 2 leaves), and in the latter case only $s^{+}$ and $s^{-}$ have to be chosen, if follows that we have a polynomial number of guesses to try. 

We will now show how to add the remaining leaves. 
Note that we may fail to insert all leaves in the network. This means that we made the   wrong guess and that another set of guesses has to be checked. 
We say that a side $s$ is \emph{lowest} if it does not yet 
have all its
leaves, and there is no other such side $s'$ reachable from $s$. By 
reachable we mean that in the underlying generator $g$, there is a directed 
path from the head of side $s$ to the tail of side $s'$. 
Since $N$ is a directed acyclic graph, 
until all  leaves in $\cX$ have been added, there will always be
a lowest side. For example, the side $F$ in Figure \ref{fig:exLemma1}(d) is lowest.
The idea is to add leaves to the lowest side $s$, until all its leaves 
have been added. We then continue with remaining lowest
sides until we have reconstructed $N$.

Given a lowest side $s$, with $s^{+}$ and $s^{-}$ fixed, it is possible to tell in polynomial time what the correct remaining  leaves for $s$ are, as follows. 
Observe that a leaf $x$ that is on side
$s$ in $N$ and which has not yet been added has the property $s^{+} \rightarrow x \rightarrow s^{-}$.
Furthermore, there is at least one cluster $C \in \cC$ such that
$\{ x, s^{+}, s^{-} \} \cap C = \{ x, s^{-} \}$. There exists at least
one such cluster because otherwise $\{ x, s^{+} \}$ would be not separated 
in $\cC$, a contradiction since $\cC$ is 
separating. We call such a
cluster a \emph{split cluster for side $s$}.  Now, observe that for every split cluster $C$
for side $s$, and for every side $t \neq s$ that contains 2 or more leaves 
in $N$, either $\{t^{+}, t^{-}\} \cap C = \{t^{+}, t^{-}\}$ or
$\{t^{+}, t^{-}\} \cap C = \emptyset$. This follows because the
only edges in $N$ that represent $C$ lie on side $s$. 
If this is not the case, our set of guesses was incorrect and a new one has to be checked.                       

Now, consider
any leaf $y$ that has not yet been added to the network.
Assume that this leaf 
 belongs to  
side $t$ for some $t$. We want to have a simple test to avoid wrongly placing it on side $s$, with $s\neq t$. Side $t$ will 
contain three or more leaves in $N$, so we can assume that $t^{+}$
and $t^{-}$ exist. If  $s^{+} \rightarrow y \rightarrow s^{-}$   does not hold
then it is immediately
clear that $y$ cannot be put on side $s$. So assume (conversely) that 
this condition does hold, and for the same reason assume there is a split 
cluster $C$ for side $s$ that contains $y$. In other words,
there is a cluster $C$ such that $\{ y, s^{+}, s^{-} \} \cap C = \{ y, 
s^{-} \}$. 
Since $t^{+} \rightarrow y \rightarrow t^{-}$  holds, it follows that $C$ also contains $t^{+}$ and $t^{-}$,
because any cluster that contains $y$ also
contains $t^{-}$, and we know that $C$ contains either both of
$t^{+}$ and $t^{-}$, or neither of them. However, there is no edge
in $N$ that can represent $C$: the only edges that represent $C$
lie on side $s$, but the fact that $s$ is the lowest side means that
no cluster beginning on side $s$ can contain any leaves on side $t$.
To summarize, we have a simple test for determining whether
a leaf should be placed on side $s$. Once we have determined the set of 
leaves that should be placed on side $s$, it is easy to determine
the correct order of those leaves by inspecting $\rightarrow $ relationships.
Indeed, if $q$ and $p$ are two leaves that belong to side $s$, and $q$ is nearer to $s^{-}$ in the network we aim to reconstruct,  then obviously $p \rightarrow q$. Since $\cC$ is 
separating  there will be (by separation) some cluster that contains $q$ but not $p$, so $q \not \rightarrow p$. 
{If all leaves can be added in such a way, we obtain a simple level-$k$ network on $\cX$ and we can check in polynomial time if it represents $\cC$.  If it is not the case or we fail to insert at least one leaf in the network, another set of guesses can be checked until a simple level-$k$ network representing $\cC$ (if any exist) is found}.
This concludes the proof. 
\end{proof}

\noindent
The following corollary follows automatically from the generality of the proof of Lemma \ref{lem:core}.

\begin{figure*}[t]
  \centering
  \includegraphics[scale=.165]{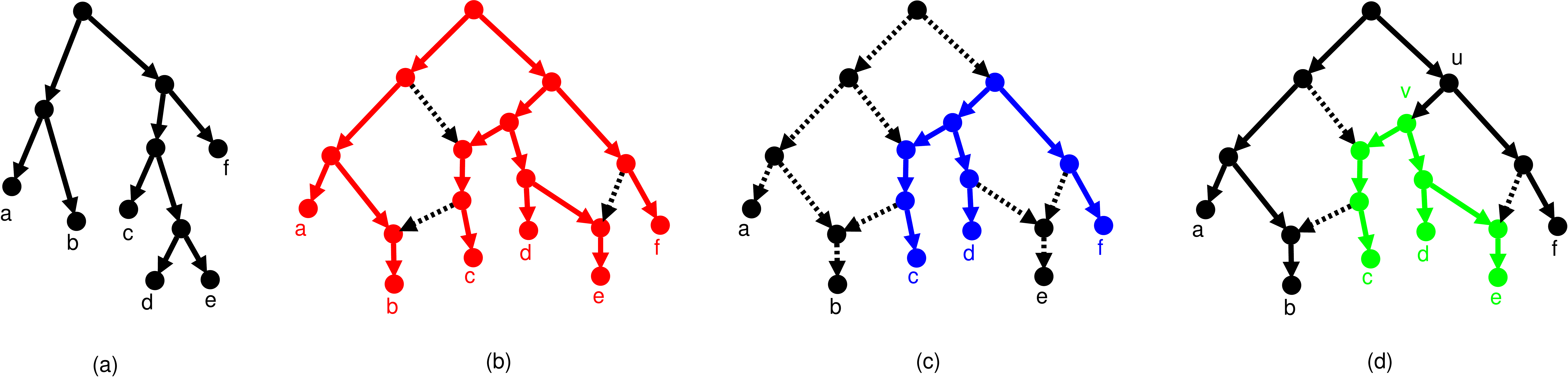}
  \caption{A phylogenetic tree~$T$ (a) and a phylogenetic network~$N$ (b,c,d); (b) illustrates in red that~$N$ displays~$T$  (deleted edges are dashed); 
(c) illustrates that~$N$ is consistent with (amongst others) the triplet $cd|f$ 
(deleted edges 
are again dashed); (d)
illustrates that~$N$ represents (amongst others) cluster $\{c,d,e\}$ in the softwired sense (dashed reticulation edges are ``switched off'').}
  \label{fig:treestripletsclusters}
\end{figure*}

\begin{corollary}
Let $\cC$ be 
a separating set of clusters 
on $\cX$. Then, for every fixed
$k \geq 0$, it is possible to construct in polynomial time \emph{all} binary simple level-$k$ networks
that represent $\cC$.
\end{corollary}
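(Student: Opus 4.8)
The plan is to observe that the algorithm in the proof of Lemma~\ref{lem:core} already does almost all the work: over the course of its execution it ranges over polynomially many composite guesses, and every binary simple level-$k$ network representing $\cC$ is recovered by at least one of them. So instead of halting as soon as the first valid network is found, I would collect every valid network encountered, and then remove duplicates.

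More concretely, set $|\cX|=n$. A \emph{composite guess} consists of a choice of level-$k$ generator $g$ (constantly many for fixed $k$), a side-size category in $\{0,1,2,{>}2\}$ for each of the constantly many sides of $g$, the identity of the single leaf on each side of category $1$, and the pair $(s^{+},s^{-})$ for each side of category $2$ or ${>}2$. As argued in the proof of Lemma~\ref{lem:core}, there are only $O(n^{c})$ such composite guesses, where $c$ depends only on $k$; and for a \emph{fixed} composite guess the reconstruction procedure is entirely deterministic — the set of leaves placed on each remaining lowest side is decided by the stated membership test, and their order along the side is forced by the $\rightarrow$ relation (using that $\cC$ is separating) — so each composite guess produces at most one candidate network, in polynomial time. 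The algorithm then is: for each composite guess, run the reconstruction; if it succeeds, producing a network $N_g$, check in polynomial time that $N_g$ is a binary simple level-$k$ network representing $\cC$ (both checks are polynomial since $|\cC|\le 2^{k+1}(n-1)$ by the counting in the proof of Lemma~\ref{lem:core}); if so, append $N_g$ to the output list. Finally, delete duplicate entries from the list, which is polynomial since the list has polynomially many entries.

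Correctness splits into two directions. Soundness is immediate: every network appended to the list is a binary simple level-$k$ network representing $\cC$, because that is exactly what is verified before appending. Completeness is the one point that needs care, and is the main (minor) obstacle: given an arbitrary binary simple level-$k$ network $N$ representing $\cC$, let $g$ be its underlying generator and let the side-size categories, single leaves and $(s^{+},s^{-})$ pairs be the true ones read off from $N$; I claim this composite guess reconstructs exactly $N$. This is precisely what the leaf-placement argument in the proof of Lemma~\ref{lem:core} shows (no leaf lying on a side $t\neq s$ of $N$ is wrongly assigned to the lowest side $s$, and every leaf lying on $s$ satisfies $s^{+}\rightarrow x\rightarrow s^{-}$ and has a split cluster), together with the fact that for separating $\cC$ the order of the leaves along each side is uniquely forced by $\rightarrow$. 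Hence the output list, after de-duplication, is precisely the set of all binary simple level-$k$ networks representing $\cC$, and the whole procedure runs in polynomial time with polynomially bounded output.
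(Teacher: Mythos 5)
Your proposal is correct and matches the paper's intent exactly: the paper justifies this corollary with the single remark that it ``follows automatically from the generality of the proof of Lemma~\ref{lem:core}'', and your write-up simply makes that explicit (enumerate all polynomially many composite guesses, run the deterministic reconstruction on each, keep every candidate that verifies, de-duplicate, with completeness coming from the fact that the true guess for any valid $N$ reconstructs $N$ itself). No gaps; this is the same argument, spelled out.
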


Using Lemma \ref{lem:core} we can prove the following result:
\begin{theorem}
\label{thm:core}
Let $\cC$ be a  (not necessarily 
separating)  set of clusters on $\cX$. Then, for every
fixed $k \geq 0$, it is possible to determine in polynomial time whether a level-$k$
network exists that represents $\cC$, and if so to construct such a network.
\end{theorem}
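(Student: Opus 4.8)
The plan is to reduce the general (not necessarily separating) case to the separating case handled by Lemma \ref{lem:core}. The key structural fact, mentioned already in the preliminaries, is that for an arbitrary cluster set $\cC$ the computation of level decomposes along the connected components of the incompatibility graph $IG(\cC)$: if $\ell(\cC) = k$ then a level-$k$ network representing $\cC$ can be assembled in polynomial time from simple level-$\leq k$ networks built independently for the taxa sets associated with each connected component of $IG(\cC)$. So the strategy is: first compute $IG(\cC)$ and its connected components; then for each component, isolate the relevant ``local'' cluster set on a smaller taxa set; observe that this local set is (essentially) tangled, hence separating by the remark in the preliminaries; apply Lemma \ref{lem:core} to decide whether a simple level-$\leq k$ network exists for that local piece and to construct one if so; and finally glue the pieces back together along the tree structure determined by the compatible (non-incompatibility-graph) clusters.

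More concretely, I would proceed as follows. First, handle the trivial/degenerate cases: singleton components of $IG(\cC)$ (clusters compatible with everything) and isolated taxa simply become tree edges. Second, for each non-trivial connected component $D$ of $IG(\cC)$, the clusters in $D$ together with the taxa they ``straddle'' define a sub-instance; I would quotient out the parts of $\cX$ that behave as single blocks with respect to $D$ (this is exactly the kind of contraction that makes a component-restricted cluster set tangled, and hence separating). Third, invoke Lemma \ref{lem:core} on each such separating sub-instance with the same fixed $k$: each call runs in polynomial time, and since there are at most $|\cC|$ components and each sub-instance has size bounded by $|\cC| + |\cX|$, the total is polynomial. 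If any sub-instance admits no simple level-$\leq k$ network, then no level-$k$ network represents $\cC$ and we report failure; otherwise we collect the constructed simple networks. Fourth, assemble: the compatible clusters of $\cC$, together with the singleton ``block'' structure, define a tree skeleton into whose edges the simple networks are substituted, exactly as in the construction of \cite{cass} referenced earlier. The resulting network has level equal to the maximum level over the pieces, which is $\leq k$; a final check in polynomial time (enumerate switchings — but here we already know representation holds by construction) confirms it represents $\cC$. Replacing $k$ by the smallest value for which this succeeds (try $k' = 0, 1, \dots, k$) gives the exact level-$k$ statement.

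The main obstacle I expect is making the decomposition step fully rigorous: one must show precisely that the component-restricted sub-instances are separating (so that Lemma \ref{lem:core} applies), that a simple level-$\leq k$ network exists for each piece if and only if a global level-$k$ network exists, and that the gluing does not introduce extra reticulations into any biconnected component or destroy acyclicity. The quotienting operation — contracting maximal unseparated taxa subsets relative to a given component — needs to be defined carefully so that representation of the original clusters is preserved under substitution, and so that no cluster ``leaks'' between components. This bookkeeping is the heart of the argument; the appeal to Lemma \ref{lem:core} itself and the counting that establishes the polynomial running time are then routine. Since the excerpt explicitly attributes the decomposition-and-recombination procedure to \cite{cass} and says ``we will make use of this fact in Section \ref{sec:theory}'', I would cite that result directly rather than re-prove it, and focus the written proof on verifying that the hypotheses of both Lemma \ref{lem:core} and the \cite{cass} procedure are met by the sub-instances.
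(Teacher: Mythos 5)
Your proposal is correct and follows essentially the same route as the paper: reduce to the connected components of $IG(\cC)$ via the decomposition-and-recombination procedure of \cite{cass}, observe that the resulting component-restricted (tangled) cluster sets are separating, and then invoke Lemma \ref{lem:core} on each piece. The paper likewise cites \cite{cass} for the decomposition rather than re-proving it, so your extra bookkeeping about quotienting and gluing is fine but not needed beyond that citation.
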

\begin{proof}
Recall that all tangled cluster sets are separating.  
It was shown in \cite{cass} that the existence
of a polynomial-time algorithm for constructing a level-$\leq k$ network from a tangled 
cluster set, is sufficient to give a polynomial-time algorithm for 
constructing level-$k$ networks from general cluster sets. (Specifically, several tangled 
cluster sets are obtained by processing each non-trivial connected component of the 
incompatibility graph of the original cluster set \cite{cass,HusonRuppScornavacca10}).
Hence we can assume
without loss of generality that $\cC$ is tangled.
 Lemma
\ref{lem:core} is thus sufficient, and we are done.
 
\end{proof}

\subsection{Rooted triplets}

It is interesting to note that the proof technique used in Lemma \ref{lem:core} leads to a simplified
proof, presented in the following corollary, of a complexity result that was first
proven in \cite{simplicityAlgorithmica}. (The algorithm in \cite{simplicityAlgorithmica} yielded a much faster running time, however). Let us first recall several definitions 
related to rooted triplets. A \emph{(rooted) triplet} on~$\cX$ is a binary phylogenetic 
tree on a size-3 subset of~$\cX$. We use $xy|z$ to denote the 
triplet with taxa~$x,y$ on one side of the root and $z$ on the other side of the root. For triplets, the notion of ``represent'' can be formalized by 
the notion of ``display'' introduced above. However, for triplets ``consistent with'' is often used instead of ``displayed by''. A triplet~$xy|z$ is 
\emph{consistent} with a phylogenetic network~$N$ (and~$N$ is \emph{consistent} with~$xy|z$) if~$xy|z$ is displayed by~$N$. See 
Figure~\ref{fig:treestripletsclusters} for an
example. Given a phylogenetic tree~$T$ on~$\cX$, we let
~$Tr(T)$ 
denote the set of all rooted triplets on~$\cX$ that are consistent with~$T$. For a set
of phylogenetic trees~$\mathcal{T}$, we let~
$Tr(\mathcal{T})$
 denote the set of all rooted triplets that are consistent with some tree
in~$\mathcal{T}$, i.e. 
$Tr
(\mathcal{T})=\bigcup_{T\in\mathcal{T}}
Tr
(T)$. A set of triplets on $\cX$ is \emph{dense} if, for every size-3 subset
$\{x,y,z\} \subseteq \cX$, at least one of $xy|z$, $xz|y$, $yz|x$ is in the triplet set.

\begin{corollary}
Let $R$ be a dense set of triplets on $\cX$. Then, for every fixed
$k \geq 0$, it is possible to determine in polynomial time whether a binary simple level-$k$ network
exists that is consistent with $R$, and if so to construct such a network.
\end{corollary}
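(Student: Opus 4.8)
The plan is to mimic the reconstruction strategy used in the proof of Lemma~\ref{lem:core}, with two substitutions: the role played there by the separating property of $\cC$ is now played by the \emph{density} of $R$, and ``$N$ represents the cluster $C$'' becomes ``$N$ is consistent with the triplet $t$''. First I would record the routine ingredients. A binary simple level-$k$ network $N$ on $n$ taxa is consistent with only $O(n^{3})$ triplets (at most one per size-$3$ subset per switching, and there are constantly many switchings), so the triplet set underlying such an $N$ has polynomial size for fixed $k$, and checking whether a given simple level-$k$ network is consistent with $R$ can be done in polynomial time by ranging over all triples and all switchings. Assuming that \emph{some} binary simple level-$k$ network $N$ consistent with $R$ exists, I would reconstruct one. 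Exactly as in Lemma~\ref{lem:core}: guess the level-$k$ generator $g$ underlying $N$ (constantly many, so polynomially many tries); for each side of $g$ guess whether it carries $0$, $1$, $2$ or more than $2$ leaves, guessing the leaf in the ``$1$'' case and the topmost leaf $s^{+}$ and bottommost leaf $s^{-}$ in the ``$2$'' and ``$>2$'' cases; this is again only polynomially many guesses. It then remains to insert the leaves not yet placed, using $R$ as an oracle.

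The heart of the argument is the analysis of a single \emph{lowest} side $s$ (defined verbatim as in Lemma~\ref{lem:core}), with its $s^{+}$ and $s^{-}$ fixed, and it rests on two structural facts that are the triplet analogues of the ``split cluster'' reasoning. \textbf{(i)} If a leaf $x$ lies on side $s$ strictly between $s^{+}$ and $s^{-}$ in $N$, then the only triplet on $\{x,s^{+},s^{-}\}$ displayed by $N$ is $x s^{-}\mid s^{+}$: because $x$, $s^{+}$ and $s^{-}$ are ``chained'' along a single internal path (here we use that $N$ is simple, so a side really is a path with no internal reticulations) and because $s$ being lowest means no leaf outside $s$ is reachable from inside $s$, no switching can produce the other two resolutions; hence by density $x s^{-}\mid s^{+}\in R$. \textbf{(ii)} If a leaf $y$ has not yet been placed, so $y$ is an interior leaf of some side $t\neq s$ with at least three leaves, then $N$ displays $y s^{-}\mid s^{+}$ in no switching at all: any common ancestor of $y$ and $s^{-}$ that is not an ancestor of $s^{+}$ would have to be an interior node of side $s$ below the attachment point of $s^{+}$, but such a node cannot reach $y$ because $t$, still missing leaves, is not reachable from $s$; hence $y s^{-}\mid s^{+}\notin R$. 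Combining (i) and (ii), a not-yet-placed leaf $x$ belongs on side $s$ \emph{iff} $x s^{-}\mid s^{+}\in R$, a polynomial-time test. An entirely analogous uniqueness-of-displayed-triplet argument shows that for two interior leaves $p,q$ of side $s$, $p$ lies above $q$ iff $q s^{-}\mid p\in R$, so the leaves assigned to $s$ can be sorted into their correct order. I would then proceed to the next lowest side and repeat until all of $\cX$ is placed, obtaining a binary simple level-$k$ network; if a leaf ever fails to be placed, or the final network is not consistent with $R$, the current guesses were wrong and we move to the next combination. Under the guesses matching the true $N$ the procedure reconstructs $N$ itself, so a valid network is found whenever one exists, and since there are polynomially many guesses and each is processed in polynomial time, the algorithm is polynomial for every fixed $k$.

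The step I expect to be the main obstacle is precisely the pair of facts (i) and (ii): one must verify that, under density, the relevant caterpillar-ordering triplets are not merely \emph{displayed} by $N$ but \emph{forced} into $R$ (this is what the uniqueness-of-displayed-triplet observations buy us, and they in turn lean on both the ``lowest side'' property and the simplicity of $N$), and, symmetrically, that stray leaves from other sides cannot be mistaken as belonging to $s$. Once these are established, the remaining pieces --- the constant bounds on the number of generators and of sides, the polynomial bound on the number of guesses, and the polynomial-time consistency check --- are routine and carry over directly from the proof of Lemma~\ref{lem:core}.
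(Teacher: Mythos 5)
Your proposal is correct and follows essentially the same route as the paper: the paper's proof of this corollary simply reuses the reconstruction argument of Lemma~\ref{lem:core}, noting that consistency of a given network with $R$ is checkable in polynomial time and that the only change is the membership test for the lowest side, namely that an unallocated leaf $x$ belongs on side $s$ iff $s^{-}x|s^{+}\in R$ --- exactly your test. Your facts (i) and (ii) (density forces the caterpillar triplet into $R$ for leaves on $s$, while the lowest-side property excludes it for leaves on other sides) are a correct elaboration of what the paper leaves implicit.
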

\begin{proof}
As pointed out in \cite{simplicityAlgorithmica} it is possible to
determine in polynomial time whether a given network is indeed consistent with a set of input triplets. Then, the proof of Lemma \ref{lem:core} holds here almost entirely. 
The only
significant difference concerns the adding of leaves to the lowest side:
 a not yet allocated leaf $x$ belongs on 
lowest side $s$ if and only if the triplet $s^{-}x|s^{+}$ is in the input. 
\end{proof}

We shall return to rooted triplets again later in the article. 

\section{From theory to practice: the importance of ST-sets}
\label{sec:stsets}

The algorithm described in Section \ref{sec:theory} is polynomial time but only of theoretical interest because its running time is too high to be useful in practice. In the
rest of this article we will focus on practical polynomial-time algorithms. In all these algorithms the ST-set, which can informally be thought of as treelike subsets of $\cX$, has a central role. We begin  by formally defining ST-sets and describing their basic properties. We will expand upon these basic properties in subsequent sections of the article.

\subsection{Definition and basic properties of ST-sets}
\label{subsec:stdefs}


Given a set~$S\subseteq\mathcal{X}$ of taxa,  
 we use~$\mathcal{C}\setminus S$ to denote the result of removing all elements of~$S$ from each cluster
in~$\mathcal{C}$ and we use~$\mathcal{C}|S$ to denote~${\mathcal{C}\setminus (\mathcal{X}\setminus S)}$ (the restriction of~$\mathcal{C}$ to~$S$). We say that a set~$S \subseteq \mathcal{X}$ 
is an \emph{ST-set} with respect to $\mathcal{C}$, if~$S$ is not separated by~$\mathcal{C}$ and any two clusters~$C_1,C_2\in\mathcal{C}|S$ are 
compatible. Note that, unlike in \cite{cass}, we allow the possibility that $S = \emptyset$  
or $S = \cX$. 
(We say that an ST-set $S$ is \emph{trivial} if $S= \emptyset $ or $S= \cX$).
 An ST-set~$S$ is 
\emph{maximal} if there is no ST-set~$T$ with~$S \subset T$.

Informally, the maximal ST-sets are the result of repeatedly collapsing pairs of unseparated taxa for as long as possible; we can think
of them as ``islands of laminarity'' within the cluster set.
ST-sets first explicitly appeared in \cite{cass} but, as we shall see in due course, they implicitly arose 
earlier in the \emph{recombination network} literature. An important feature of ST-sets is that there can in general be very many of them. For example, 
suppose $\mathcal{C}$ contains only $|\mathcal{X}|=n$ singleton
clusters; then $\cC$ has $2^n$ ST-sets. However, as the following technical results show, $\cC$ will have at most $n$ 
\emph{maximal} ST-sets, and they will partition $\mathcal{X}$ i.e. they
are mutually disjoint and entirely cover $\mathcal{X}$.  Several of the proofs  have been deferred to the appendix. 

\begin{lemma} \label{lem:stcompute} 
Let $\mathcal{C}$ be a set of clusters on $\cX$ and let $S_1 \neq S_2$ be two ST-sets of $\mathcal{C}$. If $S_1 \cap S_2 \neq \emptyset$ then $S_1 \cup S_2$ is an ST-set. \end{lemma}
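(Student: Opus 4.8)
The plan is to verify the two defining conditions of an ST-set directly for $S := S_1 \cup S_2$: that $S$ is not separated by $\mathcal{C}$, and that every two clusters in $\mathcal{C}|S$ are compatible. The one tool I expect to lean on throughout is a fixed common element $x \in S_1 \cap S_2$; this is exactly what rules out the degenerate configurations (for instance a cluster equal to $S_2$ but disjoint from $S_1$) that would otherwise break the statement, so the hypothesis $S_1 \cap S_2 \neq \emptyset$ is genuinely needed.

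For the first condition I would argue by contradiction. Suppose some $C \in \mathcal{C}$ is incompatible with $S$: then $C$ meets $S$ but $C \not\subseteq S$ and $S \not\subseteq C$. Since $C$ meets $S$ it meets $S_1$ or $S_2$; say it meets $S_1$. As $S_1$ is an ST-set, $C$ is compatible with $S_1$, so either $C \subseteq S_1$ (which already contradicts $C \not\subseteq S$) or $S_1 \subseteq C$. In the latter case $x \in C$, so $C$ also meets $S_2$, and compatibility of $C$ with $S_2$ gives either $C \subseteq S_2$ (whence $C \subseteq S$, contradiction) or $S_2 \subseteq C$ (whence $S = S_1 \cup S_2 \subseteq C$, contradiction). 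Every branch is impossible, so $S$ is not separated.

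For the second condition I would first record a structural sub-claim: any $D \in \mathcal{C}$ with $D \subseteq S$ satisfies $D \subseteq S_1$, or $D \subseteq S_2$, or $D = S$. Indeed, if $D$ lies in neither $S_1$ nor $S_2$, then compatibility of $D$ with $S_1$ gives $D \cap S_1 = \emptyset$ (impossible, since $D \subseteq S$ would then force $D \subseteq S_2$) or $S_1 \subseteq D$; symmetrically $S_2 \subseteq D$; hence $D = S$. Now take clusters $C_1 = D_1 \cap S$ and $C_2 = D_2 \cap S$ of $\mathcal{C}|S$ with $D_1, D_2 \in \mathcal{C}$. By the first condition each $D_i$ is compatible with $S$, so $C_i = \emptyset$ (compatible with everything), or $C_i = S$ (which contains $C_{3-i}$), or $C_i = D_i$ with $D_i \subseteq S$. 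In the remaining case $C_1 = D_1 \subseteq S$ and $C_2 = D_2 \subseteq S$, so by the sub-claim each $D_i$ lies in $S_1$ or in $S_2$. If both lie in $S_1$ they belong to $\mathcal{C}|S_1$ and are compatible since $S_1$ is an ST-set; likewise if both lie in $S_2$. If $D_1 \subseteq S_1$ and $D_2 \subseteq S_2$, I split on $D_2 \cap S_1$, which by compatibility of $D_2$ with $S_1$ is $\emptyset$, $D_2$, or $S_1$: these give respectively $D_1 \cap D_2 = \emptyset$, a reduction to the "both in $S_1$" case, or $D_1 \subseteq S_1 \subseteq D_2$ — in each case $C_1$ and $C_2$ are compatible.

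I do not anticipate a real obstacle: the argument is elementary set manipulation. The only point needing care is isolating the degenerate cases, which is precisely the role of the common element $x$ and the structural sub-claim; once those are in hand, both conditions follow from a short check over the three possible inclusion relationships among a cluster and $S_1$, $S_2$.
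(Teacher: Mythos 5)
Your proof is correct. The first half (unseparatedness of $S_1\cup S_2$) is essentially the paper's own argument: a cluster incompatible with the union would, via compatibility with $S_1$ and the common element $x\in S_1\cap S_2$, be forced to contain $S_1$ and then $S_2$, hence the whole union. For the second condition you take a genuinely different route. The paper classifies the clusters of $\mathcal{C}|(S_1\cup S_2)$ relative to the \emph{intersection} — each is the full union, contained in $S_1\cap S_2$, or disjoint from $S_1\cap S_2$ — and then, assuming an incompatible pair exists, pulls both clusters back to $\mathcal{C}$ and shows in each configuration that $S_1$ (or $S_2$) would fail to be an ST-set. You instead classify the \emph{original} clusters contained in $S_1\cup S_2$ — each lies inside $S_1$, inside $S_2$, or equals the union — and verify compatibility of an arbitrary pair of restricted clusters directly, settling the only nontrivial (mixed) case by splitting on $D_2\cap S_1$. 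Your decomposition turns the second condition into a short positive verification instead of a contradiction argument, and it reuses the just-proved unseparatedness to reduce every restricted cluster to $\emptyset$, $S$, or an original cluster inside $S$, which is arguably cleaner than the paper's trichotomy; the paper's version, in exchange, works entirely at the level of restricted clusters. One small reading note: when you invoke the sub-claim in the ``remaining case'' you should remark that the alternative $D_i=S$ has already been absorbed by the earlier branch $C_i=S$; as written this is implicit but harmless.
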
 
\begin{proof}
Deferred to the appendix. 
\end{proof}

\begin{corollary}
\label{cor:maxstdisjoint}
Let $\mathcal{C}$ be a set of clusters on $\cX$ and let $S_1 \neq S_2$ be two maximal ST-sets of $\mathcal{C}$. Then $S_1 \cap S_2 = \emptyset$.
\end{corollary}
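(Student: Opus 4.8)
The plan is to derive Corollary~\ref{cor:maxstdisjoint} directly from Lemma~\ref{lem:stcompute}. Suppose, for contradiction, that $S_1$ and $S_2$ are two \emph{distinct} maximal ST-sets of $\mathcal{C}$ with $S_1 \cap S_2 \neq \emptyset$. Since $S_1 \neq S_2$ and they intersect, Lemma~\ref{lem:stcompute} applies and tells us that $S_1 \cup S_2$ is also an ST-set of $\mathcal{C}$.

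Now I would observe that $S_1 \subseteq S_1 \cup S_2$, and that this containment is in fact strict: if $S_1 \cup S_2 = S_1$ then $S_2 \subseteq S_1$, and symmetrically $S_1 \subseteq S_2$ would follow from $S_1 \cup S_2 = S_2$; at least one of these fails, since $S_1 \neq S_2$, and indeed if $S_2 \subseteq S_1$ then $S_2$ would not be maximal (as $S_1$ is a strictly larger ST-set), contradicting the maximality of $S_2$; symmetrically for $S_1 \subseteq S_2$. Hence $S_1 \cup S_2$ properly contains at least one of $S_1$, $S_2$ — say without loss of generality $S_1 \subsetneq S_1 \cup S_2$ — which directly contradicts the maximality of $S_1$. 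Therefore no such pair exists and $S_1 \cap S_2 = \emptyset$.

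There is essentially no main obstacle here: the corollary is a routine consequence of the ``union'' lemma combined with the definition of maximality, and the only minor care needed is the bookkeeping to ensure that at least one of the inclusions $S_i \subsetneq S_1 \cup S_2$ is strict (which is immediate from $S_1 \neq S_2$). All the real work — showing that the union of two intersecting ST-sets is again an ST-set — has already been done in Lemma~\ref{lem:stcompute}, whose proof is deferred to the appendix and which I am entitled to assume.
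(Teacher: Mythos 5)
Your argument is correct and is exactly the route the paper intends: the corollary is stated as an immediate consequence of Lemma~\ref{lem:stcompute}, since a nonempty intersection of two distinct maximal ST-sets would make their union an ST-set strictly containing at least one of them, contradicting maximality. Your extra care about strictness of the inclusion is fine and does not change the argument.
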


\begin{corollary}
\label{cor:mostn}
Let $\mathcal{C}$ be a set of clusters on $\cX$. Then there are at most $n$ maximal ST-sets with respect to $\cC$, they are uniquely
defined and they partition $\cX$.
\end{corollary}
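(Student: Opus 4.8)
The plan is to deduce Corollary~\ref{cor:mostn} from Lemma~\ref{lem:stcompute} and Corollary~\ref{cor:maxstdisjoint} together with the standing assumption that every taxon of $\cX$ lies in at least one cluster of $\cC$. There are three things to prove: that the maximal ST-sets are pairwise disjoint, that they cover $\cX$, and that consequently there are at most $n$ of them; uniqueness then follows because a partition into maximal ST-sets is determined by the ST-sets themselves, which depend only on $\cC$.

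First I would observe that pairwise disjointness is exactly Corollary~\ref{cor:maxstdisjoint}, so nothing more is needed there. Next, to show the maximal ST-sets cover $\cX$, I would argue that every taxon $x \in \cX$ is contained in \emph{some} maximal ST-set. The key point is that the singleton $\{x\}$ is itself an ST-set: a one-element set cannot be incompatible with any cluster (so it is not separated), and $\cC|\{x\}$ contains at most the single cluster $\{x\}$ (or nothing), which is trivially a compatible family. Hence $\{x\}$ is an ST-set, and since $\cX$ is finite, $\{x\}$ is contained in at least one \emph{maximal} ST-set — take any ST-set containing $\{x\}$ that is not properly contained in a larger one; such a set exists by finiteness. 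Therefore the union of all maximal ST-sets is all of $\cX$. Combined with disjointness, the maximal ST-sets form a partition of $\cX$, and a partition of an $n$-element set into nonempty parts has at most $n$ parts; one should note that the empty set, although a (trivial) ST-set, is never maximal here precisely because every $\{x\}$ is a strictly larger ST-set, so maximality forces nonemptiness (assuming $\cX \neq \emptyset$). Finally, uniqueness is immediate: the collection of ST-sets of $\cC$ is an intrinsic property of $\cC$, hence so is the sub-collection of maximal ones, hence so is the induced partition.

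I do not expect any serious obstacle here — the corollary is a routine consequence of Lemma~\ref{lem:stcompute}. The only point requiring a little care is making sure the argument that ``$\{x\}$ lies in a maximal ST-set'' is stated cleanly (it is a standard finiteness/Zorn-type argument, trivial since $\cX$ is finite), and confirming that the trivial empty ST-set does not interfere with the partition claim; the standing assumption that each taxon lies in some cluster is what one would normally invoke, although in fact the weaker observation that singletons are always ST-sets already suffices. If anything counts as the ``main step'', it is simply citing Corollary~\ref{cor:maxstdisjoint} for disjointness and the singleton observation for covering — both short.
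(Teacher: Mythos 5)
Your proposal is correct and follows essentially the same route as the paper: disjointness is taken from Corollary~\ref{cor:maxstdisjoint}, and the covering of $\cX$ rests on the observation that each singleton $\{x\}$ is an ST-set and hence, by finiteness, lies inside some maximal ST-set (the paper phrases this as a contradiction via a largest-cardinality ST-set containing $x$, but the idea is identical). Your side remarks — that the empty ST-set is never maximal and that the standing assumption on taxa appearing in clusters is not actually needed — are accurate and do not change the argument.
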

\begin{proof}
The disjointness of maximal ST-sets guarantees that there at most $n$ of them. Consider the set $\cS$ of all (necessarily disjoint) maximal ST-sets of $\cX$. Suppose the maximal ST-sets in 
$\cS$ do not entirely cover $\cX$. Then there is some $x \in \cX$ which is disjoint from all maximal ST-sets in $\cS$. Let $S$ be the ST-set of largest 
cardinality that contains $x$; such an ST-set must exist because $\{ x \}$ is an ST-set. $S \not \in \cS$ so there exists some ST-set $S'$ such that $S \subset S'$, but this contradicts
the assumption on the cardinality of $S$. Hence $\cS$ partitions $\cX$, and by extension $\cS$ is unique. 
\end{proof}

\begin{lemma}
\label{lem:polymax}
The maximal ST-sets of a set of clusters $\cC$ on $\cX$ can be computed in polynomial time.
\end{lemma}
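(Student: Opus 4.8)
The plan is to make precise the ``repeatedly collapse unseparated pairs'' intuition. Maintain a partition $\mathcal{P}$ of $\cX$, initialised as the partition into singletons, each of which is trivially an ST-set. Then repeatedly pick two blocks $S,S'\in\mathcal{P}$ such that $S\cup S'$ is unseparated by $\cC$ and replace them by $S\cup S'$; stop when no such pair exists and output $\mathcal{P}$. The elementary observation that drives everything is that if $S,S'$ are disjoint ST-sets and $S\cup S'$ is unseparated, then $S\cup S'$ is automatically an ST-set: every cluster of $\cC$ is compatible with each of $S$ and $S'$, so its restriction to $S\cup S'$ is one of $\emptyset$, $S$, $S'$, $S\cup S'$, a cluster of $\cC|S$, or a cluster of $\cC|S'$, and any two such sets are compatible (using that $\cC|S$ and $\cC|S'$ are laminar). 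Hence, by induction, every block of $\mathcal{P}$ is always an ST-set, and testing a candidate merge reduces to testing whether a single set is unseparated, which is polynomial; since each merge decreases $|\mathcal{P}|$, there are at most $|\cX|-1$ merges, so the procedure runs in polynomial time.

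Next I would record two bookkeeping facts. (i) Every block produced is an ST-set, hence is contained in a unique maximal ST-set: if a nonempty ST-set $S$ meets a maximal ST-set $M$, then $S\cup M$ is an ST-set by Lemma~\ref{lem:stcompute}, so $S\subseteq M$ by maximality, and uniqueness follows from Corollary~\ref{cor:maxstdisjoint}. (ii) Consequently the output partition $\mathcal{P}$ refines the partition of $\cX$ into maximal ST-sets (Corollary~\ref{cor:mostn}). It remains only to rule out that $\mathcal{P}$ is a \emph{strict} refinement, i.e.\ that the procedure halts while some maximal ST-set $M$ is a union of two or more blocks of $\mathcal{P}$.

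So suppose this happens. Since $M$ is an ST-set, $\cC|M$ is laminar; moreover every cluster of $\cC$ is compatible with $M$, so one checks that for subsets of $M$ being ``separated by $\cC$'' coincides with being ``separated by $\cC|M$'', and hence the ST-sets contained in $M$ are exactly the ST-sets of the rooted tree $\cT_M$ determined by $\cC|M$. A short argument shows these are precisely $\emptyset$ together with the sets obtained as a union of the clusters lying immediately below a nonempty set of children of a single node of $\cT_M$. Now choose a node $v$ of $\cT_M$ whose cluster $C_v$ is a union of at least two blocks of $\mathcal{P}$ and is inclusion-minimal with this property ($M$ itself qualifies). Let $D_1,\dots,D_e$ be the clusters below the children of $v$. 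By minimality no $D_\ell$ is a union of two or more blocks, so each $D_\ell$ is either itself a block or is \emph{properly} contained in a block, and in the latter case that block is necessarily a union of several $D_i$'s. A brief case distinction --- either all $D_\ell$ are blocks (then $e\geq 2$ and $D_1\cup D_2$ is an ST-set), or some $D_\ell$ lies inside a block $T$ which is a union of $D_i$'s and then inspecting a block meeting some $D_m$ outside $T$ yields a second block $T'$ of the same shape --- produces two distinct blocks, each a union of clusters below children of $v$. Their union is again such a union, hence an ST-set of $\cT_M$, hence an ST-set of $\cC$, hence unseparated, so the procedure was not finished --- a contradiction.

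The step I expect to be the main obstacle is this last paragraph: isolating the correct structural description of the ST-sets lying inside a maximal ST-set (``unions of sibling subtrees'' in $\cT_M$), and then running the inclusion-minimal-node argument cleanly, where one has to be careful that a single block may be expressible as such a union at more than one node. Everything else --- the ``union of two unseparated ST-sets is an ST-set'' observation, the polynomial running time, and facts (i) and (ii) --- is routine given Lemma~\ref{lem:stcompute} and Corollaries~\ref{cor:maxstdisjoint} and~\ref{cor:mostn}.
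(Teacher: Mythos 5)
Your algorithm is exactly the one in the paper (start from singletons and greedily merge two blocks whose union is again an ST-set), and your overall correctness strategy is also the same: assume the procedure halts while some maximal ST-set $M$ is still split into two or more blocks, and exhibit two blocks whose union is an ST-set, contradicting termination. Where you genuinely diverge is in how that mergeable pair is found. The paper stays at the level of clusters: it counts, for each $C\in\cC|M$, the number $nb(C)$ of blocks contained in $C$, takes a cluster minimizing $nb(C)$ subject to $nb(C)\geq 2$ (treating separately the case where no such cluster exists), and shows directly that two blocks inside that minimal cluster can be merged. You instead pass to the tree $\cT_M$ of the laminar family $\cC|M$, characterize the ST-sets inside $M$ as unions of the clusters hanging below siblings at a single node, and run an inclusion-minimal-node argument. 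Both are ``minimal witness'' arguments; the paper's buys a shorter, self-contained proof, while yours buys a reusable structural description of the ST-sets inside an ST-set. Your route does work, but one step in the final paragraph deserves more than the ``so'' you give it: from ``no $D_\ell$ is a union of two or more blocks'' it does not immediately follow that ``each $D_\ell$ is a block or properly contained in a block'' --- a priori a block could \emph{cross} $D_\ell$ (meet it with neither containing the other). To exclude this, first note that every block meeting $C_v$ lies inside $C_v$ (blocks are disjoint and $C_v$ is a union of blocks) and is a proper subset of it; then your characterization forces any such block that is not contained in a single $D_j$ to be a union of several of the $D_i$ (the node at which it is a union of sibling clusters can be neither a proper descendant of $v$, else the block sits inside one $D_j$, nor elsewhere, else the block would contain all of $C_v$). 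Hence no block crosses a $D_\ell$, and your claim follows: the blocks meeting $D_\ell$ either all lie inside it (so $D_\ell$ is a union of blocks, hence by minimality a single block) or one of them is a union of several $D_i$ including $D_\ell$. With that patch, and a careful proof of the characterization itself (which in turn needs the observation that separation of subsets of $M$ by $\cC$ and by $\cC|M$ coincide, as you note), your argument is complete.
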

\begin{proof}
Deferred to the appendix. 
\end{proof}

Corollary \ref{cor:mostn} and Lemma \ref{lem:polymax} are perhaps not so surprising, but we have nevertheless proven them rigorously to highlight the fact that \emph{computing} maximal ST-sets is not a complexity bottleneck. In later sections we shall see that there is a link between NP-hardness and maximal ST-sets, but that
the hardness lies in selecting certain maximal ST-sets with special properties, not in the computation of the maximal ST-sets \emph{per se}.

Let $\mathcal{T}$ be a set of trees, where each $T \in \mathcal{T}$ is a tree on $\mathcal{X}$. For a tree $T$ we write $Cl(T)$ to denote the set of clusters induced
by edges
of $T$ i.e. $C \in Cl(T)$ if and only if some edge of $T$ represents $C$. We let $Cl(\mathcal{T}) = \cup_{T \in \mathcal{T}} Cl(T)$. Whenever we (reasonably) assume 
that all singleton clusters are present in the input\footnote{The presence or absence of the singleton clusters in the input does not change the complexity of
the problems we study because it is trivial to modify a network without raising its reticulation number or level such that it also represents all the singleton clusters.}, it is easy to see that every cluster
set $\mathcal{C}$ on $\cX$ can be written as $Cl(\mathcal{T})$ for some $\mathcal{T}$ as follows. We 
take any proper coloring of $IG(\mathcal{C})$ (i.e. map the nodes of $IG(\cC)$ to colors such that no
two adjacent nodes have the same color) and use the resulting colors to partition
$\mathcal{C}$. Clusters that have been colored the same are all mutually compatible, so can be represented by a single tree corresponding to that color. Finally, whenever a subset of clusters pertaining to a color
does not cover all elements of $\mathcal{X}$, the missing taxa can be attached to the root. An obvious corollary of this is that the chromatic number of $IG(\mathcal{C})$ is
a lower bound on the cardinality of $\mathcal{T}$.

Whenever $\cC=Cl(\mathcal{T})$ there is an important relationship between the nodes and edges of trees in $\mathcal{T}$, and the (maximal) ST-sets of $\mathcal{C}$. Let $T$ be a (not necessarily binary) tree on $\cX$. In Section \ref{sec:prelim} 
we defined when an edge of a tree
represents a cluster. Here we extend this definition to \emph{nodes} of trees. We say that a node $v$ of $T$
represents $C$ if $C$ is equal to the union of the clusters represented by some (not necessarily strict) subset of its 
outgoing edges. Note that if an edge $(u,v)$ represents a cluster $C$ then so does $v$. 


\begin{lemma}
\label{lem:nonbinaryunseptree}
Let $\cT = \{T_1, \ldots, T_{m}\}$ be a set of trees on $\cX$. Let $\emptyset \subset {\cX}'
\subset \cX$ be an unseparated set with respect to $Cl(\cT)$. Then for each $T_i \in \cT$ there exists an edge $e_i$ or a node $v_i$ in
$T_i$ such that $e_i$ or $v_i$ represents ${\cX}'$.
\end{lemma}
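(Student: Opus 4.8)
The plan is to treat each tree of $\cT$ separately, so fix $T = T_i$ and, for a node $w$ of $T$, write $\cX(w)$ for the set of its leaf descendants (consistent with the notation $\cX(v)$ introduced earlier). The one structure I would look at is the \emph{lowest} node of $T$ sitting above $\cX'$: let $v$ be a node with $\cX' \subseteq \cX(v)$ that is minimal in the ancestor order, i.e. no strict descendant of $v$ satisfies this. Such a $v$ exists, because the root of $T$ has all of $\cX$, hence all of $\cX'$, as leaf descendants and $T$ is finite; picking any $x \in \cX'$ (legitimate since $\cX' \neq \emptyset$), one in fact sees $v$ is unique, though that will not be needed.

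Now distinguish two cases. If $\cX(v) = \cX'$, then $v$ is not the root, for otherwise $\cX' = \cX$, contradicting $\cX' \subset \cX$; hence $v$ has an incoming edge $e = (u,v)$, and $e$ represents $\cX(v) = \cX'$, so we may take $e_i := e$. Otherwise $\cX' \subset \cX(v)$ (the only remaining possibility, as $\cX' \subseteq \cX(v)$ by the choice of $v$), and I would show that the node $v$ represents $\cX'$. For each child $c$ of $v$, the cluster $\cX(c)$ is represented by the edge $(v,c)$ and so lies in $Cl(T) \subseteq Cl(\cT)$; it is a genuine cluster, being nonempty (every node of a tree has a leaf descendant) and a proper subset of $\cX$ (as $c$ is a strict descendant of the root). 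Since $\cX'$ is unseparated with respect to $Cl(\cT)$, the cluster $\cX(c)$ is compatible with $\cX'$, so $\cX(c) \cap \cX' = \emptyset$, or $\cX(c) \subseteq \cX'$, or $\cX' \subseteq \cX(c)$; the last case cannot occur, since a child of $v$ is a strict descendant of $v$ and $v$ is minimal. Thus every $\cX(c)$ is either disjoint from $\cX'$ or contained in $\cX'$. As $\cX(v)$ is the disjoint union of the sets $\cX(c)$ over the children $c$ of $v$, and $\cX' \subseteq \cX(v)$, it follows that $\cX' = \bigcup\{\cX(c) : c \text{ is a child of } v \text{ and } \cX(c) \subseteq \cX'\}$, a union over a nonempty family of outgoing edges of $v$ (otherwise $\cX' = \emptyset$). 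By the definition of a node representing a cluster, $v$ represents $\cX'$, and we take $v_i := v$.

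I expect the difficulty here to be purely a matter of bookkeeping rather than anything combinatorial: one must check that each branch cluster $\cX(c)$ is a bona fide member of $Cl(\cT)$, so that the defining property of ``unseparated'' can be applied to it; one must confirm that the degenerate configurations --- a leaf selected as $v$, or a root of outdegree one --- are either already handled by the first case or else contradict the minimality of $v$; and one must note that the family of children selected in the second case is nonempty. Once these are in place, the heart of the argument is immediate: the compatibility of every branch cluster $\cX(c)$ with $\cX'$, together with the minimality of $v$, leaves only disjointness or containment as possibilities --- which is exactly the statement that $\cX'$ is a union of branches hanging off $v$.
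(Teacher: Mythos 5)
Your proof is correct and follows essentially the same route as the paper's: locate a lowest node $v$ of $T_i$ whose leaf-descendant set contains $\cX'$, and use unseparatedness to force each child cluster $\cX(c)$ to be disjoint from or contained in $\cX'$, so that either the edge into $v$ (when $\cX(v)=\cX'$) or the node $v$ itself represents $\cX'$. The bookkeeping points you flag (child clusters lying in $Cl(\cT)$, degenerate cases, nonemptiness of the selected family) are handled exactly as you indicate, so nothing is missing.
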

\begin{proof}
Consider an arbitrary tree $T_i \in \cT$. Every cluster in $Cl(T_i)$ is either disjoint from ${\cX}'$,
 a superset of it, or a subset of it, otherwise ${\cX}'$ would
be separated. If some edge of $T_i$ represents ${\cX}'$ then we are done. Otherwise consider some node $v$ furthest from the root which 
represents a cluster $C$ such that ${\cX'} \subset C$. Such a $v$ must exist because if necessary we can take the root as $v$. $C$ is equal to 
the union of the clusters represented by some subset of the edges outgoing from $v$. Each cluster represented
by an outgoing edge of $v$ is either disjoint from ${\cX}'$ or a subset of it, because of the assumption on the distance of $v$ from the root. For the same reason, 
${\cX}'$ intersects with at least two such outgoing edge clusters of $v$. But when ${\cX}'$ intersects with such a cluster it
must contain it entirely, so  ${\cX}'$ is equal to the union of some subset of the clusters represented by edges outgoing from $v$. Hence $v$ represents
${\cX}'$. 
\end{proof}

The following corollary is automatic.

\begin{corollary}
\label{cor:unseptree}
Let $\cT = \{T_1, \ldots, T_{m}\}$ be a set of binary trees on $\cX$. Let $\emptyset \subset {\cX}'
\subset \cX$ be an unseparated set with respect to $Cl(\cT)$. Then for each $T_i \in \cT$ there exists an edge $e_i$  such that $e_i$ represents ${\cX}'$.
\end{corollary}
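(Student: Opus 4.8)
The statement is Corollary~\ref{cor:unseptree}, which specializes Lemma~\ref{lem:nonbinaryunseptree} from arbitrary trees to binary trees. The plan is to invoke Lemma~\ref{lem:nonbinaryunseptree} directly and then argue that in the binary case the ``node'' alternative collapses into the ``edge'' alternative.

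First I would apply Lemma~\ref{lem:nonbinaryunseptree} to the set $\cT = \{T_1,\ldots,T_m\}$ and the unseparated set $\cX'$: it guarantees that for each $T_i$ there is either an edge $e_i$ or a node $v_i$ of $T_i$ representing $\cX'$. The only thing left is to show the node case never genuinely happens for binary trees, or rather that it always comes with a witnessing edge. Recall that a node $v$ represents $C$ if $C$ is the union of the clusters represented by \emph{some} (not necessarily strict) subset of its outgoing edges. In a binary tree every internal non-leaf node has exactly two outgoing edges, say representing clusters $A$ and $B$ with $A \cup B$ the cluster of the parent edge of $v$ (or $A \cup B = \cX$ if $v$ is the root). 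The subsets of the two outgoing edges are: the empty set (giving $\emptyset$), $\{A\text{-edge}\}$ (giving $A$), $\{B\text{-edge}\}$ (giving $B$), or both (giving $A \cup B$).

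Since $\emptyset \subset \cX' \subset \cX$, the value $\emptyset$ is excluded, and the value $A \cup B$ can only equal $\cX'$ when $v$ is not the root (because if $v$ is the root then $A\cup B = \cX \neq \cX'$); but if $A \cup B = \cX'$ with $v$ non-root, then the incoming (parent) edge of $v$ is an edge representing $\cX'$, so we may take $e_i$ to be that edge. In the remaining two cases $\cX'$ equals $A$ or equals $B$, which are precisely the clusters of the two outgoing edges of $v$, so again one of these edges represents $\cX'$ and we take it as $e_i$. In every case we have produced an edge $e_i$ of $T_i$ representing $\cX'$, which is exactly what the corollary asserts.

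I do not anticipate any real obstacle here; the content is entirely in Lemma~\ref{lem:nonbinaryunseptree}, and the corollary is a routine unpacking of the definition of ``node represents a cluster'' under the binary hypothesis (each internal node has out-degree exactly two). The only mild subtlety to be careful about is the root case, where $A \cup B = \cX$ is not a proper cluster and has no incoming edge — but since $\cX' \neq \cX$ this case simply cannot produce $\cX'$ as the ``both-edges'' union, so it causes no trouble.
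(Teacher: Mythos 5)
Your proof is correct and matches the paper's intent: the paper simply declares the corollary ``automatic'' from Lemma~\ref{lem:nonbinaryunseptree}, and your case analysis (the node alternative always yields a witnessing edge because a binary node's out-edge subsets give only $\emptyset$, $A$, $B$, or $A\cup B$, with the last handled by the incoming edge and the root case excluded since $\cX'\neq\cX$) is exactly the routine unpacking being left implicit.
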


Note that Lemma \ref{lem:nonbinaryunseptree} and Corollary \ref{cor:unseptree} hold in particular for (maximal) ST-sets, because all ST-sets are unseparated.
Hence the two following straightforward extensions to ST-sets, which we will use extensively in the next section. Recall that an ST-set $S$ is trivial if $S=\emptyset$ or $S=\cX$.

\begin{corollary}
\label{cor:stsetbinary}
Let $\cT = \{T_1, \ldots, T_{m}\}$ be a set of binary trees on $\cX$. Let $S$ be a non-trivial ST-set  
with respect to $Cl(\cT)$. Then for each $T_i \in \cT$ there exists an edge $e_i$ in
$T_i$ such that $e_i$ represents $S$.
\end{corollary}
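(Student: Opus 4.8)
The plan is to derive this as an immediate specialization of Lemma~\ref{lem:nonbinaryunseptree}. Since every ST-set is, by definition, unseparated, a non-trivial ST-set $S$ with respect to $Cl(\cT)$ satisfies $\emptyset \subset S \subset \cX$ and is unseparated, so Lemma~\ref{lem:nonbinaryunseptree} applies directly: for each $T_i \in \cT$ there exists an edge $e_i$ or a node $v_i$ of $T_i$ representing $S$. The only remaining work is to rule out the ``node'' case when the trees are binary, i.e. to upgrade the conclusion from ``edge or node'' to ``edge''.

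The key observation is that in a \emph{binary} tree, a non-root internal node $v$ has outdegree exactly $2$, and the root has outdegree $2$ as well (a binary network that is a tree has a root of outdegree $2$, since a root of outdegree $1$ would be a node with indegree $0$ and outdegree $1$, which is forbidden). So if a node $v$ represents $S$, then $S$ is the union of the clusters represented by some subset of $v$'s (at most two) outgoing edges. If that subset is a single edge, then $S$ is itself represented by an edge of $T_i$ and we are done. If that subset consists of \emph{both} outgoing edges of $v$, then $S = \cX(v)$ is exactly the set of leaf descendants of $v$, which is the cluster represented by the edge entering $v$ — unless $v$ is the root, in which case $S = \cX$, contradicting non-triviality. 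The empty subset would give $S = \emptyset$, again contradicting non-triviality. In every surviving case, some edge $e_i$ of $T_i$ represents $S$.

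Concretely, I would write: apply Lemma~\ref{lem:nonbinaryunseptree} to the unseparated set $S$; for each $T_i$ we obtain an edge or a node representing $S$; if it is an edge we are done, and if it is a node $v$, then because $T_i$ is binary, $v$ has outdegree $2$, so $S$ is the union of the clusters of $0$, $1$, or $2$ of the outgoing edges of $v$; the cases of $0$ edges and (for $v$ the root) $2$ edges are excluded by non-triviality of $S$; the case of $1$ edge gives an edge representing $S$; and the case of $2$ edges with $v$ not the root means $S = \cX(v)$ is represented by the edge into $v$.

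I do not anticipate a genuine obstacle here — this is essentially a bookkeeping argument. The one point requiring a moment's care is the root: one must note that the root of a binary tree has outdegree $2$ (not $1$), so that the ``both outgoing edges'' case at the root indeed forces $S=\cX$ and is hence excluded. (Indeed, an alternative route is simply to invoke Corollary~\ref{cor:unseptree}, which already handles binary trees, applied to $S$; Corollary~\ref{cor:stsetbinary} is then literally a restatement.)
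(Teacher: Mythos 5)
Your proposal is correct and follows essentially the paper's own route: the paper obtains Corollary~\ref{cor:stsetbinary} simply by noting that every ST-set is unseparated and invoking Lemma~\ref{lem:nonbinaryunseptree} (packaged for binary trees as Corollary~\ref{cor:unseptree}), which is exactly your alternative one-line argument. Your explicit upgrade from ``edge or node'' to ``edge'' just spells out what the paper leaves implicit in Corollary~\ref{cor:unseptree}, and it is sound (the only trivial quibble being that the root of a binary tree has outdegree~2 by the binarity condition itself, not by the ban on indegree-1/outdegree-1 nodes).
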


\begin{corollary}
\label{cor:stedges}
Let $\cT = \{T_1, \ldots, T_{m}\}$ be a set of binary trees on $\cX$. Then $Cl(\cT)$ contains at most
$2(n-1)$ non-trivial ST-sets, and for every such ST-set $S$ of $Cl(\cT)$ and every tree $T_i \in \cT$ there exists a unique edge $e_i$ of
$T_i$ such that $\cX(e_i)=S$ and such that the subtree rooted at the head of $e_i$ is the unique tree that represents exactly the cluster set 
$Cl(\cT)|S$.  
\end{corollary}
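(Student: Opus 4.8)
The plan is to bootstrap everything from Corollary \ref{cor:stsetbinary} together with the known correspondence between laminar cluster sets and trees. First I would fix a non-trivial ST-set $S$ of $Cl(\cT)$ and a tree $T_i\in\cT$. By Corollary \ref{cor:stsetbinary} there is at least one edge $e_i$ of $T_i$ with $\cX(e_i)=S$. For \emph{uniqueness} of this edge, recall that in a tree the map $e\mapsto\cX(e)$ is injective (the set of leaf descendants of the head of $e$ determines $e$, since distinct edges have distinct heads in a tree and the set of clusters of a tree uniquely defines it); hence $e_i$ is the unique edge of $T_i$ with $\cX(e_i)=S$. This also immediately bounds the number of non-trivial ST-sets: each non-trivial ST-set $S$ is, by the above, equal to $\cX(e)$ for some edge $e$ of $T_1$ (say), and a binary tree on $n$ leaves has at most $2(n-1)$ edges (in fact $2n-2$), so there are at most $2(n-1)$ non-trivial ST-sets of $Cl(\cT)$. (One should note that different ST-sets give different clusters $\cX(e)$, so the edges are distinct, and that the trivial ST-sets $\emptyset,\cX$ are excluded, which is consistent with the $2(n-1)$ count once the root edge / whole-leaf-set is discounted appropriately; a small remark handling the exact edge count suffices here.)

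The remaining and more substantial claim is that the subtree $T_i|_{e_i}$ rooted at the head of $e_i$ is \emph{the} unique tree representing exactly $Cl(\cT)|S$. I would argue in two directions. First, $T_i|_{e_i}$ represents exactly the clusters in $Cl(\cT)|S$: since $S$ is an ST-set, $Cl(\cT)|S$ is laminar (any two clusters in $\mathcal{C}|S$ are compatible by definition of ST-set), and it is a standard fact that a laminar cluster set uniquely determines a tree; so it suffices to show the cluster set of $T_i|_{e_i}$ is precisely $Cl(\cT)|S$. The inclusion ``$\supseteq$'' is the heart of the matter: take any $C\in Cl(\cT)|S$, so $C = C'\cap S$ for some $C'\in Cl(T_j)$, some $j$. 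Here I would invoke Corollary \ref{cor:stsetbinary} applied to the tree $T_j$ to locate the edge $e_j$ with $\cX(e_j)=S$, and then use compatibility of $C'$ with $S$ (no cluster of any $T_j$ crosses $S$, else $S$ would be separated) to see that $C'\cap S$ is a cluster represented by an edge inside the subtree of $T_j$ below $e_j$. So $C$ is a cluster of $T_j|_{e_j}$. But $T_j|_{e_j}$ and $T_i|_{e_i}$ are both trees on leaf-set $S$ whose cluster sets are laminar subsets of the single laminar set $Cl(\cT)|S$; by the uniqueness of the tree determined by a laminar set, and since every cluster of $T_j|_{e_j}$ lies in $Cl(\cT)|S$ which in turn consists only of clusters forced to appear, one concludes all these restricted subtrees coincide and their common cluster set is exactly $Cl(\cT)|S$. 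The inclusion ``$\subseteq$'' is easier: every cluster of $T_i|_{e_i}$ has the form $\cX(e)$ for an edge $e$ strictly below $e_i$, hence is of the form $\cX(e)=\cX(e)\cap S\in Cl(T_i)|S\subseteq Cl(\cT)|S$.

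The step I expect to be the main obstacle is the ``$\supseteq$'' inclusion above — showing that \emph{every} cluster of $Cl(\cT)|S$, including those contributed by trees $T_j$ other than $T_i$, already appears as a cluster of the subtree $T_i|_{e_i}$. The non-trivial content is that although the various subtrees $T_j|_{e_j}$ (over $j$) could a priori differ, the ST-set property forces $Cl(\cT)|S=Cl(\{T_j\})|S$ to be laminar, and a laminar set has a \emph{unique} tree realizing it; combined with Corollary \ref{cor:stsetbinary} giving the edge $e_j$ in each $T_j$, this pins all the $T_j|_{e_j}$ down to the single tree on $S$ whose clusters are $Cl(\cT)|S$, which is therefore $T_i|_{e_i}$. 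Once that identification is made, every statement in the corollary — existence, uniqueness of $e_i$, the count $2(n-1)$, and the characterization of the subtree — follows. I would present this by first isolating a short lemma-style remark that $Cl(\cT)|S$ is laminar and that for each $j$ the subtree of $T_j$ below the (unique) $S$-edge has cluster set contained in $Cl(\cT)|S$ and containing every cluster of $Cl(\cT)$ restricted to $S$, then invoking laminar-set$\leftrightarrow$tree uniqueness to finish.
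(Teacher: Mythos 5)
Your proof is correct and follows essentially the same route as the paper: existence of $e_i$ from Corollary \ref{cor:stsetbinary}, uniqueness and the $2(n-1)$ bound from the injectivity of $e\mapsto\cX(e)$ over the $2(n-1)$ edges of a single binary tree, and the subtree identification from the compatibility of $Cl(\cT)|S$. The only point to tighten is where you invoke ``uniqueness of the tree determined by a laminar set'': since the subtrees' cluster sets are a priori only \emph{subsets} of $Cl(\cT)|S$, you additionally need that a binary tree's cluster set is a \emph{maximal} compatible family on its leaf set --- which is exactly the fact the paper states as ``two non-isomorphic binary trees on the same taxa set induce at least two incompatible clusters.''
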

\begin{proof}
Given any tree $T_i  \in \cT$, $Cl(T_i)$ will contain exactly $2(n-1)$ edges and consequently $2(n-1)$ clusters. Since an ST-set is by definition unseparated, each ST-set of $Cl(\cT)$ is a subset of the cluster set $Cl(T_i)$
(for any $i$) and there are thus at most
$2(n-1)$ ST-sets. Moreover, by definition, for any ST-set $S$ of $Cl(\cT)$ we have that the set $Cl(\cT)|S$ is compatible. Since two non-isomorphic binary trees on the same taxa set induce at least two incompatible clusters, this concludes the proof.
\end{proof}

Informally Corollary \ref{cor:stedges} states that when $\cT$ consists solely of binary trees,  each non-trivial ST-set corresponds to some subtree that is common to all trees in $\cT$.


\section{Clusters obtained from sets of \emph{binary} trees on $\cX$}
\label{sec:bitrees}

Let $\cT$ be a set of binary trees on $\cX$. In this section we prove prove that, for fixed $r \geq 0$, it is possible to construct in polynomial time
all binary phylogenetic networks with reticulation number $r$ that represent $Cl(\cT)$.  We also describe our new program \textsc{Clustistic}
which implements this algorithm.  \textsc{Clustistic} is in itself an important ``proof of concept'': it has been rapidly prototyped by, building on insights from \cite{twotrees}, slightly modifying
existing software that was originally conceived to reconstruct binary level-$k$ networks not  from clusters but rooted triplets. 




Let $N$ be a network on $\cX$ and let $T'$ be some tree on $\cX' \subset \cX$. We say that $T'$ is a \emph{Subtree Below a 
Reticulation} (SBR) of $N$ if there is a reticulation node $v$ in $N$ such that no reticulation nodes $v' \neq v$ 
are reachable from $v$ by a directed path, and that the subnetwork rooted at $v$ 
(or, when $v$ has outdegree exactly 1, the child of $v$) 
is exactly equal to $T'$. It is easy to show that (by virtue of its acyclicity) every 
network contains at least one SBR 
\cite{simplicityAlgorithmica}. A simple though critical observation is:
\begin{observation}
\label{obs:SBR_ST}
If $T'$ is an SBR of a network $N$ on $\cX$ that 
represents a cluster set $\cC$, and $T'$ has taxa set $\cX'$, then $\cX'$ is an ST-set with respect to $\cC$. 
\end{observation}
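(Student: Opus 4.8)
The plan is to show directly that $\cX'$ satisfies the two defining conditions of an ST-set with respect to $\cC$: that $\cX'$ is not separated by $\cC$, and that every pair of clusters in $\cC|\cX'$ is compatible. Let $v$ be the reticulation node witnessing that $T'$ is an SBR of $N$, so the subnetwork rooted below $v$ is exactly the tree $T'$ on taxa set $\cX'$, and no reticulation other than $v$ is reachable from $v$. The first thing I would observe is that, because there are no reticulations strictly below $v$, every switching $T_N$ of $N$ restricts below $v$ to the \emph{same} tree, namely $T'$; the only choice a switching makes in this region is which single incoming edge of $v$ to keep, and that choice does not alter the subtree hanging off $v$.

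For compatibility of $\cC|\cX'$: take any cluster $C \in \cC$ and consider $C \cap \cX'$. Since $N$ represents $C$, there is some switching $T_N$ and some edge $e$ of $T_N$ with $\cX(e) = C$ (taken in $T_N$). I would split into cases according to where $e$ lies relative to $v$. If the head of $e$ is below $v$ (i.e. inside $T'$), then $C \subseteq \cX'$ and $C$ is exactly a cluster of the tree $T'$, so $C \cap \cX' = C$ is an element of $Cl(T')$. If the head of $e$ is $v$ itself or is not below $v$ at all, then either $C \supseteq \cX'$ (when $v$ is reachable from $e$'s head in $T_N$ and all of $\cX'$ descends) giving $C \cap \cX' = \cX'$, or $C$ is disjoint from the part of the network below $v$, giving $C \cap \cX' = \emptyset$ — here I use crucially that $\cX'$ appears \emph{only} below $v$, since $T'$ is the entire subnetwork rooted at $v$ and has taxa set exactly $\cX'$. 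In every case $C \cap \cX'$ is either $\emptyset$, or all of $\cX'$, or a member of $Cl(T')$. Since $Cl(T')$ is a laminar (pairwise compatible) family, and $\emptyset$ and $\cX'$ are compatible with everything, it follows that $\cC|\cX'$ is pairwise compatible. The same case analysis shows $\cX'$ is never separated: no cluster of $\cC$ can properly overlap $\cX'$ — it either misses the region below $v$ entirely, or sits entirely within $T'$, or contains all of $\cX'$.

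The only genuinely delicate point — and the step I would be most careful about — is the claim that $\cX'$ occurs in $N$ only strictly below $v$, which is what rules out a cluster straddling the boundary. This follows from the SBR definition: the subnetwork rooted at $v$ (or at $v$'s unique child, when $v$ has outdegree one) is \emph{exactly} $T'$, with taxa set exactly $\cX'$, and since leaves are uniquely labelled by $\cX$ no taxon of $\cX'$ can also appear elsewhere in $N$. Combined with the earlier observation that switchings do not change the subtree below $v$, this closes the argument; I would then conclude that $\cX'$ meets both conditions and is therefore an ST-set with respect to $\cC$. (If one wishes to be fully careful, one also notes the trivial cases $\cX' = \emptyset$ and $\cX' = \cX$ are permitted as ST-sets by the definition in Section \ref{subsec:stdefs}, so no edge case is lost.)
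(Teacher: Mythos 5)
Your proof is correct; the paper in fact states this observation without any proof, treating it as evident, and your argument is exactly the natural spelling-out of why it holds: since no reticulation other than $v$ lies below $v$, every switching leaves the subtree $T'$ intact and every directed path to a taxon of $\cX'$ passes through $v$, so each represented cluster is either disjoint from $\cX'$, contains $\cX'$, or is a cluster of the tree $T'$ --- giving both unseparatedness and compatibility of $\cC|\cX'$. Nothing is missing, and the point you flag as delicate (that $\cX'$ occurs only below $v$, ruling out straddling clusters) is indeed the crux.
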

We will 
make repeated use of this throughout the rest of the article. (Note however that in general an ST-set will not specify
a unique SBR).

Given a network $N$ with an SBR $T$
we denote by $N\setminus_{T}$ the network obtained from $N$ by deleting $T$ and for as long as necessary
applying the following tidying-up operations until they are no longer needed: deleting any node with outdegree zero that is not labelled by an element of $\cX$; suppressing all nodes with indegree and outdegree both equal to 1;
replacing multi-edges with single edges; deleting nodes with indegree-0 and outdegree-1. 

We call $(S_1, S_2, ..., S_p)$  $(p \geq 0)$
a \emph{(maximal) ST-set sequence of $\cC$} if $S_1$ is a (maximal) ST-set of $\cC$,
$S_2$ is a (maximal) ST-set of $\cC \setminus S_1$, $S_3$ is a (maximal) ST-set of $\cC \setminus S_1 \setminus S_2$
and so on. Such a sequence is additionally a \emph{tree} sequence if all the clusters in $\cC \setminus S_1 \setminus \ldots \setminus S_p$
are mutually compatible i.e. can be represented by a tree. We denote by $p$ be the \emph{length} of the sequence; $p=0$ denotes the empty tree sequence.

\begin{lemma}
\label{lem:treeseq}
Let $N$ be a network that represents some cluster set $\cC$ on $\cX$. Then there exists 
a sequence of SBRs that need to be removed to prune $N$ into a tree, and this corresponds to an ST-set tree sequence of $\cC$ 
of length $r(N)$.
\end{lemma}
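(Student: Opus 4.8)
The plan is to proceed by induction on $r(N)$, repeatedly stripping off SBRs and invoking Observation~\ref{obs:SBR_ST} to see that the taxa set of each removed SBR is an ST-set of the current cluster set. The base case is $r(N)=0$: here $N$ is already a tree, the empty sequence ($p=0$) works, and indeed $Cl(N) = \cC$ can be assumed mutually compatible (or else no tree represents $\cC$, contradicting $r(N)=0$ being well-defined for a network representing $\cC$). For the inductive step, suppose $r(N) \geq 1$. By the cited fact (every network contains at least one SBR \cite{simplicityAlgorithmica}), pick an SBR $T_1$ of $N$ with taxa set $S_1 := \cX(v)$ for the associated reticulation $v$. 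By Observation~\ref{obs:SBR_ST}, $S_1$ is an ST-set of $\cC$.

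First I would argue that $N\setminus_{T_1}$ still represents a cluster set that is ``$\cC$ restricted away from $S_1$'', in the precise sense that $N\setminus_{T_1}$ represents $\cC \setminus S_1$ as a cluster set on $\cX \setminus S_1$. The point is that $T_1$ hangs below a single reticulation $v$, so in any switching of $N$ that realises a cluster $C \in \cC$, either $C$ lies entirely inside $T_1$ (in which case $C$ restricted away from $S_1$ is empty, hence not a cluster we need), or the edges representing $C$ lie outside the subnetwork rooted at $v$ and, after deleting $T_1$ and tidying up, the corresponding edge of $N\setminus_{T_1}$ represents $C \setminus S_1$. I would make this careful by tracking a switching of $N$ through the tidying-up operations: deleting outdegree-zero unlabelled nodes, suppressing degree-$(1,1)$ nodes, collapsing multi-edges, and removing an indegree-0 outdegree-1 node all correspond to legitimate simplifications of the induced switching, and none of them destroys a cluster of $\cC$ once its $S_1$-part has been removed. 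The reticulation number drops by exactly one, since removing an SBR below a lowest reticulation deletes one reticulation edge into $v$ (or, if $v$ had indegree $\geq 3$, reduces its indegree by one, still decreasing $r$ by exactly one); hence $r(N\setminus_{T_1}) = r(N) - 1$.

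Then I would apply the induction hypothesis to $N\setminus_{T_1}$, which represents $\cC \setminus S_1$ and has reticulation number $r(N)-1$: this yields a sequence of SBRs $T_2, \ldots, T_p$ (with $p = r(N)$) whose removal prunes $N\setminus_{T_1}$ into a tree, and whose taxa sets $(S_2, \ldots, S_p)$ form an ST-set tree sequence of $\cC \setminus S_1$. Prepending $T_1$ gives the desired SBR sequence for $N$, and prepending $S_1$ gives the sequence $(S_1, S_2, \ldots, S_p)$, which by the very definition of an ST-set tree sequence (since $S_1$ is an ST-set of $\cC$ and $(S_2,\ldots,S_p)$ is an ST-set tree sequence of $\cC\setminus S_1$) is an ST-set tree sequence of $\cC$ of length $r(N)$. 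The ``tree'' condition carries over automatically since the terminal cluster set $\cC \setminus S_1 \setminus \cdots \setminus S_p$ is the same in both descriptions.

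The main obstacle I anticipate is the bookkeeping in the inductive step: verifying rigorously that $N\setminus_{T_1}$ represents exactly $\cC \setminus S_1$ (and not more clusters that would spoil later ST-set restrictions, though ``and possibly more'' is allowed by the definition of ``represents'', so this direction is harmless) and, more delicately, that no cluster of $\cC$ gets lost when passing to $N\setminus_{T_1}$. The subtle case is a cluster $C$ that is realised only by switchings that route through $v$; one must check that the edge representing such a $C$ in $N$ survives, possibly after suppression/merging, as an edge representing $C \setminus S_1$ in $N\setminus_{T_1}$. Handling the tidying-up operations one at a time, and keeping a fixed switching in hand throughout, is the cleanest way to push this through; everything else is a routine induction.
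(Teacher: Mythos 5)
Your overall strategy (peel off SBRs one at a time, use Observation~\ref{obs:SBR_ST} for the ST-set property, induct/iterate until a tree remains) is the same as the paper's, but there is a genuine gap at the step where you claim $r(N\setminus_{T_1}) = r(N)-1$. Removing an SBR does not ``delete one reticulation edge into $v$'': by the definition of $N\setminus_{T}$ the whole subtree is deleted, after which $v$ itself becomes an unlabelled outdegree-zero node and is removed together with \emph{all} of its incoming edges, and the tidying-up operations can then cascade. Already if $\delta^-(v)\geq 3$ the drop is $\delta^-(v)-1\geq 2$ (your parenthetical is wrong: the indegree is not reduced by one, the node disappears). More importantly, even in the binary case the drop can exceed one: if a parent $b$ of $v$ is itself a reticulation whose only child is $v$, then deleting the SBR forces the deletion of $b$ as well, and suppressions can create multi-edges whose collapse removes further reticulations. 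So your induction only yields an ST-set tree sequence whose length equals the number of SBR removals, which can be strictly smaller than $r(N)$ --- but the lemma asserts length \emph{exactly} $r(N)$, and this exactness is used later (e.g.\ in Theorem~\ref{thm:allbinary}, where exactly $r$ sets are guessed and dummy taxa are hung back precisely to compensate for these multi-reticulation collapses).

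The paper closes this gap with a small but essential device you are missing: when the removal of one SBR causes the reticulation number to drop by $d>1$, the sequence is padded with $d-1$ copies of the \emph{empty} ST-set (recall the paper explicitly allows $S=\emptyset$ as an ST-set, unlike \cite{cass}), so that after iterating one obtains a sequence of length exactly $r(N)$. With that padding inserted into your inductive step (induct on $r(N)$, prepend $S_1=\cX(T_1)$ followed by $d-1$ empty sets, then apply the hypothesis to $N\setminus_{T_1}$, which represents $\cC\setminus S_1$ and has reticulation number $r(N)-d$), your argument goes through; your careful treatment of why $N\setminus_{T_1}$ still represents $\cC\setminus S_1$ is fine and is in fact more explicit than the paper's one-line assertion. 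Note also that dropping the padding is exactly what distinguishes this lemma from Lemma~\ref{lem:lowerbound2}, where only an inequality $p\leq r(\cC)$ is claimed.
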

\begin{proof}
If $N$ is a tree then there is an empty tree sequence. Otherwise, $N$ has an SBR $T'$ with taxa set 
$\cX'$  
and  the network $N\setminus_{T'}$ represents the cluster set $\cC \setminus \cX'$. Clearly 
$r(N') < r(N)$. By observation \ref{obs:SBR_ST}, we let $S_1$ equal $\cX'$ and 
$d = r(N)-r(N')$. If $d>1$ then we let $S_2, \ldots, S_d$ all equal $\emptyset$, 
the empty ST-set. We use these empty ST-sets to model the situation when removing the SBR would cause multiple reticulation nodes to disappear simultaneously. 
Now, $N'$ also has at least one SBR, so we can iterate the whole process. 
We can repeat this until we obtain a tree: at this point we will have an ST-set tree sequence of length exactly 
$r(N)$.   
\end{proof}

In the following observation we make use of the fact that the operation $N \setminus_{T}$  is also meaningfully defined when the tree $T$ is  exactly the subtree rooted at some node of $N$. 

\begin{observation}
\label{obs:staysbinary}
Let $\cT = \{T_1, \ldots, T_{m}\}$ be a set of binary trees on $\cX$. Let $S$ 
be a non-trivial
ST-set of $Cl(\cT)$. Then $Cl(\cT)\setminus S = Cl(\cT')$ where $\cT'$ is a set of at most 
$m$ binary trees $\{T'_1, \ldots, T'_{m}\}$ on $\cX \setminus S$ with $T'_i= T_i\setminus T_v$, where $e_i = (u,v)$ is the edge of $T_i$ that represents $S$ (which exists by Corollary \ref{cor:stedges}) and $T_v$ is the subtree rooted at $v$. 
\end{observation}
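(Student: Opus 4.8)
The plan is to verify Observation~\ref{obs:staysbinary} by tracking, tree by tree, what happens when the ST-set $S$ is removed. Fix a tree $T_i \in \cT$. Since $S$ is a non-trivial ST-set of $Cl(\cT)$, Corollary~\ref{cor:stedges} guarantees a unique edge $e_i = (u,v)$ of $T_i$ with $\cX(e_i) = S$, and moreover the subtree $T_v$ rooted at $v$ is the unique tree representing exactly $Cl(\cT)|S$. Define $T'_i := T_i \setminus T_v$, i.e.\ delete the whole pendant subtree $T_v$ and then tidy up (suppress the now-degree-2 node $u$, delete any resulting indegree-0/outdegree-1 root edge). Because $T_i$ is binary, removing the pendant subtree below one child of $u$ leaves $u$ with a single remaining child, so suppressing $u$ yields again a binary tree on leaf set $\cX \setminus S$; if $u$ was the root of $T_i$ a further tidy-up step removes the resulting length-one root path, still leaving a binary tree. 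So each $T'_i$ is a well-defined binary tree on $\cX \setminus S$, and there are at most $m$ of them.

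The substantive point is the cluster identity $Cl(\cT) \setminus S = Cl(\cT') = \bigcup_i Cl(T'_i)$. I would argue this inclusion by inclusion, tree by tree, reducing it to the single-tree claim $Cl(T_i) \setminus S = Cl(T'_i)$ for each $i$ (after which taking unions over $i$ gives the result, since $Cl(\cT)\setminus S = \bigcup_i (Cl(T_i)\setminus S)$). For the single-tree claim: the clusters of $T_i$ are exactly the sets $\cX(w)$ as $w$ ranges over non-root nodes of $T_i$. Split these nodes into three groups — those strictly inside $T_v$ (their clusters are subsets of $S$, so after deleting $S$ they become clusters that are subsets of $S$ minus $S$; one has to check these are all already present among the clusters coming from nodes outside $T_v$, or are singletons/empty and hence irrelevant), the node $v$ itself (with $\cX(v)=S$, which maps to $\emptyset$ under $\setminus S$ and is discarded), and nodes $w$ not in $T_v$. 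For a node $w$ not below $v$, either $v$ is not a descendant of $w$, in which case $\cX(w)$ is disjoint from $S$ and $\cX(w)\setminus S = \cX(w)$ survives unchanged as a cluster of $T'_i$; or $v$ is a descendant of $w$ (here $w$ could be $u$ or an ancestor of $u$), in which case in $T'_i$ the corresponding node $w'$ has $\cX(w') = \cX(w)\setminus S$, again exactly the image. The only subtlety is the node $u$: after suppression it disappears, but its cluster $\cX(u)\setminus S$ equals $\cX(u')\setminus S$ for $u$'s surviving child's images, or equals the cluster of $u$'s parent minus $S$ if $u$ was the root — in all cases it is still represented in $T'_i$ or is trivial. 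Conversely every edge of $T'_i$ arises by this correspondence from an edge of $T_i$, giving the reverse inclusion.

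The main obstacle — really the only thing requiring care — is the bookkeeping around the suppressed node $u$ and around clusters that lie entirely inside $S$. One must make sure that deleting $T_v$ and suppressing $u$ does not create a cluster of $T'_i$ that was not a cluster of $T_i$ (it does not, because suppression only merges two edges whose shared cluster was $\cX(u)$, which restricted to $\cX\setminus S$ still equals the cluster of the merged edge), and does not lose any cluster of $Cl(T_i)\setminus S$ that should persist (the only losses are $S$ itself, mapping to $\emptyset$, and duplicates). I would handle this by the explicit node-correspondence described above rather than by a slicker abstract argument, and I would remark that the ``at most $m$'' (rather than exactly $m$) accounts for the possibility that distinct $T_i, T_j$ yield isomorphic $T'_i = T'_j$, or that some $T'_i$ degenerates (which it cannot here, since $S$ non-trivial means $\cX\setminus S$ still has at least two taxa whenever $|\cX\setminus S|\ge 2$; if $|\cX\setminus S| \le 1$ the statement is vacuous). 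Finally I would note the parenthetical remark in the observation: this is exactly the $N\setminus_T$ operation of the preceding paragraph applied with $N = T_i$ a tree and $T = T_v$ the subtree rooted at a node, so the notation $T_i \setminus T_v$ is consistent with the earlier definition.
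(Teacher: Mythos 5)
Your argument is correct; note that the paper states Observation~\ref{obs:staysbinary} without any proof, treating it as immediate from the definition of $T_i\setminus T_v$ and Corollary~\ref{cor:stedges}, and your edge/node correspondence (handling the suppressed node $u$, the root case, and clusters inside $S$ collapsing to $\emptyset$) is exactly the routine verification the paper leaves implicit. No gaps beyond the same boundary conventions (empty clusters, the full taxon set $\cX\setminus S$ when $u$ is the root) that the paper itself glosses over.
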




\begin{figure}[t]
  \centering
  \includegraphics[scale=.15]{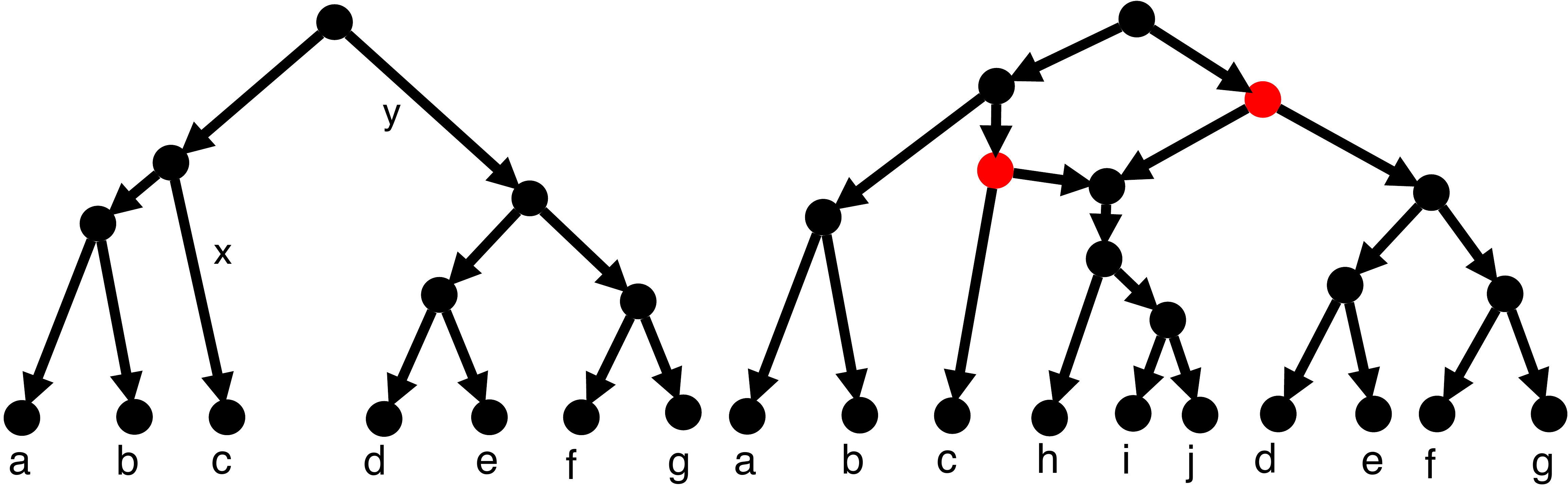}
  \caption{Let $\cC$ be some set of clusters represented by the network on the right. If we guess that ST-set $S=\{h,i,j\}$ corresponds to an SBR and remove it, we obtain
the tree on the left. To reverse the process we add (a tree corresponding to) $S$ below a reticulation node whose incoming edges subdivide edges $x$ and $y$.
}
  \label{fig:hangback}
\end{figure}

\begin{figure*}
 \centering
\begin{tabular}{c}
\begin{tabular}{ccccc}
\includegraphics[scale=0.16]{./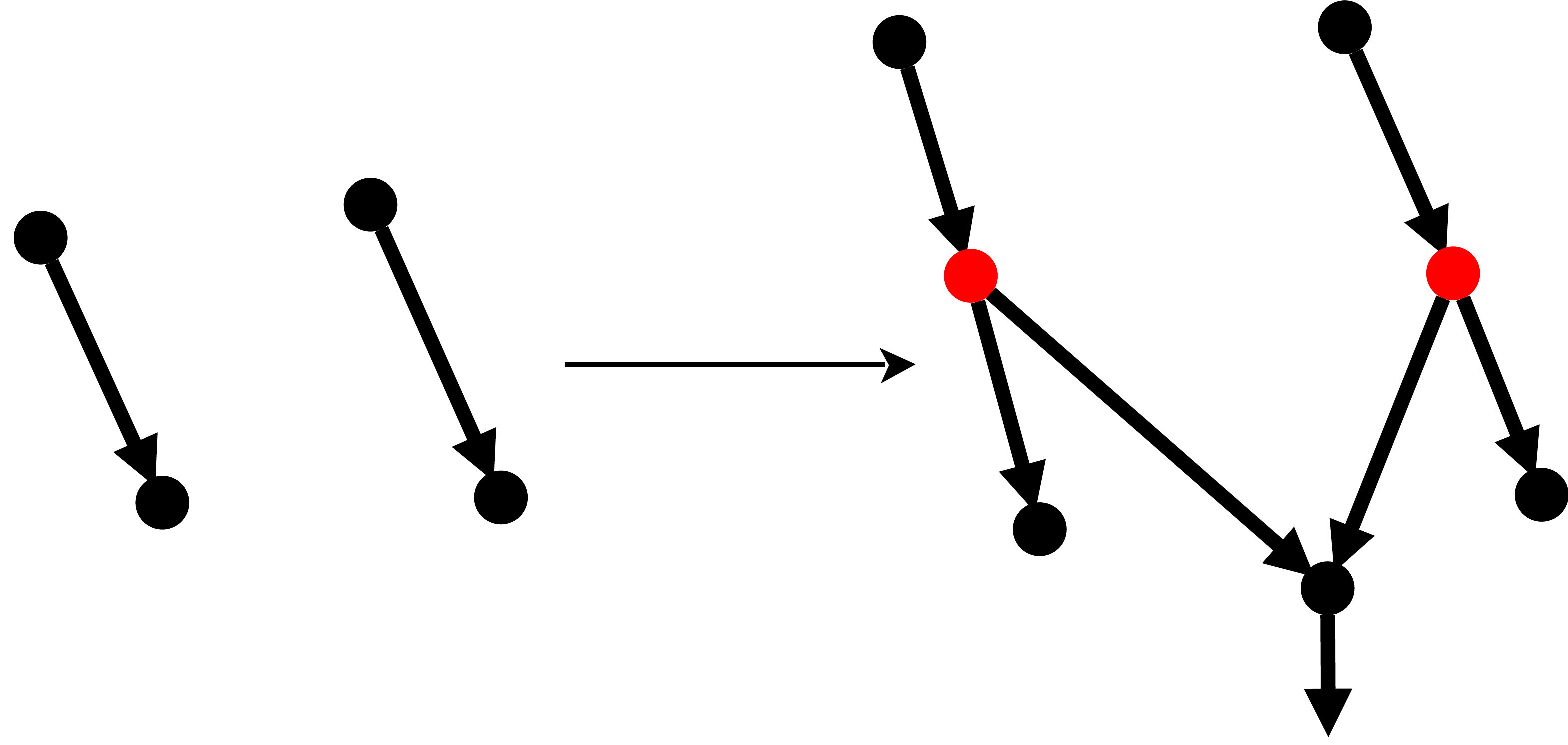} &
& 
\includegraphics[scale=0.16]{./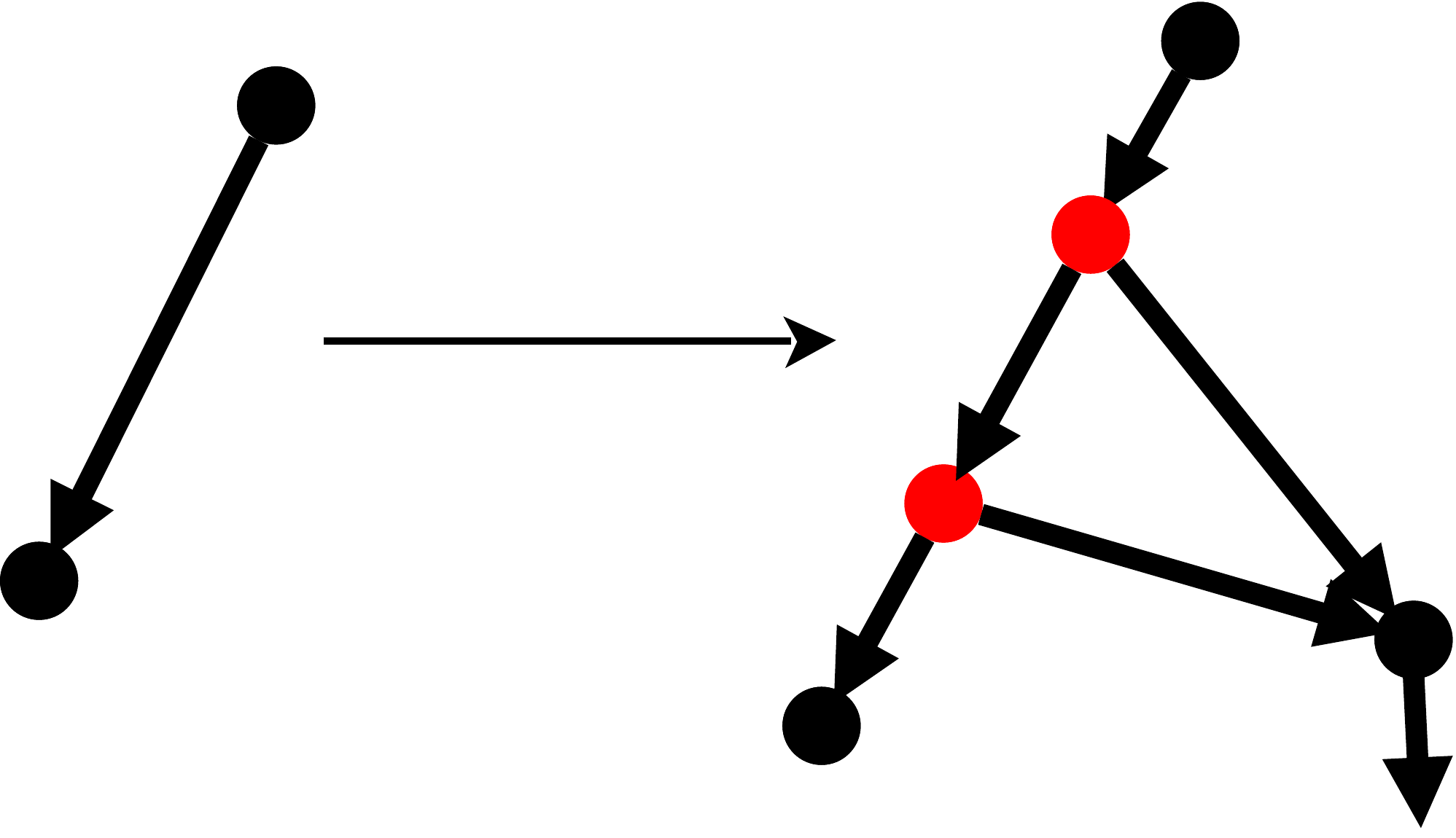} &
& 
\includegraphics[scale=0.16]{./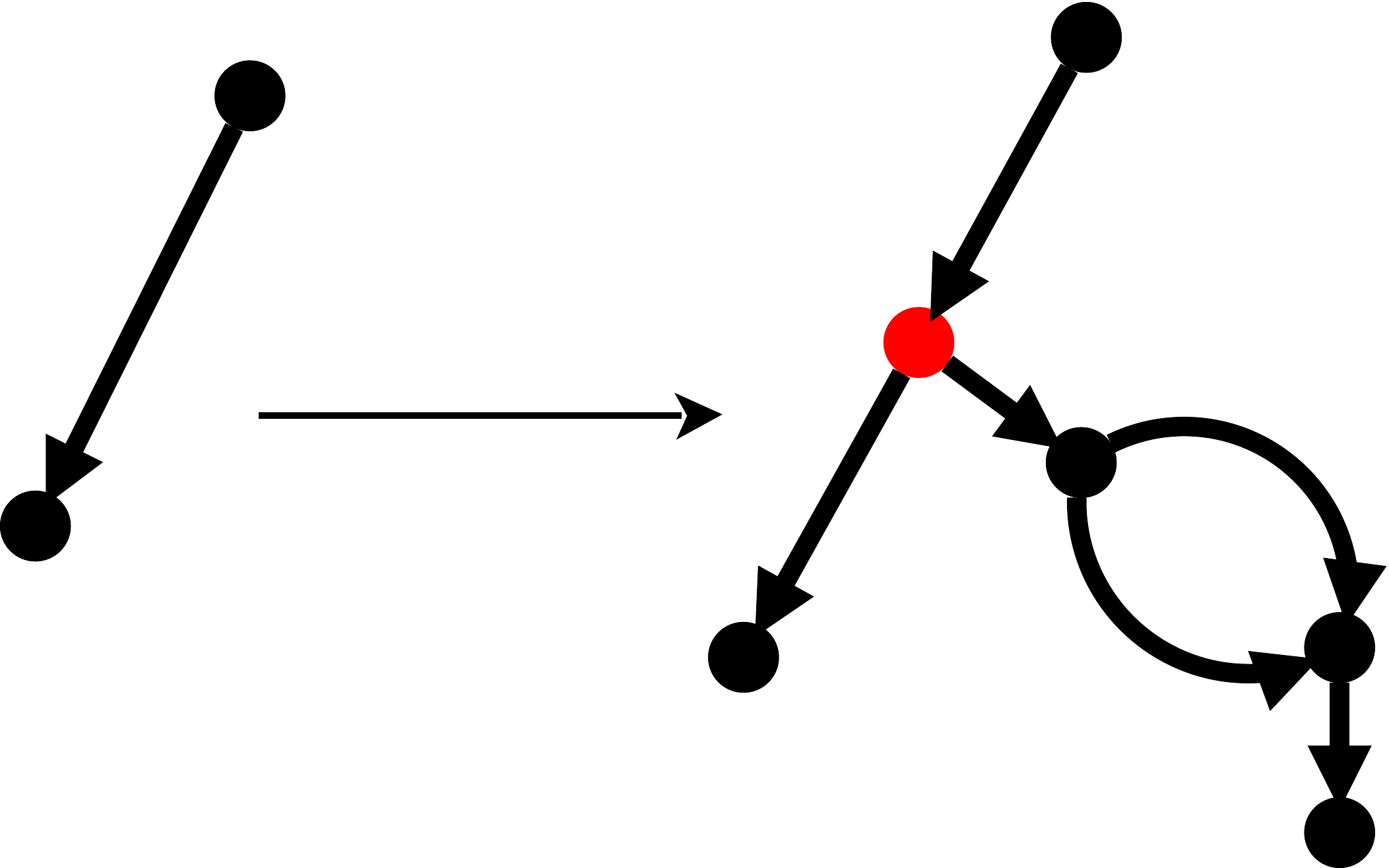} \\
(a) &   & (b) &   & (c)\\
\end{tabular}
\end{tabular}
\hfil
\caption{
The different ways of adding a reticulation back into a network, as discussed in the proof of Theorem \ref{thm:allbinary}. (a) Two different edges are subdivided; (b) one edge is subdivided twice; (c) one edge is subdivided once, under which a multi-edge is placed.
}\label{fig:hangtypes}
\end{figure*}

\begin{theorem}
\label{thm:allbinary}
Let $\cT = \{T_1, \ldots, T_{m}\}$ be a set of binary trees on $\cX$. Then for a constant $r \geq 0$ it is 
possible to construct in polynomial time \emph{all} binary networks with reticulation number at most $r$ that represent 
$Cl(\cT)$ (if any exist).
\end{theorem}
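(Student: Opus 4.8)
The plan is to prove Theorem~\ref{thm:allbinary} by a recursive ``peel-off-an-SBR'' argument that mirrors Lemma~\ref{lem:treeseq} but runs in reverse, combined with the structural control over ST-sets provided by Corollaries~\ref{cor:stedges} and~\ref{cor:stsetbinary} and Observation~\ref{obs:staysbinary}. First I would define a recursive procedure $\mathrm{AllNetworks}(\cC,r)$ that returns all binary networks with reticulation number at most $r$ representing $\cC$. The base case is $r=0$: a binary network with reticulation number $0$ representing $\cC$ exists if and only if $\cC$ is laminar, in which case it is the unique tree on $\cX$ inducing exactly $\cC$ (adding missing singletons at the root); otherwise return the empty set. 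For $r\geq 1$, observe via Observations~\ref{obs:SBR_ST} and the proof of Lemma~\ref{lem:treeseq} that \emph{every} network $N$ with $r(N)\leq r$ representing $\cC$ has an SBR $T'$ whose taxa set $\cX'$ is a \emph{non-trivial} ST-set of $\cC$ (non-trivial because $\cX'\subsetneq\cX$ when $r\geq 1$, and nonempty). By Corollary~\ref{cor:stedges} there are at most $2(n-1)$ non-trivial ST-sets of $\cC = Cl(\cT)$, they are all computable in polynomial time, and each such ST-set $S$ determines a \emph{unique} candidate subtree $T'_S$ (the subtree representing $Cl(\cT)|S$). So we branch over the at most $2(n-1)$ choices of $S$.

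For each guessed ST-set $S$, the network $N\setminus_{T'}$ represents $\cC\setminus S$, which by Observation~\ref{obs:staysbinary} equals $Cl(\cT')$ for an explicitly computable set $\cT'$ of at most $m$ binary trees on $\cX\setminus S$, and $r(N\setminus_{T'}) \leq r-1$ (at least one reticulation is destroyed; possibly more, which we account for by allowing the recursion to return networks of strictly smaller reticulation number, exactly as the empty ST-sets in Lemma~\ref{lem:treeseq} do). Recursively call $\mathrm{AllNetworks}(Cl(\cT'), r-1)$ to obtain all binary networks $N^{-}$ with $r(N^{-})\leq r-1$ representing $Cl(\cT')$. It then remains to \emph{reattach} $T'_S$ below a new reticulation in every possible way and keep those reattachments that (i) yield a binary network with reticulation number at most $r$ and (ii) still represent $\cC$. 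The reattachment step is where the three cases of Figure~\ref{fig:hangtypes} come in: a new reticulation node $\rho$ with $T'_S$ hung beneath it is inserted by either subdividing two distinct edges of $N^{-}$ (Figure~\ref{fig:hangtypes}(a)), subdividing a single edge of $N^{-}$ twice (Figure~\ref{fig:hangtypes}(b)), or subdividing one edge once and placing a multi-edge (Figure~\ref{fig:hangtypes}(c)); since $N^{-}$ has $O(n)$ edges, there are $O(n^2)$ such reattachments, each checkable in polynomial time for representing $\cC$ (which is legitimate since $|\cC|=|Cl(\cT)|\leq 2m(n-1)$, polynomial in the input). We collect the union over all $S$ and over all reattachments, and then deduplicate (isomorphic networks may arise from different peeling orders).

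Correctness has two directions. Soundness: every network produced genuinely has reticulation number at most $r$ (enforced at each reattachment) and represents $\cC$ (checked explicitly). Completeness: given any target binary network $N$ with $r(N)\leq r$ representing $\cC$, pick an SBR $T'$ of $N$; its taxa set is a non-trivial ST-set $S$ (Observation~\ref{obs:SBR_ST}, plus the fact that for binary $\cT$ the SBR subtree is forced to be $T'_S$ by Corollary~\ref{cor:stedges}, so the SBR actually appearing in $N$ coincides with our candidate); $N\setminus_{T'}$ represents $Cl(\cT')$ and has reticulation number at most $r-1$, so by induction it is among the networks returned by the recursive call; and $N$ is recovered from $N\setminus_{T'}$ by one of the reattachment operations of Figure~\ref{fig:hangtypes} (this is the standard ``adding a reticulation back'' case analysis, e.g.\ as used for triplet-based level-$k$ constructions). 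The running time is polynomial for fixed $r$: the recursion tree has depth $r$ and branching factor $O(n)$ from the ST-set choice times $O(n^2)$ from reattachment, so $O((n^3)^r)=n^{O(r)}$ nodes, each doing polynomial work, and the final deduplication over $n^{O(r)}$ networks is polynomial as well.

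The main obstacle I expect is the completeness argument's subtle point: showing that the SBR of the target network $N$ is \emph{necessarily} one of our polynomially-many candidate subtrees $T'_S$, rather than merely having an ST-set as taxa set. This is exactly where binariness of $\cT$ is essential — Corollary~\ref{cor:stedges} tells us that for $\cC=Cl(\cT)$ with $\cT$ binary, the ST-set $S$ rigidly determines the subtree on $S$ (it is the unique tree inducing $Cl(\cT)|S$), so the SBR in $N$, whose induced cluster set on $\cX'$ must be $Cl(\cT)|\cX'$, has no freedom. A secondary technical nuisance is correctly handling the ``multiple reticulations vanish at once'' phenomenon (the empty-ST-set trick of Lemma~\ref{lem:treeseq}): this is why the recursion must return \emph{all} networks of reticulation number \emph{at most} $r-1$, not exactly $r-1$, and why the reattachment step must re-verify the reticulation number from scratch rather than assuming it increases by exactly one. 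Finally one should check that the base case and each recursive level preserve the invariant $\cC = Cl(\cT)$ for an explicit binary $\cT$ — but this is precisely the content of Observation~\ref{obs:staysbinary}, so no new work is needed there.
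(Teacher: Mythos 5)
Your overall skeleton — guess the ST-set that is peeled off as an SBR (polynomially many candidates by Corollary~\ref{cor:stedges}), recurse on $Cl(\cT')$ via Observation~\ref{obs:staysbinary}, and hang the unique subtree $T'_S$ back below one new reticulation in all $O(n^2)$ ways, filtering by an explicit representation check — is the same strategy as the paper's proof. The genuine gap is in your completeness step, precisely at the point you flag as a ``secondary technical nuisance'': the claim that the target $N$ ``is recovered from $N\setminus_{T'}$ by one of the reattachment operations of Figure~\ref{fig:hangtypes}''. Every one of those operations increases the reticulation number by exactly one, so whenever removing the SBR $T'$ (with the subsequent tidying-up) destroys two or more reticulations, no single reattachment applied to the \emph{tidied, valid} network $N\setminus_{T'}$ can return $N$. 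Your fix — letting the recursive call return networks of reticulation number at most $r-1$ rather than exactly $r-1$ — guarantees that $N\setminus_{T'}$ appears in the recursive output, but it does nothing to bridge the gap of two or more reticulations on the way back up. This is not a vacuous worry, and it cannot be dodged by choosing a better SBR: take $\cX=\{a,b,c\}$, $\cT$ the single tree $((a,b),c)$, and the binary network $N$ with root $\to\{q,c\}$, $q\to\{p_1,p_2\}$, $p_1\to\{v,w\}$, $p_2\to\{v,w\}$, $v\to a$, $w\to b$. This is a valid binary network with $r(N)=2$ representing $Cl(\cT)$, so for $r=2$ the theorem obliges the algorithm to output it; yet its only SBRs are the leaves $a$ and $b$, and removing either one cascades (suppression of $p_1,p_2$, collapse of the resulting multi-edge into $w$, suppression of $w$ and $q$) down to the cherry $(b,c)$ with reticulation number $0$. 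So $N$ is not ``a valid network plus one reticulation'' for any decomposition, and your procedure never generates it.

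The paper's proof avoids this by \emph{not} requiring the intermediate objects to be valid networks representing the intermediate cluster sets: the ST-set tree sequence is padded with empty ST-sets to length exactly $r$, each insertion adds exactly one reticulation, empty insertions hang a dummy-taxon placeholder whose outgoing edge later insertions may subdivide, and the case-(c) insertion deliberately creates a multi-edge that a later insertion must resolve (with any leftover dummies or multi-edges causing rejection only at the very end, when $N_0$ is inspected). Your recursion has no counterpart to either mechanism: you branch only over non-empty non-trivial ST-sets, and the contract of $\mathrm{AllNetworks}(Cl(\cT'),r-1)$ forces its output to consist of genuine multi-edge-free networks representing $Cl(\cT')$, which excludes exactly the intermediate structures (such as the double edge $(q,w)$ above $b$ in the example) needed to reach networks like $N$. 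To repair the argument you would have to either enlarge the class of objects passed between recursion levels to include these degenerate intermediates (and add an ``insert an empty SBR'' branch), or restructure the algorithm around a fixed length-$r$ padded sequence as the paper does; as written, the completeness direction fails for the ``all networks'' claim, though the weaker claim that \emph{some} minimum-reticulation network is found would survive.
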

\begin{proof}
Without loss of generality assume we wish to construct all such networks with reticulation number \emph{exactly} $r$. 
Let us suppose that at least one such network $N$ exists.  
By Lemma \ref{lem:treeseq} there is an ST-set tree sequence 
$\cS= (S_1, \ldots, S_{r})$ 
for $Cl(\cT)$ and this corresponds to a sequence of SBRs that, when removed, will prune $N$ into a tree. Let $|X|=n$. Now, note 
that by Observation \ref{obs:staysbinary} and Corollary \ref{cor:stedges} there are at most $O( n^{r} )$ ST-set tree 
sequences, which is polynomial in $n$ for fixed constant $r$. $\cS$ will be one of these, so we can find
$\cS$ in polynomial time. It remains to show how, assuming we have found $\cS$, we can reconstruct $N$.
First, note  that for a binary tree $T$ on $\cX$, $T$ is the \emph{unique} tree on $\cX$ that represents $Cl(T)$. The
clusters that still remain after removing $S_r$ can be represented by a tree, and in particular (by Observation \ref{obs:staysbinary})
by a unique binary tree. We call this tree $N_r$. We want to obtain $N_{r-1}$ by inserting (a tree corresponding to) $S_{r}$ into $N_{r}$. In particular,
we wish to introduce a new reticulation node into $N_r$ 
below which a tree $T$ (itself
binary and unique by Corollary \ref{cor:stedges}) that represents
$S_r$ will be attached.
We have two possibilities to do this: we subdivide two (not necessarily distinct) edges and use these as the tails of the new reticulation edges, see Figures \ref{fig:hangtypes}(a) and (b), or we subdivide one edge \emph{once} and use two \emph{identical} reticulation edges, see Figure \ref{fig:hangtypes}(c). Note that, if we only subdivide one edge once, we actually create a multi-edge and by our definition of phylogenetic network such edges are not allowed. However it might be necessary to create such multi-edges during \emph{intermediate} iterations to ensure that all phylogenetic networks, including ``redundant'' ones, are constructed.

Unfortunately we do not know in general which edge(s) of $N_{r}$ to
subdivide to create the new connection point(s). However, there are only polynomially many edges in $N_r$, so we simply try them all. Then we repeat the process, inserting $S_{r-1}$ into $N_{r-1}$ to obtain 
$N_{r-2}$, and so on until we have obtained $N_0$. Given that $r$ is a constant we can test in polynomial time whether $N_0$ represents
all the clusters in $\cC$ (see proof of Lemma \ref{lem:core}). 

Just as in the similar algorithm described in \cite{simplicityAlgorithmica} there are several slight technicalities that
should be noted.
Whenever some $S_i = \emptyset$ we use a ``dummy'' taxon (i.e. some taxon not in $\cX$) as the tree that we attach below a reticulation. The function of this is to ensure that subsequent
iterations can subdivide the edge leaving the reticulation i.e. it is a placeholder. (This will be necessary when the removal of a single SBR caused the disappearance of two or more reticulations). These 
dummy taxa can 
be removed just before $N_0$ is inspected to
check whether it represents $\cC$. Any $N_0$ that at this point still contains a dummy taxon whose parent is a reticulation, should be rejected, because it means at least one reticulation was not used. Secondly, when we construct the tree $N_r$ we 
actually add a ``dummy root'' which is simply a new node $\rho'$ and a single edge from $(\rho', \rho)$ where $\rho$ is the root of $N_r$. This deals with the situation when the removal
of some SBR caused the current root to disappear and a new root to take its place. At the end, $\rho'$ and the edge leaving it should be removed. Finally,
note that any multi-edges created by intermediate iterations of the algorithm should all have disappeared (i.e. have been subdivided by reticulation edges) by the time $N_0$ has been reached; for
this reason we reject any $N_0$ that still contains multi-edges. 

The network $N$ we would like to reconstruct will eventually be found as some $N_0$, and given that we made no assumptions about $N$ this shows that
the algorithm constructs all possible $N$.

\end{proof}

\begin{corollary}
\label{thm:allsimple}
Let $\cT = \{T_1, \ldots, T_{m}\}$ be a set of binary trees on $\cX$. Then for a constant $k \geq 0$ it is possible
to construct in polynomial time \emph{all} binary simple level-$\leq k$  networks that represent $Cl(\cT)$ (if any exist).
\end{corollary}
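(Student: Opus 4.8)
The plan is to derive this corollary from Theorem~\ref{thm:allbinary} by combining it with the structural results on simple networks and generators developed earlier. The key observation is that a binary simple level-$\leq k$ network that represents $Cl(\cT)$ has reticulation number at most $k$ (since it is simple, all its reticulations lie in a single biconnected component), so every such network appears somewhere in the output of the algorithm from Theorem~\ref{thm:allbinary} applied with $r=k$. Hence the first step is simply to run that algorithm to obtain the (polynomially large) list $\mathcal{L}$ of all binary networks with reticulation number at most $k$ that represent $Cl(\cT)$.

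The second step is to filter $\mathcal{L}$, keeping only those networks that are simple level-$\leq k$. For this I would use the characterization of simple networks given in the preliminaries: a level-$\leq k$ network $N$ is simple precisely when the removal of any cut-node or cut-edge leaves at most one non-trivial connected component. This is a property that can be tested in polynomial time in the size of $N$ (compute the biconnected components, check that there is at most one non-trivial one, i.e. that all reticulations occur in a single block, and check that each cut-edge $(u,v)$ has $|\cX(v)|=1$ or $\cX(v)=\cX$, exactly in the spirit of Observation~\ref{obs:nocutfreedom}). Since $|\mathcal{L}|$ is polynomial and each test is polynomial, the overall procedure remains polynomial time. Retaining exactly the networks that pass the test yields all binary simple level-$\leq k$ networks representing $Cl(\cT)$, and by the ``constructs all possible $N$'' conclusion of Theorem~\ref{thm:allbinary} none is missed.

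The one technical point that needs a little care, and which I expect to be the main (minor) obstacle, is the treatment of multi-edges and dummy taxa that arise during intermediate iterations of the algorithm of Theorem~\ref{thm:allbinary}: one must be sure that a genuine binary simple level-$\leq k$ network in the sense of the paper's definitions is correctly recognized and not accidentally discarded by the same rejection rules (multi-edge rejection, dummy-taxon rejection) that are applied at the end of that algorithm. Since those rejection rules are applied only to the final networks $N_0$ and precisely filter out the degenerate objects, a network that is actually a legitimate binary simple level-$\leq k$ network survives them, so this is a matter of bookkeeping rather than a real difficulty. With that checked, the proof is complete; one may simply write that the result follows by running the algorithm of Theorem~\ref{thm:allbinary} with $r=k$ and discarding every output network that is not simple of level $\leq k$, a test that takes polynomial time per network.
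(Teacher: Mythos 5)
Your proposal is correct and follows essentially the same route as the paper: the paper's proof is the one-line observation that each network output by the algorithm of Theorem~\ref{thm:allbinary} can be checked in polynomial time for being simple (biconnected), with the fact that a simple level-$\leq k$ network has reticulation number at most $k$ left implicit. Your write-up just makes that bridging observation and the filtering test explicit.
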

\begin{proof}
For each network produced by the algorithm described in Theorem \ref{thm:allbinary} we can easily check in polynomial time whether it is biconnected. 

\end{proof}

It is worth noting at this stage an important link with the rooted triplet literature. Recall the following proposition and 
lemma from \cite{twotrees}, an article in which the relationship between trees, clusters and triplets was discussed more broadly. Proposition
\ref{prop:clustrip} refers to not necessarily binary trees.

\begin{proposition}
\label{prop:clustrip}
(Van Iersel, Kelk \cite{twotrees}) For any set~$\cT$ of trees on the same set~$\cX$ of taxa, any phylogenetic 
network on~$\cX$ representing~$Cl(\cT)$ is consistent with~$Tr(\cT)$.
\end{proposition}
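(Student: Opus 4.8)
The plan is to show that if a phylogenetic network $N$ on $\cX$ represents every cluster in $Cl(\cT)$, then $N$ is consistent with every triplet in $Tr(\cT)$. By definition of $Tr(\cT)$, it suffices to fix an arbitrary tree $T \in \cT$ and an arbitrary triplet $xy|z$ consistent with $T$, and prove that $xy|z$ is displayed by $N$. So the real content is a statement about a single tree $T$: if $N$ represents all clusters in $Cl(T)$, then $N$ displays every triplet consistent with $T$.

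First I would recall the tree-side fact that a triplet $xy|z$ is consistent with $T$ precisely when there is a cluster $C \in Cl(T)$ with $x,y \in C$ and $z \notin C$ — indeed one can take $C$ to be the cluster of the lowest common ancestor of $x$ and $y$ in $T$, which does not contain $z$ exactly when $xy|z$ is the triplet induced by $T$ on $\{x,y,z\}$. So fix such a cluster $C$. Since $N$ represents $C$ in the softwired sense, there is a switching $T_N$ of $N$ and an edge $(u,v)$ of $T_N$ such that $C = \cX(v)$ in $T_N$; that is, $C$ is exactly the set of leaves reachable from $v$ in the tree $T_N$.

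The key step is then to extract the triplet from $T_N$. In $T_N$ we have $x, y \in \cX(v)$ and $z \notin \cX(v)$, so the paths from the root of $T_N$ to $x$ and to $y$ both pass through $v$, while the path to $z$ does not. Consequently the lowest common ancestor of $x$ and $y$ in $T_N$ lies on or below $v$, hence is a strict descendant of (or equal to) $v$, whereas the lowest common ancestor of $\{x,y,z\}$ in $T_N$ lies strictly above $v$. Restricting $T_N$ to $\{x,y,z\}$ and suppressing degree-two nodes therefore yields exactly the triplet $xy|z$. Since $T_N$ is a switching of $N$, it is displayed by $N$ (delete, for each reticulation, the incoming edges not retained by the switching, then clean up), and any triplet displayed by a tree displayed by $N$ is displayed by $N$. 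Hence $xy|z$ is consistent with $N$.

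The main obstacle — really the only place one has to be careful — is the bookkeeping connecting a switching $T_N$ of $N$ to the notion of $N$ \emph{displaying} $T_N$, and the transitivity ``$N$ displays $T_N$ and $T_N$ displays $xy|z$ implies $N$ displays $xy|z$'', since displaying is defined via edge/node deletions followed by suppression of indegree-one outdegree-one nodes; one must check a switching is obtained by such operations (it is, by construction) and that composing two such reduction sequences is again such a sequence. Everything else is a routine argument about ancestors in a tree. I would present the argument at roughly the level of detail above, citing the characterization of tree-induced triplets via clusters as well known and leaving the display-transitivity as a straightforward verification.
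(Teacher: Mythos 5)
Your argument is correct: the reduction to a single tree, the characterization of $xy|z\in Tr(T)$ by the existence of a cluster $C\in Cl(T)$ with $x,y\in C$, $z\notin C$ (valid also for non-binary $T$), and the extraction of the triplet from a tree displayed by $N$ that represents $C$ are exactly the standard route, and the two points you flag (that a switching yields a displayed tree, and transitivity of displaying) are routine to verify — indeed you could bypass switchings entirely by invoking the displayed tree given directly by the softwired definition. Note that the paper itself states this proposition without proof, citing \cite{twotrees}, and your argument is essentially the one given there.
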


\begin{lemma}
\label{lem:binary2}
(Van Iersel, Kelk \cite{twotrees}) Let $N$ be a phylogenetic network on $\mathcal{X}$ and $\mathcal{T}$ a set of binary trees on 
$\mathcal{X}$. Then there exists a
binary phylogenetic network $N'$ on $\mathcal{X}$ such that (a) $N'$ has the same reticulation number and level as $N$, (b) if $N$
displays all trees in $\mathcal{T}$ then so too does $N'$, (c) if $N$ is consistent with $Tr(\mathcal{T})$ then so too is $N'$ and
(d) if $N$ represents $Cl(\mathcal{T})$ then so too does $N'$.
\end{lemma}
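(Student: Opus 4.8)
The plan is to reuse the local binary-resolution construction that establishes Lemma~\ref{lem:transfBinary}, and to show it does slightly more than that lemma asserts: besides preserving reticulation number, level and the set of represented clusters, it also preserves the property ``displays a given binary tree''. Recall that the construction repeatedly resolves a non-binary vertex by one of three elementary local moves: (i) a vertex $v$ with $\delta^-(v)\ge 2$ and outdegree $\ge 2$ is split into a reticulation vertex $v_r$ carrying exactly the incoming edges of $v$, a single new edge $v_r\to v_t$, and a tree vertex $v_t$ carrying exactly the outgoing edges of $v$; (ii) a reticulation $v$ with $\delta^-(v)=d\ge 3$ (and, after (i), outdegree $1$) is replaced by a directed path $w_1\to w_2\to\cdots\to w_{d-1}$ of indegree-$2$ reticulations into which the $d$ incoming edges of $v$ are ``funnelled'', with $w_{d-1}$ inheriting the outgoing edge of $v$; (iii) a tree vertex of outdegree $d\ge 3$ is replaced by an arbitrary binary caterpillar on its $d$ children. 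After finitely many such moves the result is a binary network $N'$ on $\cX$, and Lemma~\ref{lem:transfBinary} already guarantees $r(N')=r(N)$, $\ell(N')=\ell(N)$, and that every cluster represented by $N$ is represented by $N'$; in particular parts (a) and (d) are immediate, since if $N$ represents $Cl(\cT)$ then so does $N'$.

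The remaining content is the claim: \emph{if $N$ displays a binary tree $T'$ on a set $\cX'\subseteq\cX$, then $N'$ displays $T'$.} I would prove this by induction on the number of resolution moves, so it suffices to treat a single move. An embedding of $T'$ into $N$ can be viewed as a subgraph $S$ of $N$ which, after deleting the leaves not in $\cX'$ and suppressing degree-$2$ vertices, equals $T'$; since $T'$ is binary, at each vertex $v$ of $N$ the subgraph $S$ uses at most one incoming edge and at most two outgoing edges of $v$. For move (i): the incoming edge of $v$ used by $S$ (if any) and the outgoing edges of $v$ used by $S$ are all still present after the split, linked either directly or through the new degree-$2$ vertex $v_r$, so $S$ translates verbatim. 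Move (ii) is analogous: whichever single incoming edge $e_i$ of $v$ is used by $S$ still reaches the outgoing edge of $v$ along the chain $w_{i-1}\to\cdots\to w_{d-1}$, and all intermediate chain vertices have indegree and outdegree $1$ in the resulting subgraph, hence are suppressed. For move (iii): suppose $S$ uses outgoing edges $(v,c_i)$ and $(v,c_j)$ of $v$ with $i<j$ (if $S$ uses at most one outgoing edge of $v$ the argument is only easier). In the caterpillar $v=u_1\to u_2\to\cdots\to u_{d-1}$, place the branch point at $u_i$: one branch is $u_i\to c_i$, the other is the path $u_i\to u_{i+1}\to\cdots\to u_j\to c_j$ (or $\cdots\to u_{d-1}\to c_j$ when $j\in\{d-1,d\}$), whose internal vertices have indegree and outdegree $1$ in the resulting subgraph and are therefore suppressed. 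Hence $T'$ is displayed by $N'$.

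Finally I would assemble the parts. Applying the claim with $T'=T$ for each binary tree $T\in\cT$ shows that if $N$ displays all trees in $\cT$ then so does $N'$, giving~(b). Applying it with $T'$ equal to each triplet $xy|z\in Tr(\cT)$, which is a binary phylogenetic tree on three taxa, shows that if $N$ is consistent with $Tr(\cT)$ then so is $N'$, giving~(c). Together with (a) and (d) this completes the proof.

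Once Lemma~\ref{lem:transfBinary} is in hand there is no single heavy step here; the only point in the new argument that needs genuine care is the re-routing in move~(iii), i.e.\ checking that the branch point of $T'$ sitting at $v$ can always be pushed down the caterpillar to $u_i$ and that this does not collide with the rest of $S$ --- which works precisely because $T'$ binary forces $S$ to use at most two children-edges of $v$. The technically heaviest ingredient overall --- verifying that moves (i) and especially (ii) preserve the \emph{level} exactly (in (ii), that all $d-1$ new reticulations stay in the biconnected component of the reticulation they replace, and in (i) and (iii) that no biconnected component carrying the maximum reticulation number is split off or merged) --- is supplied by Lemma~\ref{lem:transfBinary} and so can simply be cited.
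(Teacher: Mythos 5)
Your proposal is correct. Note that this paper does not actually prove Lemma~\ref{lem:binary2} --- it is imported verbatim from \cite{twotrees} --- but the argument you give is essentially the intended one: it reuses the local resolution moves from the in-paper proof of Lemma~\ref{lem:transfBinary} (which that proof itself attributes to Lemma~2 of \cite{twotrees}) and adds the one genuinely new observation, namely that because each $T'$ is binary, an embedding of $T'$ uses at most one incoming and at most two outgoing edges at any vertex of $N$, so it survives every splitting, chain substitution, and $Q'$-refinement regardless of which binary resolution is chosen. Your deferral of the level-preservation bookkeeping to Lemma~\ref{lem:transfBinary} is legitimate, with the small caveat that the caterpillar in your move~(iii) is then not truly ``arbitrary'' but must be the carefully ordered sequence of $Q'$-refinements from that lemma's proof; since your displayed-tree argument works for \emph{any} binary refinement of a high-degree tree node, this does not affect correctness.
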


Suppose  that we have an Algorithm $A$ which, for each fixed $r \geq 0$, can construct in polynomial time every binary 
network consistent with $Tr(\cT)$ that has at most $r$ reticulations, where $\cT$ is a set of binary trees on $\cX$. Suppose $A'$
is an algorithm that examines in turn every network output by $A$ and rejects it if it does not represent $Cl(\cT)$ (such a 
filtering step can be done for each network in polynomial time for fixed $r$, see proof of Lemma \ref{lem:core}). 
If there exists a network $N$ on $\cX$ with at most 
$r$ reticulations that represents $Cl(\cT)$ then by Lemma \ref{lem:binary2} there exists a binary network $N'$ on $\cX$ with this 
property. Furthermore, in that case 
$N'$ will by Proposition \ref{prop:clustrip} be consistent with $Tr(\cT)$. The algorithm $A'$ is thus guaranteed to eventually find $N'$. The practical 
consequence
of this is that, simply by adding a filtering step, triplet software can in some cases easily be modified 
to work for clusters. Indeed, they can be used to determine 
whether a  network with $r$ reticulations  that represents $\cC$ exists and if so to construct all binary networks with this property.
As proof of
concept we have taken the triplet software \textsc{SIMPLISTIC} (based on the ideas described in \cite{simplicityAlgorithmica}), removed its biconnectedness-checking subroutine so that it generates all binary networks with 
up to $r$ reticulations (and not just all simple level-$\leq r$ binary networks), and added the cluster filtering step as
described above. This whole process took only one day of programming, and lead to the new software package \textsc{CLUSTISTIC}
which implements the result described in Theorem \ref{thm:allbinary}. This software  is available for download at  
\url{http://skelk.sdf-eu.org/clustistic}.



\section{Witnesses and a natural lower bound}
\label{sec:wit}

\noindent
Now, let $\cT = \{T_1, \ldots, T_{m}\}$ be a set of $m$ not necessarily binary trees on $\cX$.
Let $S$ be a non-trivial ST-set of $Cl(\cT)$. We know by Lemma \ref{lem:nonbinaryunseptree} that for each $T_i \in \cT$ there is an edge $e_i$ or node $v_i$ in $T_i$ that represents $S$. We define a \emph{witness} for $S$ in $T_i$ as follows. If $T_i$ contains an edge $e_i = (u_i,v_i)$ 
that represents $S$, then a witness is any leaf descendant $x_i \in \cX$ of $u_i$ that does not appear in $S$,  i.e. $x_i \in (\cX(u_i) \setminus  S)$. 
Otherwise, from Lemma \ref{lem:nonbinaryunseptree}
there exists a  
node $v_i$ in $T_i$ that represents $S$, and in that case a witness is any leaf descendant 
$x_i \in (\cX(v_i) \setminus  S)$.  
The only ST-sets with no witnesses are $\cX$ and the empty set, hence the restriction to non-trivial ST-sets.
As an example consider the four trees $\cT$ in Figure \ref{fig:4t}. Consider the 
ST-set $\{1\}$ of $Cl(\cT)$. In the top-left tree and bottom-left tree the only possible witness for this is taxon 5. In the top-right tree the only witness is taxon 3,
and in the bottom-right tree the only witness is taxon 2. 

Given a set of trees $\cT$ on $\cX$ and a non-trivial ST-set $S$ of $ \cX$,
 let $W \subseteq \cX$ be any subset 
of  taxa such that, for each tree $T_i \in \cT$, there exists $x \in W$ that is a witness for $S$ in $T_i$. 
We call such a set 
a \emph{witness set} of $S$ in $\cT$. Clearly there exist $W$ such that $|W| \leq m$. 
For example, for the set of trees in 
Figure \ref{fig:4t}, $\{2,3,5\}$ is a possible witness set for $\{1\}$. 

The two following simple observations are critical. 

\begin{observation}
\label{obs:wit}
Let $\cT$ be a set of trees  on $\cX$, $S$ a non-trivial ST-set  of $ \cX$ and 
$W$ a witness set  of $S$ in ${\cT}$. Then for each $C \in \cC(\cT)$ such that $S \subset C$, $W \cap C \neq \emptyset$.
\end{observation}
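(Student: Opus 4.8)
The plan is to prove the statement directly. Fix an arbitrary cluster $C \in Cl(\cT)$ with $S \subset C$; since $C \in Cl(\cT)$, some tree $T_i \in \cT$ represents $C$, so $T_i$ has a node $w$ with $\cX(w)=C$. Because $W$ is a witness set of $S$ in $\cT$, it contains a taxon $x_i$ that is a witness for $S$ in this particular $T_i$, and it therefore suffices to show $x_i \in C$. I would establish this by a short case analysis following the definition of ``witness'', using throughout only the elementary facts that the clusters of a tree form a laminar family and that, for nodes $a,b$ of a tree, $\cX(a)\subseteq\cX(b)$ holds exactly when $b$ lies on the root-to-$a$ path.

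First, suppose $T_i$ has an edge $e_i=(u_i,v_i)$ representing $S$, so $\cX(v_i)=S$ and $x_i\in\cX(u_i)\setminus S$. From $\emptyset\ne S=\cX(v_i)\subset C=\cX(w)$, laminarity forces $v_i$ to be a proper descendant of $w$, so its parent $u_i$ still lies in the subtree rooted at $w$; hence $\cX(u_i)\subseteq C$ and $x_i\in\cX(u_i)\setminus S\subseteq C$.

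The remaining case is where no edge of $T_i$ represents $S$, there is a node $v_i$ representing $S$, and $x_i\in\cX(v_i)\setminus S$. Here the point to record is that $S$ must be a union of the clusters of \emph{at least two} outgoing edges of $v_i$ (one edge would itself represent $S$, zero edges would make $S$ empty). Since $\cX(v_i)$ and $C$ both contain $S$, laminarity puts one inside the other. If $\cX(v_i)\subseteq C$ then $x_i\in\cX(v_i)\subseteq C$ and we are done. If instead $C\subset\cX(v_i)$, then $w$ sits strictly below $v_i$, hence inside the subtree rooted at a single child $c$ of $v_i$, so $C=\cX(w)\subseteq\cX(c)$; but $S$ also meets some other child's subtree $\cX(c^{*})$ with $c^{*}\ne c$ and $\emptyset\ne\cX(c^{*})\subseteq S$, and then $\cX(c^{*})\subseteq S\subset C\subseteq\cX(c)$ contradicts the disjointness of $\cX(c^{*})$ and $\cX(c)$. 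So this subcase is impossible, and in every case $x_i\in C$, whence $W\cap C\ne\emptyset$.

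I expect the laminarity bookkeeping to be entirely routine; the one step carrying the real content — and the part I would write out most carefully — is ruling out the ``$C$ strictly below $v_i$'' configuration in the node case, which is precisely where the structural property of $v_i$ (that $S$ straddles two or more of its outgoing branches) is needed. The only degenerate points deserving a remark are that $C\ne\cX$, so $w$ is an ordinary internal node and not a possibly out-degree-one root, and that $\cX(c^{*})\ne\emptyset$; both are immediate.
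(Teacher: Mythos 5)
Your proof is correct and follows the same route as the paper: fix the tree $T_i$ in which $C$ is represented, take the witness for $S$ in $T_i$ that $W$ must contain, and show it lies in $C$ because the edge/node of $T_i$ representing $S$ sits at or below the node representing $C$. The paper states this containment in one line; your laminarity case analysis (including ruling out the ``$C$ strictly below $v_i$'' configuration via the fact that $S$ spans at least two children of $v_i$) simply supplies the details it leaves implicit.
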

\begin{proof}
For each cluster $C \in \cC(\cT)$ there is at least one edge $e = (u,v)$ in some $T_i \in \cT$ such that $e$ represents $C$. Let $e_i$ $(v_i)$ be the
edge (node) in $T_i$ that represents $S$. Given that $S \subset C$, all leaf descendants of $v$ must also include all leaf descendants
of the tail of $e_i$ $(v_i)$. In particular, all possible witnesses for $S$ in $T_i$. 
\end{proof}

To understand the meaning of Observation \ref{obs:wit} it is helpful to again consider the example of ST-set $\{1\}$ in the context of Figure \ref{fig:4t}. We see that any cluster that is a strict superset of $\{1\}$ must contain at least one of the taxa from $\{2,3,5\}$.

\begin{observation}
\label{obs:wit2}
Let $\cT = \{T_1, \ldots, T_{m}\}$ be a set of trees on $\cX$ such that  $m \geq 2$ and 
let $S$ be a non-trivial ST-set of $\cC$. Let $W$ be a smallest-cardinality witness set of $S$ in $\cT$. If $|W| = m$ then 
for each $C \in \cC(\cT)$ such that $S \cap C = \emptyset$, $W \setminus C \neq \emptyset$.
\end{observation}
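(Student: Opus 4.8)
The plan is to prove this by contradiction, closely mirroring the structure of the proof of Observation~\ref{obs:wit}. Suppose there exists a cluster $C \in \cC(\cT)$ with $S \cap C = \emptyset$ but $W \subseteq C$, i.e. every witness in $W$ lies inside $C$. The goal is to exhibit a witness set of $S$ with cardinality strictly less than $m$, contradicting the minimality of $W$ (and the hypothesis $|W|=m$). To do this, I would first pick a tree $T_j \in \cT$ in which $C$ is represented, say by an edge $f = (a,b)$ with $\cX(b) = C$ (the node case is handled analogously).

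The key observation I would then establish is that $T_j$ cannot require its own ``private'' witness — more precisely, I want to show that some leaf descendant $x^\star$ of the tail of the edge (or the node) $e_j$/$v_j$ representing $S$ in $T_j$ is also a witness for $S$ in some \emph{other} tree $T_i$, $i \neq j$. The point is that $S$ and $C$ are disjoint, so in $T_j$ the edge/node representing $S$ and the edge $f$ representing $C$ are in ``disjoint'' positions (neither ancestor of the other, since $\cX$-sets are nested or disjoint for laminar cluster-sets coming from a single tree, and $S, C$ are disjoint). Since every element of $W$ lies in $C$, no element of $W$ can be a descendant of the tail of $e_j$ unless it also lies in $S$ — but witnesses for $S$ by definition lie outside $S$. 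Hence the witness in $W$ for $T_j$ — call it $w_j$ — is a descendant of the tail of $e_j$ (as required to be a witness for $S$ in $T_j$) and simultaneously lies in $C$, which forces a contradiction about the structure of $T_j$: $w_j \in C \setminus S$ lies below the tail of $e_j$, so $C$ must intersect $\cX(e_j) = S$ or contain it, contradicting $S \cap C = \emptyset$ together with $w_j \notin S$...

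Let me restructure: the cleanest route is to observe that $w_j$, the element of $W$ serving as witness for $S$ in $T_j$, must be a leaf descendant of $u_j$ (where $e_j = (u_j, v_j)$ represents $S$) with $w_j \notin S$. Since $w_j \in W \subseteq C$ and $S \cap C = \emptyset$, consistency of $C$ with $S$ inside the tree $T_j$ means either $C \subseteq \cX(u_j)$ or $\cX(u_j) \subseteq C$ — but in any case, because $w_j$ is a \emph{strict} descendant situation, I can argue that $w_j$ is actually a witness for $S$ in at least one other tree $T_i$ as well, by using Observation~\ref{obs:wit} applied to a suitable superset cluster, or directly by the nestedness. Then $W \setminus \{w_j\}$ still covers all trees except possibly $T_j$, but $w_j$ covers $T_i$ too, so re-examining, $W$ had a redundant element, giving $|W| < m$, a contradiction.

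The main obstacle I anticipate is handling the two cases (edge vs.\ node representing $S$, and edge vs.\ node representing $C$) uniformly without the bookkeeping becoming opaque, and in particular making precise the claim that a witness forced to lie in a disjoint cluster $C$ must ``do double duty'' for another tree. I would isolate this as the crux: show that if $w_j \in C$ and $S \cap C = \emptyset$, then in tree $T_j$ the leaf $w_j$ lies below the relevant ancestor of $S$, so by Observation~\ref{obs:wit} every cluster strictly containing $S$ contains $w_j$; combined with $m \geq 2$ and the definition of witness set, one of the \emph{other} trees $T_i$ must also admit $w_j$ as a witness (because $w_j$ appears above $S$ in the appropriate structural sense in every tree where $S$ sits below a cluster containing $w_j$). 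Discharging this sub-claim carefully is where the real work lies; the rest is a counting argument against minimality of $|W| = m$.
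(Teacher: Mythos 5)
Your overall plan has the right shape (argue by contradiction, work inside the tree $T_j$ in which $C$ is represented, and produce a taxon that witnesses $S$ in two trees, contradicting the minimality of $|W|=m$), but both concrete steps you propose fail. First, the claim that ``no element of $W$ can be a descendant of the tail of $e_j$ unless it also lies in $S$'', and the ensuing ``contradiction'' that $C$ must intersect or contain $S$, are simply false. Since $C\cap S=\emptyset$ and the witness $w_j\in W\subseteq C$ dedicated to $T_j$ is a common leaf descendant of the head $b$ of the edge representing $C$ and of $u_j$ (resp.\ $v_j$ in the node case), laminarity of the clusters of $T_j$ forces $b$ to be a strict descendant of $u_j$ (resp.\ $v_j$) lying outside the part of the tree representing $S$, so $C\subseteq \cX(u_j)\setminus S$. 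This is a perfectly consistent configuration (a sibling subtree of $S$ below $u_j$), not a contradiction; indeed it is exactly the fact one should be exploiting rather than refuting.

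Second, your restructured crux --- that $w_j$, the witness assigned to $T_j$, must also be a witness in some \emph{other} tree $T_i$ --- is not established by the tools you cite and aims the argument in the wrong direction. Observation~\ref{obs:wit} only guarantees $W\cap C'\neq\emptyset$ for every cluster $C'\supset S$; it does not say that the particular taxon $w_j$ lies in every such cluster, and since $C$ need not be a cluster of $T_i$ at all, you have no means of locating $w_j$ relative to the edge or node representing $S$ in $T_i$. The workable argument goes the other way around: from $C\subseteq\cX(u_j)\setminus S$ it follows that \emph{every} element of $W$ is a possible witness for $S$ in $T_j$; since no taxon can witness $S$ in two distinct trees (otherwise a witness set of size at most $m-1$ exists, contradicting $|W|=m$ being minimum --- a preliminary fact you should state explicitly rather than only gesture at via ``redundancy''), the element of $W$ that serves some tree $T_i\neq T_j$ then witnesses $S$ in both $T_i$ and $T_j$, which is the desired contradiction. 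As written, your chain of deductions does not go through; redirecting the double-duty argument onto the witnesses of the other trees, viewed inside $T_j$, is what closes the gap.
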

\begin{proof}
Note that it is not possible for a witness for $S$ in a tree $T_i \in \cT$ to also be a witness for $S$
in $T_j \neq T_i$, because then $|W| \leq m-1$. So each witness in $W$ comes from a different tree in $\cT$.
Suppose then that there is some $C\in \cC(\cT)$ such that $S \cap C = \emptyset$ and $W \setminus C = \emptyset$. Clearly, since $W\neq \emptyset$, $W \subseteq C$,
and $|C| \geq |W| \geq 2$. Furthermore some edge $e_i$ in some $T_i$ represents $C$. Combining the fact that $S \cap C = \emptyset$
and $W \subseteq C$ leads us to the conclusion that all elements of $W$ are possible witnesses for $S$ in $T_i$. But then some element $x$
of $W$ is a witness for $S$ both in $T_i$ and also in some $T_j \neq T_i$, contradiction.
\end{proof}

As mentioned in the proof of Observation \ref{obs:wit2}, a witness set $W$ (for a given ST-set $S$) with $m-1$ or fewer elements exists if and only if the possible witnesses for $S$ ranging
across the different $T_i$ are not all mutually disjoint.
If a witness set with $m-1$ or fewer elements does not exist then in the following results any witness set with $m$ elements will turn out to be sufficient, as a consequence of Observation \ref{obs:wit2}. This will become clear in due course. 

\subsection{A natural lower bound on $r(\cT)$ that is tight for clusters obtained from two (not necessarily binary) trees\label{sec:MSTlowerbound}}




\begin{lemma}
\label{lem:lowerbound2}
Given a set of clusters $\cC$ on $\cX$, there exists a maximal ST-set tree sequence $(S_1, S_2, ..., S_p)$ such
that ${p} \leq r(\cC)$.
\end{lemma}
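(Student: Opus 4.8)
<br>

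The statement to prove is Lemma~\ref{lem:lowerbound2}: given a cluster set $\cC$ on $\cX$, there exists a \emph{maximal} ST-set tree sequence $(S_1,\dots,S_p)$ with $p \le r(\cC)$.

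Let me think about this. We have Lemma~\ref{lem:treeseq} which says: given a network $N$ representing $\cC$, there's an ST-set tree sequence of length $r(N)$. If we take $N$ to be an optimal network, i.e. $r(N) = r(\cC)$, we get an ST-set tree sequence of length exactly $r(\cC)$. But this sequence is built from SBRs, which are ST-sets but NOT necessarily maximal ST-sets. So the gap is: we have a (general) ST-set tree sequence of length $\le r(\cC)$, and we need to produce a \emph{maximal} ST-set tree sequence of length $\le r(\cC)$.

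So the key lemma I'd want is something like: "if there's an ST-set tree sequence of length $p$, then there's a maximal ST-set tree sequence of length $\le p$." The intuition: replacing an ST-set by the maximal ST-set containing it can only help — a maximal ST-set "absorbs more" so fewer steps are needed. More precisely, I'd proceed by induction on $p$. Given an ST-set tree sequence $(S_1,\dots,S_p)$ of $\cC$, let $M_1$ be the maximal ST-set of $\cC$ containing $S_1$ (exists and is unique by Corollary~\ref{cor:mostn}). The plan is to show that $\cC \setminus M_1$ admits an ST-set tree sequence of length $\le p-1$; then by induction $\cC\setminus M_1$ admits a \emph{maximal} ST-set tree sequence of length $\le p-1$, and prepending $M_1$ — which is a maximal ST-set of $\cC$ — gives a maximal ST-set tree sequence of $\cC$ of length $\le p$.

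To show $\cC\setminus M_1$ has an ST-set tree sequence of length $\le p-1$: Since $M_1 \supseteq S_1$, removing $M_1$ removes everything that removing $S_1$ removes, plus more. The clusters of $\cC\setminus M_1$ are "simpler" than those of $\cC\setminus S_1$. I expect that $(S_2', \dots, S_p')$, where $S_i' = S_i \setminus M_1$ (restricting each later ST-set to the remaining taxa), is an ST-set tree sequence of $\cC\setminus M_1$ — and possibly even shorter if some $S_i'$ become empty or get absorbed. The technical content is verifying that $S_i \setminus M_1$ is an ST-set of $\cC\setminus M_1\setminus S_1'\setminus\dots\setminus S_{i-1}'$: restriction preserves compatibility (removing taxa from clusters keeps compatible clusters compatible) and preserves the "not separated" property (if $S_i$ is not separated by a cluster set, then $S_i$ minus some taxa is not separated by that cluster set minus the same taxa — since any incompatibility after restriction would lift to an incompatibility before). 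The "tree sequence" condition at the end is also preserved since $\cC\setminus M_1\setminus S_1'\setminus\dots\setminus S_p'$ is a restriction of $\cC\setminus S_1\setminus\dots\setminus S_p$, which is a set of mutually compatible clusters, and restriction preserves mutual compatibility.

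The main obstacle I anticipate is handling the bookkeeping of \emph{which} ST-set of the reduced instance actually contains the image of $S_1$ and making the maximality/absorption argument airtight — in particular, confirming that $M_1$ really does "dominate" $S_1$ in the sense that $\cC\setminus M_1 = (\cC\setminus S_1)\setminus(M_1\setminus S_1)$ behaves well, and that $M_1\setminus S_1$ being an ST-set of $\cC\setminus S_1$ (which should follow since $M_1$ is unseparated in $\cC$ and $\cC|M_1$ compatible imply the analogous facts after removing $S_1$) does not need to be peeled off separately — we simply remove all of $M_1$ in one step. A cleaner route may be: induct directly, and at each step take $S_1$ from the given sequence, let $M_1$ be the maximal ST-set of the current cluster set containing it, remove $M_1$, and argue the remaining sequence (restricted) drops in length by at least one. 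I'd also double check the base case $p=0$: then $\cC$ is already a set of mutually compatible clusters, the empty maximal ST-set tree sequence works, and $0 \le r(\cC) = 0$.
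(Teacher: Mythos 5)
Your argument is correct, and it starts exactly where the paper does -- apply Lemma~\ref{lem:treeseq} to a network $N$ with $r(N)=r(\cC)$ -- but then it takes a different route. The paper's own proof is a one-liner: it asserts that the SBR-peeling of Lemma~\ref{lem:treeseq}, with the empty padding sets omitted, already yields a \emph{maximal} ST-set tree sequence; implicitly, at each step one peels off not just the SBR's taxa but the entire maximal ST-set of the current cluster set containing them (deleting the extra leaves cannot increase the reticulation number, and the reduced network still represents the reduced cluster set), so the maximality issue is absorbed into the network argument. You instead resolve that issue purely combinatorially, via a separate conversion lemma: any ST-set tree sequence of length $p$ can be upgraded to a maximal ST-set tree sequence of length at most $p$. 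Your induction rests on two facts that indeed hold: every nonempty ST-set lies in a unique maximal ST-set (by Lemma~\ref{lem:stcompute} and Corollary~\ref{cor:mostn}), and restriction preserves ST-sets, since unseparatedness survives taxon removal (this is exactly Observation~\ref{obs:sepstayssep} of the paper) and so does pairwise compatibility of the restricted clusters. The bookkeeping you flag also checks out: $S_i\setminus M_1$ equals $S_i\setminus\bigl(M_1\setminus(S_1\cup\dots\cup S_{i-1})\bigr)$ because later ST-sets in a sequence are disjoint from earlier ones, and empty entries can simply be dropped without changing any cluster set. What each approach buys: the paper's version is shorter because it stays inside the network, but it leaves implicit precisely the point you isolate (an SBR's taxa set need not be a maximal ST-set); your conversion lemma makes that step explicit and is of independent interest, being a statement about cluster sets alone with no reference to networks.
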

%
%
\begin{proof}
The proof is equivalent to that of Lemma \ref{lem:treeseq} but for the fact that  no empty ST-set is inserted in the  maximal ST-set tree sequence.
\end{proof}

We define the \emph{maximal ST-set lower bound for $\cC$} (MST lower bound for short) as the  cardinality of the smallest 
 maximal ST-set tree sequence
for $\cC$. By Lemma \ref{lem:lowerbound2} this is a genuine lower bound on $r(\cC)$. In general it is however a rather weak lower-bound: consider the set 
$\cC^{i}$ of clusters
on $X^{i} = \{r, x_1, \ldots x_i\}$ defined by $\{ \{r,x_j\} | 1 \leq j \leq i \}$. The MST lower bound for this cluster set
is always 1, while $r(\cC^{i})$ rises linearly in $i$. However, 
as we shall see the tightness of the bound is to some extent correlated with the
number of trees which generate the clusters, with two trees being a special case. We first require some definitions and auxiliary lemmas.

\noindent

Consider a network $N$ on $\cX$. Let $\mathcal{X'} = \{\rho \} \cup \cX$ where $\rho \not \in \cX$ is some arbitrary symbol representing the 
root of $N$. Let $T$ be some tree on $\cX^{*}$ where $\cX^{*} \cap \cX' = \emptyset$ and let $H$ be a subset of $\cX'$. 
We can obtain a new network $N'$ on $(\cX^{*} \cup \cX)$
\emph{by hanging $T$ from $H$ in $N$}. Informally $N'$ is obtained by 
hanging the tree $T$ beneath a new reticulation which has $|H|$ incoming edges, where each such incoming edge begins ``just above'' an 
element of $H$. Formally the transformation is as follows. 
First we add a new edge $(r,r')$ to $T$, where $r'$ is the root of $T$ and $r$ is a new node.
For each $h \in H \setminus \{\rho\}$ we then subdivide the unique edge entering $h$; let $h_p$ be the new parent (with indegree and outdegree 
1) of $h$. 
For each $h \in H \setminus \{\rho\}$ we then add a new edge $(h_p,r)$. Finally, if $\rho \in H$ we also add an edge from the root of $N$ to $r$; we call this a 
\emph{root edge}. It is clear that $r(N') = r(N) + |H|-1$.

\begin{lemma}
\label{lem:hangback}
Let $\cT = \{T_1, \ldots, T_{m}\}$ be a set of $m$  trees on $\cX$ and let $\cC = Cl(\cT)$. Let $S$ be a maximum 
ST-set of $\cC$. Let $N$ be any network on $\cX \setminus S$ that represents $\cC \setminus S$. Then it is possible to extend $N$ to obtain a new network $N'$ 
that 
represents $\cC$ such that $r(N') \leq r(N) + (m-1)$.
\end{lemma}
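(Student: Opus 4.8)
The plan is to construct $N'$ from $N$ by hanging a tree representing $S$ back into $N$, using a witness set to identify the correct attachment points. First I would fix a smallest-cardinality witness set $W$ for $S$ in $\cT$; by the remarks at the end of Section \ref{sec:wit} we have $|W| \le m$, and moreover either $|W| \le m-1$ or, if $|W| = m$, Observation \ref{obs:wit2} applies. Since $S$ is a maximal ST-set of $\cC$, the restricted cluster set $\cC|S$ is compatible and so defines a tree $T_S$ on $S$; this is the tree I would hang back. The witnesses in $W$ are, by construction, the taxa of $\cX \setminus S$ that in each $T_i$ sit ``just above'' the edge/node representing $S$. Inside $N$ the corresponding leaves $w \in W$ live somewhere, and the idea is to hang $T_S$ beneath a new reticulation whose incoming edges attach ``just above'' those leaves — formally, set $H$ to be (essentially) $W$, possibly together with $\rho$, and form $N' $ by hanging $T_S$ from $H$ in $N$ as in the construction just before Lemma \ref{lem:hangback}. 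Then $r(N') = r(N) + |H| - 1 \le r(N) + (m-1)$, giving the stated bound.

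The substance of the proof is verifying that $N'$ actually represents $\cC$. It already represents $\cC \setminus S$ since $N$ does and the new material only adds a pendant subtree below a reticulation. For a cluster $C \in \cC$ there are three cases according to how $C$ interacts with $S$, using the fact that $S$ is unseparated: $C \subseteq S$, or $S \subseteq C$, or $S \cap C = \emptyset$. If $C \subseteq S$, then $C$ is represented inside the copy of $T_S$ (since $\cC|S = Cl(T_S)$) by switching off all but one incoming edge of the new reticulation. If $S \cap C = \emptyset$: here I would use that (after deleting $S$) the cluster $C \setminus S = C$ is represented by some switching of $N$; I need that switching to extend to a switching of $N'$ that still represents $C$, which requires that the subdivisions of the edges entering the witnesses $w \in W$ do not interfere — and this is exactly where Observation \ref{obs:wit2} (when $|W| = m$) or the stronger small-witness-set situation (when $|W| \le m-1$) is used to guarantee $W \setminus C \neq \emptyset$, so at least one witness lies outside $C$ and the new reticulation edge can be routed through it without corrupting $C$. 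If $S \subseteq C$, then in the tree $T_i$ realizing $C$ some descendant structure above the $S$-edge realizes $C$; by Observation \ref{obs:wit} every such superset cluster meets $W$, so we can pick a switching of $N'$ in which the reticulation above $T_S$ is fed from a witness that lies in $C$, making $C$ appear as the union of $S$ with the cluster $C \setminus S$ realized in $N$.

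In carrying this out I would first treat the generic case $|W| = m$ in detail (witnesses pairwise from distinct trees, Observations \ref{obs:wit} and \ref{obs:wit2} both available), then note that when a smaller witness set exists the argument only gets easier because fewer attachment points are needed and the disjointness conclusions still hold for the chosen set. I also need to handle the degenerate possibility that $S$ contains the root's taxa in some informal sense, i.e. that the edge representing $S$ in some $T_i$ is incident to the root, which is where the root edge / the symbol $\rho$ in the hanging construction comes in; including $\rho$ in $H$ when necessary ensures $T_S$ can be reattached without creating a node of indegree and outdegree $1$ at the old root, and does not increase $|H|$ beyond $m$ since it replaces a would-be witness. The main obstacle I expect is the $S \cap C = \emptyset$ case: one must argue that subdividing the edges into the $w \in W$ and redirecting a new reticulation edge from each $w_p$ cannot destroy the softwired representation of any $C$ disjoint from $S$, and the clean way to do this is to exhibit, for each such $C$, a concrete switching of $N'$ — take a switching of $N$ representing $C$, extend it by switching the new reticulation so it is fed from a witness outside $C$ (which exists by Observation \ref{obs:wit2}), and check that the set of leaf-descendants of every edge is unchanged on the $\cX \setminus S$ part. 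Once that bookkeeping is done the three cases together show $N'$ represents $\cC$, and the reticulation count bound is immediate.
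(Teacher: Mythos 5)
Your construction and case analysis are exactly those of the paper: hang the tree $T_S$ representing $\cC|S$ from a minimum-cardinality witness set $W$ (augmented with $\rho$ when $|W|<m$), observe $r(N') = r(N)+|H|-1 \leq r(N)+(m-1)$, and verify representation of each $C\in\cC$ by splitting into $C\subseteq S$, $S\subset C$ (via Observation \ref{obs:wit}) and $C\cap S=\emptyset$. The first two cases are handled correctly and as in the paper.

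There is, however, one step that would fail as written: the disjoint case when $|W|\leq m-1$. You claim that ``the stronger small-witness-set situation \ldots is used to guarantee $W\setminus C\neq\emptyset$'', but this is false --- Observation \ref{obs:wit2} carries the hypothesis $|W|=m$ precisely because a small witness set (for instance a single witness shared by all trees) can lie entirely inside a cluster $C$ with $C\cap S=\emptyset$. In that situation every new reticulation edge whose tail sits above a witness forces $S$ into any cluster switched through it, so no switching through $W$ alone can represent $C$. This is the real purpose of adding $\rho$ to $H$ when $|W|<m$: in subcase $C\cap S=\emptyset$, $|W|<m$, one represents $C$ by switching on the \emph{root edge} $(\rho,r)$ and switching off all reticulation edges above witnesses, so that $T_S$ hangs harmlessly off the root. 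Your stated reasons for including $\rho$ (the edge representing $S$ being incident to the root of some $T_i$, or avoiding a degree-$(1,1)$ node at the old root) are not the operative ones. Since your construction already contains the root edge, the proof is repaired simply by replacing your justification of this subcase with the switching just described; no other change is needed.
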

\begin{proof}
Let $T_{S}$ be the unique tree on taxa set $S$ such that  $\cC(T_S)= \cC| S$.
Let $W$ be a minimum-cardinality witness set for $S$ in $\cT$.
Clearly $1 \leq |W| \leq m$. If $|W|=m$ then we let $N'$ be the network obtained by hanging $T_{S}$ from $W$ in $N$. If $|W| < m$ then
we let $N'$ be the network obtained by hanging $T_{S}$ from $W \cup \{ \rho \}$ in $N$. Clearly, $r(N') \leq r(N) + (m-1)$. It remains only to
show that $N'$ represents $\cC$. Consider any cluster $C \in \cC$. There are three cases to consider. (1) If $C \subseteq S$ then $N'$
clearly represents $C$ because $T_S$ already represented $\cC|S$. (2) If $S \subset C$ then consider $C' = C \setminus S$. Clearly $N$
represents $C'$. By Observation \ref{obs:wit} there exists 
some $w \in W \cap C$. Since $W \cap S= \emptyset$, this implies that there exists some $w \in W \cap C'$.

 To see that $N'$ represents $C$ consider
any tree displayed by $N$ that represents $C'$. We can extend this tree by ``switching on'' the new reticulation edge that begins
above $w$, i.e. the edge $(w_p,r)$, and ``switching off'' the remaining reticulation edges. (3) If $S \cap C = \emptyset$ then there
are two subcases. (3.1) If $|W| < m$ then we can ``switch on'' the root edge that enters the reticulation above $T_S$, i.e. the edge $(\rho,r)$, and ``switch off'' all 
other reticulation edges entering $T_S$. (3.2) If $|W|=m$ then by Observation \ref{obs:wit2} there exists $w \in W \setminus C$.
In $N'$ we can thus ``switch on'' the new reticulation edge that begins above $w$, and ``switch off'' the rest. 
\end{proof}

\begin{theorem}
\label{thm:tightbound}
Let $\cT = \{T_1, \ldots, T_{m}\}$ be a set of $m$ trees on $\cX$ and let $\cC = Cl(\cT)$. Let $p$ be the
MST lower bound for $\cC$. Then $r(\cC) \leq (m-1)p$. 
\end{theorem}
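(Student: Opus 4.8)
The plan is to prove Theorem~\ref{thm:tightbound} by induction on $p$, the MST lower bound for $\cC$, using Lemma~\ref{lem:hangback} to peel off one maximal ST-set at a time while paying at most $(m-1)$ reticulations per peel. First I would observe that by Lemma~\ref{lem:lowerbound2} there exists a maximal ST-set tree sequence $(S_1, S_2, \ldots, S_p)$ realizing the MST lower bound, meaning $p$ is the length of the shortest such sequence. The base case $p = 0$ is immediate: the MST lower bound being $0$ means all clusters of $\cC$ are already mutually compatible, so $\cC$ is represented by a tree and $r(\cC) = 0 \leq (m-1)\cdot 0$.

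For the inductive step, suppose the claim holds for all cluster sets with MST lower bound at most $p-1$, and let $\cC = Cl(\cT)$ have MST lower bound $p \geq 1$. Pick a shortest maximal ST-set tree sequence $(S_1, \ldots, S_p)$ and set $S = S_1$, a maximal ST-set of $\cC$. Then $\cC \setminus S = Cl(\cT')$ for a suitable set $\cT'$ of at most $m$ trees on $\cX \setminus S$ (this is Observation~\ref{obs:staysbinary} in the binary case; in the general not-necessarily-binary case one peels the node or edge representing $S$ from each $T_i$ as in Lemma~\ref{lem:nonbinaryunseptree}, which still yields at most $m$ trees whose cluster union is $\cC \setminus S$). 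Crucially, $(S_2, \ldots, S_p)$ witnesses that $\cC \setminus S$ has MST lower bound at most $p-1$. By the induction hypothesis there is a network $N$ on $\cX \setminus S$ representing $\cC \setminus S$ with $r(N) \leq (m-1)(p-1)$. Now apply Lemma~\ref{lem:hangback}: since $S$ is a maximum (hence maximal) ST-set of $\cC$ and $\cC = Cl(\cT)$, we can extend $N$ to a network $N'$ representing $\cC$ with $r(N') \leq r(N) + (m-1) \leq (m-1)(p-1) + (m-1) = (m-1)p$. Hence $r(\cC) \leq r(N') \leq (m-1)p$, completing the induction.

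The one technical point I would be careful about is that Lemma~\ref{lem:hangback} is stated for a \emph{maximum} ST-set $S$ of $\cC$, whereas an arbitrary entry $S_1$ of a maximal ST-set tree sequence need only be maximal. Here one should either invoke the earlier structure theory (Corollary~\ref{cor:mostn} and the fact that maximal ST-sets partition $\cX$, so every maximal ST-set is one of the finitely many maximal ones, and Lemma~\ref{lem:hangback}'s proof in fact only uses that $S$ is a maximal ST-set so that witness sets are well-defined and the peeled cluster set is again of the form $Cl(\cdot)$), or simply observe that a shortest maximal ST-set tree sequence can be chosen so that each $S_i$ is a maximal ST-set of the current residual cluster set, which is all the argument needs. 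A second, minor, point is ensuring that $\cT'$ genuinely has at most $m$ trees and that $\cC\setminus S = Cl(\cT')$ in the non-binary setting; this follows from Lemma~\ref{lem:nonbinaryunseptree}, where removing the representing edge or node of $S$ from each $T_i$ (and cleaning up) leaves a tree $T_i'$ on $\cX\setminus S$, and the union of the induced cluster sets is exactly $\cC\setminus S$. With these observations in place the induction goes through cleanly; I expect the main obstacle to be purely expository, namely stating the maximal-ST-set bookkeeping precisely enough that the inductive descent is valid at every step.
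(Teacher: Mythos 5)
Your proof is correct and is essentially the paper's own argument: both take a maximal ST-set tree sequence of length $p$ (Lemma~\ref{lem:lowerbound2}), use Lemma~\ref{lem:nonbinaryunseptree} to keep each residual cluster set of the form $Cl(\cdot)$ for at most $m$ trees, and apply Lemma~\ref{lem:hangback} once per ST-set at a cost of at most $m-1$ reticulations each. The only difference is presentational (induction on $p$ rather than an explicit iterative construction), and your side remarks—that ``maximum'' in Lemma~\ref{lem:hangback} is really used as ``maximal'', and that each $S_i$ is by definition a maximal ST-set of the current residual—are accurate and match how the paper itself applies the lemma.
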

\begin{proof}
Given a tree $T$ and a node $u$ of $T$, we denote by $\cX(T)$ the label set of $T$ and by $T_u$  the subtree rooted at $u$.
From Lemma \ref{lem:lowerbound2} we already know that $p \leq r(\cC)$. Now, let $(S_1, S_2, ..., S_p)$ be a maximal ST-set tree sequence for
$\cC$. We will complete the proof by showing how to explicitly construct a network $N$ with reticulation number at most $(m-1)p$ that 
represents $\cC$. We define $\cC_{i}$, $1 \leq i \leq p$, as $\cC \setminus S_1 \setminus \ldots \setminus S_{i}$ and $\cC_{0}$ as $\cC$.
By Lemma
\ref{lem:nonbinaryunseptree} it can be seen that for each $i$, $\cC_{i} = Cl( \cT_{i} )$ where $\cT_{i}$ is a set of at most $m$
trees on $\cX \setminus S_1 \setminus \ldots \setminus S_{i}$ and where $\cT_{0} = \cT$. In particular, $\cT_{i+1}$ can be obtained from $\cT_{i}$ as follows. Given a tree $T_j$ in $\cT_{i}$, let (without loss of generality) $u_j$ be the node of $T_j$ such that $S_i$ is equal to the union of the clusters
represented by some not necessarily strict subset of its outgoing edges. Such a $u_j$ exists by Lemma \ref{lem:nonbinaryunseptree}. Let $Q = \{v_1, \ldots, v_k\}$ be the set of children of $u_j$ such that for each $v \in Q$, 
$\cX(T_{v})$ contains at least one element of $S_{i}$. The set of trees $\cT_{i+1}$ can be obtained from $\cT_{i}$
by computing, for each tree $T_j$ in $\cT_{i}$ the tree $T_j \setminus {T_{v_1}} \ldots \setminus {T_{v_k}}$ i.e. pruning away the subtrees corresponding to $S_i$ and tidying up the resulting tree.

Now, consider $\cC_{p}$. Let $N_{p}$ be the unique tree 
such that $\cC(N_p)= \cC_{p}$; $N_p$ will
be equal to the single tree in $\cT_{p}$. By Lemma \ref{lem:hangback} we can obtain a network $N_{p-1}$ with $(m-1)$ reticulations that represents 
$\cC_{p-1}$ by taking $\cT = \cT_{p-1}$, $N = N_{p}$ and $S = S_{p}$ in the proof of that lemma. We iterate this process for $p-2, p-3, \ldots 1$. This lasts at 
most $p$ iterations in total, and each iteration adds $(m-1)$ to the reticulation number, thus yielding a network $N_{0}$ that represents $\cC$ with 
reticulation number (at most) $(m-1)p$. 
\end{proof}

\begin{corollary}
\label{cor:tightbound}
Let $\cT = \{T_1, T_{2}\}$ be a set of two not necessarily binary trees on $\cX$ and let $\cC = Cl(\cT)$. Let $p$ be the
MST lower bound for $\cC$. Then $r(\cC) = p$. 
\end{corollary}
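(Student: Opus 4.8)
The plan is to combine the two inequalities we already have for the special case $m=2$. Theorem \ref{thm:tightbound} with $m=2$ gives $r(\cC) \leq (2-1)p = p$. Lemma \ref{lem:lowerbound2} gives the reverse inequality $p \leq r(\cC)$, since $p$ is defined as the cardinality of the smallest maximal ST-set tree sequence for $\cC$ and that lemma guarantees such a sequence of length at most $r(\cC)$. Putting these together yields $r(\cC) = p$.

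Concretely, I would first recall the definition of the MST lower bound $p$ as the length of the shortest maximal ST-set tree sequence, and invoke Lemma \ref{lem:lowerbound2} to assert $p \leq r(\cC)$. Then I would apply Theorem \ref{thm:tightbound} with $m = 2$ to obtain $r(\cC) \leq (m-1)p = p$. The two inequalities immediately force equality $r(\cC) = p$, which is exactly the claim.

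There is essentially no obstacle here: the corollary is a one-line specialization of Theorem \ref{thm:tightbound}. The only thing worth a brief remark is \emph{why} $m=2$ is the boundary case — the factor $(m-1)$ in Theorem \ref{thm:tightbound} collapses to $1$ precisely when $m=2$, so for two trees the natural lower bound is provably tight, whereas for $m \geq 3$ the bound $(m-1)p$ leaves a genuine gap (and indeed the unification results of \cite{twotrees} are known to fail for three or more trees). So the "proof" is really just: apply Theorem \ref{thm:tightbound} and Lemma \ref{lem:lowerbound2}, observe that the upper and lower bounds coincide when $m=2$, and conclude.
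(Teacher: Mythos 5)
Your proposal is correct and matches the paper exactly: the corollary is stated there without proof precisely because it follows immediately by combining Lemma \ref{lem:lowerbound2} ($p \leq r(\cC)$) with Theorem \ref{thm:tightbound} instantiated at $m=2$ ($r(\cC) \leq (m-1)p = p$). Your added remark about why $m=2$ is the boundary case is consistent with the paper's own discussion of the tightness of the MST lower bound.
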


In \cite{twotrees} it is shown that it is NP-hard and APX-hard to compute $r(\cC)$ where $\cC$ is the set of clusters obtained from two binary
trees on $\cX$. The following corollary is thus immediate.

\begin{corollary}
\label{cor:boundishard}
The computation of the MST lower bound is NP-hard and APX-hard.
\end{corollary}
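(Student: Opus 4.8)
The statement to prove is Corollary~\ref{cor:boundishard}: the computation of the MST lower bound is NP-hard and APX-hard.

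The plan is to observe that this is an immediate consequence of Corollary~\ref{cor:tightbound} together with the hardness result for two binary trees cited from \cite{twotrees}. First I would recall that, by Corollary~\ref{cor:tightbound}, when $\cC = Cl(\{T_1,T_2\})$ for two binary (a fortiori, not necessarily binary) trees on $\cX$, the MST lower bound $p$ for $\cC$ equals $r(\cC)$ exactly. So an algorithm that computes the MST lower bound on such inputs is simultaneously an algorithm that computes $r(\cC)$, the minimum reticulation number of a network representing the clusters of two binary trees.

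Next I would invoke the result of \cite{twotrees} that computing $r(\cC)$ for $\cC = Cl(\{T_1,T_2\})$, $T_1,T_2$ binary trees on $\cX$, is NP-hard and APX-hard. Since the map $(T_1,T_2) \mapsto Cl(\{T_1,T_2\})$ is computable in polynomial time, and since on these instances the value of the MST lower bound coincides with $r(\cC)$, any polynomial-time algorithm (respectively, any constant-factor approximation algorithm) for the MST lower bound would yield a polynomial-time exact algorithm (respectively, a constant-factor approximation) for $r(Cl(\{T_1,T_2\}))$. This contradicts the cited hardness results unless $\mathrm{P}=\mathrm{NP}$ (respectively, violates APX-hardness). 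Hence the MST lower bound is NP-hard and APX-hard to compute.

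There is essentially no obstacle here: the work has all been done in Corollary~\ref{cor:tightbound} and in \cite{twotrees}. The only point to be slightly careful about is that ``computing the MST lower bound'' is well-posed as a function of the cluster set $\cC$ alone (it does not presuppose knowledge of a tree decomposition of $\cC$), so the reduction from the two-tree reticulation-number problem is genuinely a reduction to the MST-lower-bound problem on an arbitrary cluster set; this is clear from the definition of the MST lower bound as the cardinality of the smallest maximal ST-set tree sequence for $\cC$. With that remark in place, the proof is complete.
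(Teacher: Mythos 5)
Your proposal is correct and follows exactly the paper's intended argument: Corollary~\ref{cor:tightbound} identifies the MST lower bound with $r(\cC)$ for cluster sets from two binary trees, and the NP-hardness and APX-hardness of computing $r(\cC)$ on such instances from \cite{twotrees} then transfers immediately. Your extra remark that the MST lower bound is well-defined on the cluster set alone is a sensible clarification but not a departure from the paper's proof.
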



It is interesting to note that Lemma \ref{lem:lowerbound2} and Corollary \ref{cor:boundishard} have, in some sense, already appeared
in the phylogenetic network literature, albeit in the language of \emph{recombination networks}. Specifically, in \cite{twotrees} we highlight
that the phylogenetic network model described there (and also used here) is in a strong sense identical to the recombination network model under
the assumption of an all-0 root, the infinite sites model and multiple crossover recombination. The computational lower bound described in Algorithm 3 of \cite{myers2003}
is, taking this equivalence into account, essentially identical to the MST lower bound.  In
\cite{bafnabansal2006} it is shown that computing this bound is NP-hard, by reduction from \textsc{MAX-2-SAT}. The same authors also give an exponential-time dynamic programming algorithm 
for computing the bound because in \cite{myers2003} it was not explicitly indicated how this should be computed. 





\section{The optimality and non-optimality of \textsc{Cass}}
\label{sec:casseverything}

The {\cass} algorithm for constructing simple level-$k$ networks was presented in \cite{cass}. The algorithm
was designed to produce solutions of minimum level, not of minimum reticulation number. 
However, when the input is a 
separating set $\cC$ of clusters on $\cX$, minimizing the  level or the reticulation number is equivalent. Indeed, such cluster
sets have the property that any network that represents them is simple or can easily be made simple (see Lemma \ref{lem:simpleexists}) and a simple network contains exactly one non-trivial biconnected component.

The {\cass} algorithm can be used as a subroutine in a divide and conquer algorithm to construct general level-$k$ networks. We will call this more
general algorithm {\cassinden}. The basic idea of {\cassinden} is that it transforms each connected component of $IG(\cC)$ into a tangled set of clusters,
runs {\cass} separately on each of these tangled sets, and combines the resulting simple networks into a single final network $N$ (Recall that tangled cluster sets are separating).  The final network $N$ has reticulation number equal to the sum of the reticulation number of the
simple networks produced by {\cass}, and $N$  has level equal to the maximum level ranging over all the simple networks. For more details on the divide and conquer strategy, see \cite{cass} and \cite{HusonRuppScornavacca10}, Section 8.2.


In \cite{cass} the authors proved that if there is  a 
level-$\leq 2$ network that represents 
$\cC$, then {\cassinden} will find such a solution with minimal level. 
Here we clarify several other properties of the algorithm. 
On the negative side we show (using a special 
separating 
 cluster set) that
the {\cass} algorithm does not in general minimize level. 
On
the positive side we show that when the input set $\cC$ is equal to $Cl(\{T_1,T_2\})$ for any two (not necessarily 
binary) trees, {\cassinden} correctly minimizes level. In fact we show something even stronger: in this case {\cassinden} also correctly
constructs networks with minimum reticulation number, which in turn is exactly equal to the hybridization number of the two input trees i.e. the number
of reticulations required to display the trees themselves. We conclude with several open questions regarding {\cass}.

\subsection{Cass: the high-level idea}
\label{subsec:casshigh}


Let $N$ be a network that represents a set of clusters $\cC$ on $\cX$. Let $S$ be a non-trivial ST-set
with respect to $\cC$. We say that $S$ is \emph{under a cut-edge} if $N$ contains a cut-edge
$(u,v)$ such that the subnetwork rooted at $v$ is a tree that represents $\cC|S$.

The next two results formed (implicitly) the direct inspiration for \textsc{Cass}, which was designed to be a generalization of these results.

\begin{lemma}
\label{lem:moveST}
Let $N$ be a network that represents a set of clusters $\cC$ on $\cX$. Let $S$ be a non-trivial ST-set
with respect to $\cC$. Then there exists a network $N'$ such that $r(N') \leq r(N)$, $\ell(N') \leq \ell(N)$,
$S$ is under a cut-edge in $N'$ and for each ST-set $S'$ such that $S' \cap S = \emptyset$ and $S'$ is
under a cut-edge in $N$, $S'$ is also under a cut-edge in $N'$.
\end{lemma}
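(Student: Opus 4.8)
The plan is to argue by structural surgery on $N$: we locate in $N$ a reticulation-free subtree that represents exactly the clusters $\cC|S$ (or can be transformed into one), ``cut it out'' and reattach it below a fresh cut-edge hung just below the root, and then verify that this rearrangement neither increases reticulation number nor level, and preserves any other disjoint ST-set already sitting under a cut-edge. Because $S$ is an ST-set, the clusters in $\cC|S$ are mutually compatible and hence define a unique tree $T_S$ on $S$; the target network $N'$ should have $T_S$ dangling below a cut-edge $(u,v)$ with $\cX(v)=S$, while the rest of $N'$ represents $\cC\setminus S$. So the bulk of the work is: (i) produce a network $N''$ on $(\cX\setminus S)\cup\{\star\}$, where $\star$ is a single dummy taxon standing in for $S$, such that $N''$ represents $(\cC\setminus S)$ together with the ``$\star$-augmented'' clusters $\{(C\setminus S)\cup\{\star\}: C\in\cC, S\subseteq C\}$, with $r(N'')\le r(N)$ and $\ell(N'')\le\ell(N)$; and (ii) expand $\star$ back into $T_S$, which adds no reticulations and no new biconnected activity (a cut-edge leads into $T_S$), so $r$ and $\ell$ are unchanged.

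For step (i) I would exploit the switchings. Since $S$ is not separated, every cluster of $\cC$ either contains $S$, is contained in $S$, or is disjoint from $S$; this is exactly the property that lets us ``contract'' the taxa of $S$ to a single point without destroying representability. Concretely: look at an arbitrary switching tree $T_N$ of $N$. The taxa of $S$ sit somewhere in $T_N$; let $v_{T_N}$ be the lowest node of $T_N$ whose leaf-descendant set contains $S$. One shows, using the unseparatedness of $S$ across all switchings simultaneously, that there is a single node $w$ of $N$ such that in \emph{every} switching the taxa of $S$ all descend from $w$ and $w$ is ``as low as possible'' with this property — more carefully, one identifies the part of $N$ that is responsible for the taxa in $S$ and argues it can be rerouted. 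This is the step I expect to be the main obstacle: making precise, at the level of the DAG $N$ (not just one switching), which edges and nodes ``belong to $S$'', and showing that detaching them and collapsing $S$ to a dummy leaf $\star$ below a cut-edge yields a valid network whose softwired cluster set still contains all of $\cC\setminus S$ and all the $\star$-augmented clusters. The technical heart is that pulling $S$'s material out of the tangled region can only remove reticulation edges or leave them where they are, never force new ones, and likewise cannot create a larger biconnected component — intuitively because $S$ was already behaving ``treelike'' inside $N$.

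For the preservation clause about disjoint ST-sets $S'$ already under a cut-edge in $N$: if $S'\cap S=\emptyset$ and $S'$ hangs below a cut-edge $(x,y)$ with subnetwork a tree representing $\cC|S'$, then either the taxa of $S$ lie entirely inside the subtree below $y$, or entirely outside it. If inside, then $S\subseteq S'$, but $S'$ is an ST-set so $\cC|S'$ is compatible hence $\cC|S$ is too — fine — and extracting $S$ from within that treelike region and rehanging it near the root leaves the cut-edge $(x,y)$ intact with $y$'s subtree now representing $\cC|(S'\setminus S)$, which is still a cut-edge situation, but we actually need $S'$ itself under a cut-edge; in this subcase one instead notes $S\subseteq S'$ would contradict disjointness unless $S=\emptyset$, so this subcase is vacuous for non-trivial $S$. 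If the taxa of $S$ lie entirely outside the subtree below $y$, then the surgery takes place entirely outside that subtree, so $(x,y)$ remains a cut-edge and $S'$ remains under it. Thus the clause follows essentially for free once the main construction respects locality — i.e. all edited edges are confined to the smallest subnetwork containing $S$, which is disjoint from the subtree realizing $S'$. I would structure the write-up as: first reduce to binary networks via Lemma~\ref{lem:transfBinary}; then do the dummy-taxon contraction and prove representability and the $r,\ell$ bounds; then re-expand $\star\mapsto T_S$; then dispatch the disjoint-$S'$ clause by the locality observation above.
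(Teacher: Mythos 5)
Your write-up correctly identifies the target configuration (a tree $T_S$ representing $\cC|S$ hanging below a cut-edge, with the rest of the network handling $\cC\setminus S$ via a placeholder), but the step you yourself flag as ``the main obstacle'' --- localizing which part of the DAG $N$ ``belongs to $S$'' and finding a node $w$ below which all of $S$ sits in every switching --- is a genuine gap, and in fact that claim is not true and not needed. The taxa of $S$ can be scattered arbitrarily throughout $N$ (unseparatedness is a property of the cluster set $\cC$, not of the geometry of $N$), so the lowest common ``dominator'' of $S$ across all switchings may well be the root; and your opening suggestion of rehanging $T_S$ below a fresh cut-edge just under the root is then fatal, because a cluster $C$ with $S\subsetneq C\subsetneq\cX$ could no longer be represented (every edge above the new cut-edge sees all of $\cX$, every edge elsewhere sees nothing of $S$). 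Your $\star$-augmented-cluster formulation is the right invariant, but you never say where $\star$ goes, and that placement question is exactly the unsolved heart of your plan. The locality argument for the disjoint-$S'$ clause also leans on ``all edited edges are confined to the smallest subnetwork containing $S$'', an object you have not defined and which need not exist as a localized region.

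The paper's proof sidesteps the localization problem entirely with a one-line trick: pick an arbitrary $x\in S$ and let $(u,v)$ be the edge into the leaf labelled $x$ --- this is automatically a cut-edge, no searching required. Delete the labels of all taxa in $S$, identify $v$ with the root of $T_S$, and tidy up (remove unlabelled outdegree-$0$ nodes, suppress degree-two nodes, collapse multi-edges). Since tidying only deletes material, $r$ and $\ell$ cannot increase. Representation of a cluster $C\not\subseteq S$ follows because unseparatedness gives $x\in C\iff S\subseteq C$: the single leaf $x$ already encodes, in every tree displayed by $N$, whether $S$ belongs to $C$, so the switching that represented $C$ in $N$ still works in $N'$ after the same deletions. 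The disjoint-$S'$ clause is then immediate, since the only leaves touched are those labelled by $S$ and the subtree witnessing $S'$ contains none of them. If you want to repair your proof, replace the switching-based localization with this representative-leaf grafting; as written, the argument does not go through.
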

\begin{proof}
Deferred to the appendix.
\end{proof}

The following corollary follows from the fact that maximal ST-sets are disjoint:
\begin{corollary}
\label{cor:maxundercut}
Let $N$ be a network that represents a set of clusters $\cC$. There exists a network $N'$
such that $r(N') \leq r(N)$, $\ell(N') \leq \ell(N)$ and all maximal ST-sets (with respect to $\cC$)
are below cut-edges.
\end{corollary}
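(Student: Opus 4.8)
The statement to prove is Corollary~\ref{cor:maxundercut}: given any network $N$ representing a cluster set $\cC$, there exists a network $N'$ with $r(N') \leq r(N)$, $\ell(N') \leq \ell(N)$, and \emph{all} maximal ST-sets below cut-edges simultaneously. The plan is to bootstrap this from Lemma~\ref{lem:moveST}, which moves a \emph{single} non-trivial ST-set under a cut-edge without disturbing disjoint ST-sets already under cut-edges, and without increasing $r$ or $\ell$. The obvious approach is an induction on the number of maximal ST-sets not yet under a cut-edge.

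First I would recall from Corollary~\ref{cor:mostn} that the maximal ST-sets of $\cC$ are uniquely defined, finite in number (at most $n$), mutually disjoint, and partition $\cX$; list them as $S_1, \ldots, S_t$. I would handle the trivial case first: if some $S_i = \cX$, then $\cC$ has only one maximal ST-set, it is trivial, every network ``puts it below a cut-edge'' in the degenerate sense (or the claim is vacuous for trivial ST-sets — indeed Lemma~\ref{lem:moveST} is only stated for \emph{non-trivial} ST-sets), so we may assume all the $S_i$ we care about are non-trivial, or simply note that trivial maximal ST-sets impose no constraint. Then I would apply Lemma~\ref{lem:moveST} iteratively: start with $N_0 = N$; having constructed $N_{j-1}$, if $S_j$ is already under a cut-edge in $N_{j-1}$ set $N_j = N_{j-1}$, otherwise apply Lemma~\ref{lem:moveST} with ST-set $S_j$ to obtain $N_j$ with $r(N_j) \leq r(N_{j-1})$, $\ell(N_j) \leq \ell(N_{j-1})$, $S_j$ under a cut-edge in $N_j$, and — crucially — every ST-set disjoint from $S_j$ that was under a cut-edge in $N_{j-1}$ still under a cut-edge in $N_j$.

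The key point making the induction go through is exactly the disjointness guaranteed by Corollary~\ref{cor:maxstdisjoint}: since $S_1, \ldots, S_{j-1}$ are all disjoint from $S_j$, and each of them is under a cut-edge in $N_{j-1}$ by the inductive hypothesis, the last clause of Lemma~\ref{lem:moveST} ensures they remain under cut-edges in $N_j$. Thus after processing all $t$ of them, $N' := N_t$ has $S_1, \ldots, S_t$ all simultaneously below cut-edges, and telescoping the inequalities gives $r(N') \leq r(N)$ and $\ell(N') \leq \ell(N)$.

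I do not expect any genuine obstacle here — this is a short inductive wrap-up whose entire content has been front-loaded into the ``moreover'' clause of Lemma~\ref{lem:moveST}. The only thing to be slightly careful about is the bookkeeping with trivial ST-sets ($\emptyset$ is never a maximal ST-set when $\cX$ has a non-trivial partition, and if $\cX$ itself is the unique maximal ST-set the statement is trivial), and the observation that ``being under a cut-edge'' is a property that, once achieved for a given ST-set, is preserved by subsequent steps precisely because maximal ST-sets are pairwise disjoint. So the proof is essentially: apply Lemma~\ref{lem:moveST} once per maximal ST-set, invoke disjointness (Corollary~\ref{cor:maxstdisjoint}) to see earlier ST-sets are not disturbed, and chain the inequalities.
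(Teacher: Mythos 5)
Your proof is correct and follows exactly the route the paper intends: the paper derives this corollary directly from Lemma~\ref{lem:moveST} together with the disjointness of maximal ST-sets (Corollary~\ref{cor:maxstdisjoint}), which is precisely your iterated application with the ``previously placed ST-sets are preserved'' clause doing the work. Your extra care about trivial ST-sets is harmless bookkeeping and does not change the argument.
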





The pseudocode for {\cass} was originally given in \cite{cass}. That exposition is however rather dense and technical. See Section 8.5 of \cite{HusonRuppScornavacca10} for a clearer detailed description.
Here we only give the core idea of the algorithm. Let us assume 
without loss of generality that we want to know, given a 
separating set of clusters, whether a simple network solution exists with reticulation number \emph{exactly} $k$, for some constant $k$.

{\cass} tries to answer this by searching through the space of all
maximal ST-set tree sequences of length at most $k$, attempting to build a network with reticulation number $k$ from each one.
It looks first at shorter maximal ST-set tree sequences, padding those of length less
than $k$ with empty ST-sets to attain a sequence of length exactly $k$. (As in the proof of Theorem \ref{thm:allbinary}, this models the situation
when removing a single SBR causes the reticulation number to drop by more than 1).
If there are no maximal ST-set tree sequences of length at most $k$ then {\cass} will correctly report 
 that no solutions with reticulation number $k$ or lower exist. Hence {\cass} implicitly computes and incorporates the MST lower bound.

Assuming maximal ST-set tree sequences of length at most $k$ \emph{do} exist, {\cass} examines each one to determine whether it can be constructively turned into a real solution. Let $\cS= (S_1, \ldots, S_k)$  be a
(possibly padded) maximal ST-set tree sequence of length-$k$ for $\cC$.
As in earlier sections we define $\cC_{i}$, $1 \leq i \leq k$, as $\cC \setminus S_1 \setminus \ldots \setminus 
S_{i}$, and we let $\cC_{0} = \cC$. {\cass} does not however work with the set $\cC_{i}$. Instead it works with $\cC'_{i}$ which is obtained from each $\cC_{i}$ by 
``collapsing'' \emph{every} maximal ST-set $S$ with respect to $\cC_{i}$ into a single new ``meta-taxon''. This is an \emph{extremely} greedy step, and is in some
sense an attempt to generalize Corollary \ref{cor:maxundercut}. The informal motivation is this: we know from Corollary \ref{cor:maxundercut} that
there exists some network $N$ with a minimum number of reticulations such that all maximal ST-sets are under cut-edges. Suppose we guess an SBR $T$ on $\cX' \subset \cX$ (where $\cX'$ is a maximal ST-set) of $N$; 
we only have to make polynomially-many guesses because there are only polynomially-many maximal ST-sets. This gives a
new network $N'$ on $\cX \setminus \cX'$ where perhaps not all maximal ST-sets (with respect to $\cC \setminus \cX'$) are under cut-edges. In particular, some SBRs of $N'$ might correspond to non-maximal ST-sets, of which
there are potentially exponentially many, so how do we efficiently guess an SBR of $N'$? Fortunately, we can transform $N'$ (in the sense of Corollary \ref{cor:maxundercut}) to
obtain a new network $N''$ such that $r(N'') \leq r(N')$ and where all maximal ST-sets (with respect to $\cC \setminus \cX'$) \emph{are} under cut-edges of $N''$. Hence we know that we again only have to make polynomially-many
guesses to locate an SBR of $N''$. Furthermore, \textsc{Cass} assumes that these maximal ST-sets will always remain below cut-edges, so it collapses them into the aforementioned meta-taxa. We iterate this entire process $k$ times,
concluding the ``inward'' phase of \textsc{Cass}.

This assumption is important because it affects the ``outward'' phase of {\cass}, which begins immediately after completion of the inward phase. As in for example Theorem \ref{thm:allbinary} the general idea is 
to start with a tree that represents $\cC_{k}$ and then to work backwards, first trying all pairs of edges from which to ``hang back'' a SBR corresponding to $S_k$, then all pairs of edges (of the resulting network) from which to hang back an SBR corresponding to $S_{k-1}$, and so on, 
down to $S_1$. However, before hanging back each $S_i$ it first \emph{decollapses} the maximal ST-sets that were collapsed into meta-taxa during the corresponding iteration of the inward phase.

In Section  \ref{app:proofCassBreaks} of the appendix we will
explicitly and exhaustively walk through a specific execution of the \textsc{Cass} algorithm and this is helpful for clarifying exactly how the algorithm works. 



\subsection{The {\cass} algorithm is not always optimal}
\label{subsec:cassbad}
In this section we present a counter-example proving that the {\cass} algorithm does not always minimize level. 
\begin{figure}[h]
  \centering
  \includegraphics[scale=.14]{./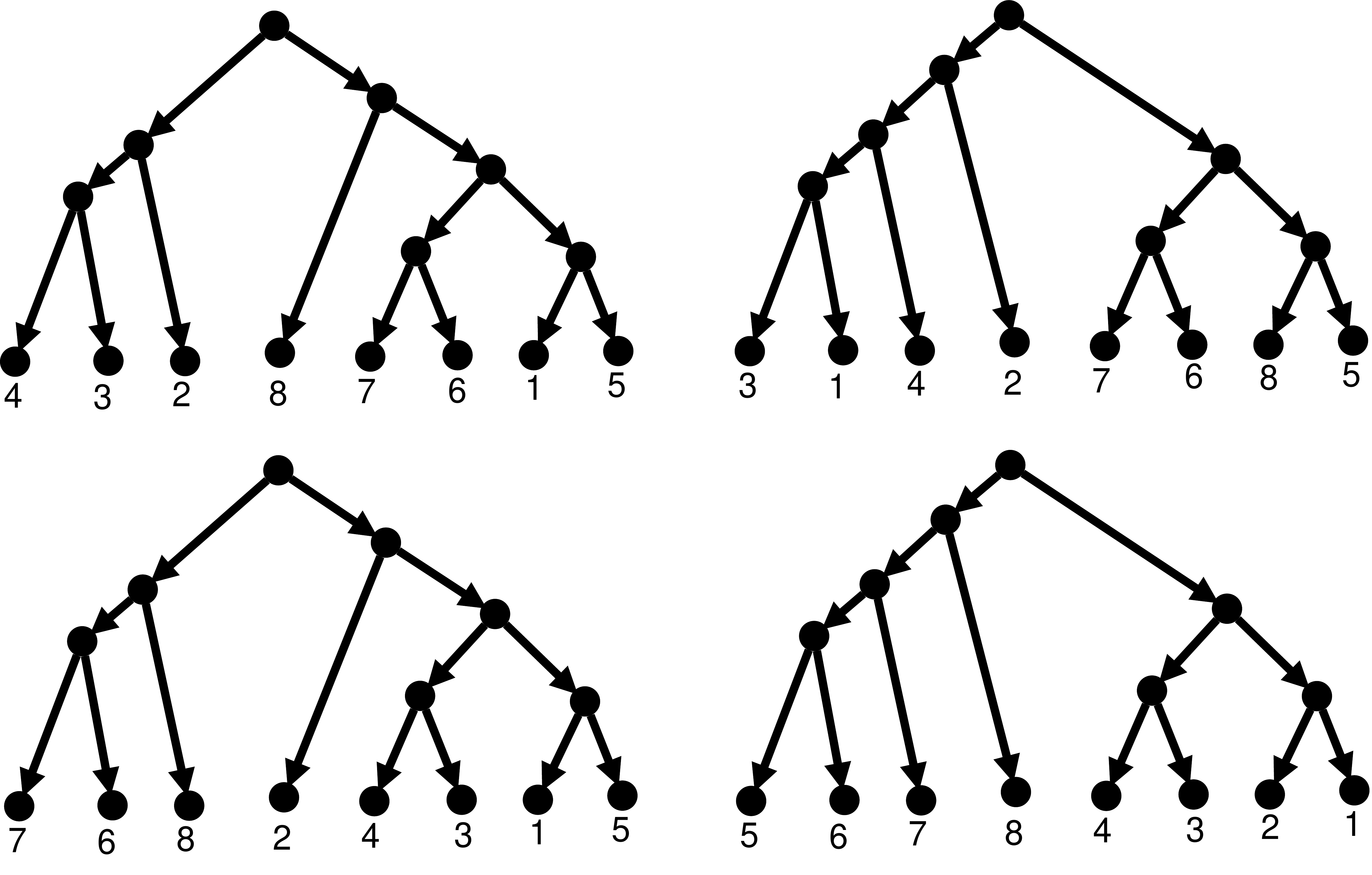}
  \caption{Let $\cT$ be the set of four trees shown here. The {\cass} algorithm returns a network $N$ that represents $Cl(\cT)$ where $r(N)=\ell(N)=4$. However, Figure \ref{fig:4tL3L4} shows that the true value of $r(Cl(\cT)) = \ell(Cl(\cT))$ is at most 3.}
  \label{fig:4t}
\end{figure}

\begin{figure*}
 \centering
\begin{tabular}{c}
\begin{tabular}{cc}
\includegraphics[scale=0.15]{./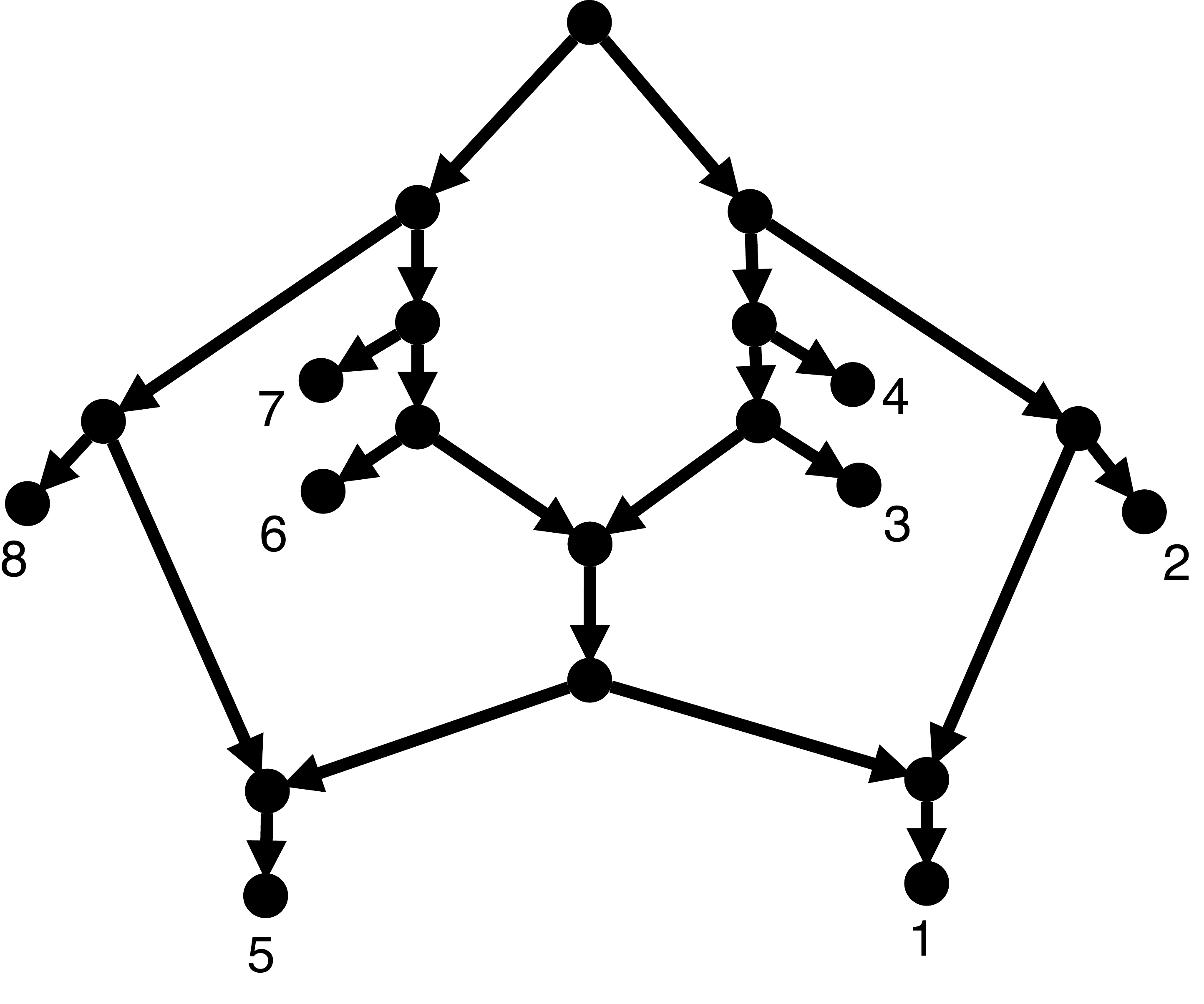} \hspace{1cm}
 &
\includegraphics[scale=0.23]{./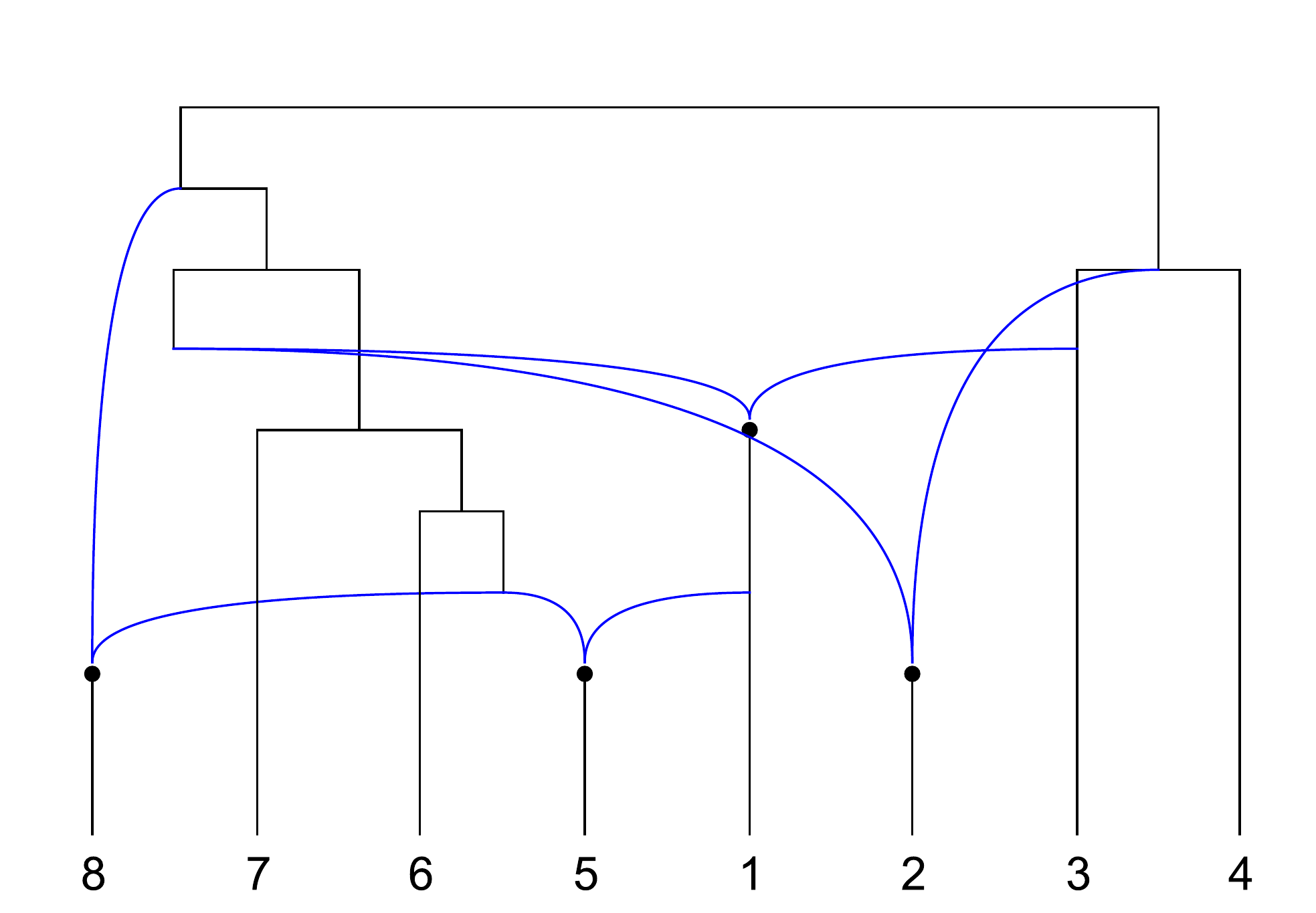}  
\\
(a) & (b) \\
\end{tabular}
\end{tabular}
\hfil
\caption{
(a) A simple level-3 network and (b) a simple level-4 network, both  representing $Cl(\cT)$ where $\cT$ is defined as described in Figure \ref{fig:4t}. The level-4 network was produced
by {\cassinden}.
}\label{fig:4tL3L4}
\end{figure*}

%

Consider the set $\cT_{4}$ of four binary trees shown in Figure \ref{fig:4t}. It is easy to verify that the set $\cC = 
Cl(\cT)$ is 
separating 
and every network that represents $\cC$ is thus simple or can easily be made simple. It is also easy to verify that the simple 
level-3 network in Figure \ref{fig:4tL3L4}(a) represents $\cC$; in the appendix we prove that this is optimal by showing that
any network that represents $\cC$ must have reticulation number 3 or higher. 

However, \cass~cannot find a level-3 network. To formally prove this 
we show in Section  \ref{app:proofCassBreaks}  of the appendix an exhaustive list of all possible 
executions of the algorithm with $k=3$. 
\textsc{Cass} returns  the simple level-4 network shown in Figure \ref{fig:4tL3L4}(b). To summarize, the problem with {\cass} seems to be 
that 
while 
the step of always collapsing at \emph{every} iteration \emph{all} maximal ST-sets and  treating them as meta-taxa (in the sense of
Corollary \ref{cor:maxundercut})
is a locally optimal move, it can force us 
to use too many reticulation edges when hanging (trees corresponding to) maximal ST-sets back in the outward phase. 
%

Note that this counter-example fits in a tradition of highly specific and complex counter-examples
in the phylogenetic network literature. In particular we note the very similar counter examples given initially in \cite{gusfielddecomp2007}, and (based on this) in 
\cite{husonetalgalled2009},
which showed that one cannot minimize reticulation number by optimizing independently over the connected components of the
incompatibility graph $IG(\cC)$. The relationship between these counter-examples - all of which are linked in some way or other to simple level-3 networks - seems to be that networks with 
minimum reticulation 
number have a very subtle internal structure that seems impervious to locally optimal and greedy strategies, but that these properties only start emerging for
level-3 and higher.


It is important to emphasize, however, that since we could not find \emph{any} non-synthetic dataset for which {\cass} does not find an optimal solution, we still have the feeling that {\cass} works quite well for real data.

\subsection{{\cass} is optimal for sets of clusters obtained from two trees}

Here we show that, despite the negative news in the previous section, {\cass} (and more generally {\cassinden}) correctly minimizes level in the case of clusters obtained from two not necessarily binary trees. Furthermore the algorithms also provably minimize reticulation number.

Note that the following theorem does not contradict the NP-hardness mentioned in Corollary \ref{cor:boundishard} because {\cass} only runs in polynomial time when it bounds its search to simple level-$\leq k$ networks, 
for a \emph{constant} $k$. 

\begin{theorem}
\label{thm:casscorrect}
Let $\cC = Cl(\cT)$ be a 
separating 
set of clusters where $\cT$ is a set of two not necessarily binary trees on $\cX$.  Then {\cass} constructs a simple network $N$ that represents $\cC$
such that $\ell(N) = r(N) = \ell(\cC)= r(\cC)$.
\end{theorem}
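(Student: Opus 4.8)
The plan is to combine three ingredients already developed in the paper: (i) the lower bound $r(\cC) \geq \ell(\cC) \geq p$, where $p$ is the MST lower bound (immediate from the definition of level and from Lemma~\ref{lem:lowerbound2}); (ii) the tightness of this bound in the two-tree case, namely $r(\cC) = p$ from Corollary~\ref{cor:tightbound}; and (iii) an analysis of what \textsc{Cass} actually does on its search through maximal ST-set tree sequences. Since $\cC$ is separating, any network representing $\cC$ is simple (Lemma~\ref{lem:simpleexists}), so $r(\cC) = \ell(\cC)$, and it suffices to show that \textsc{Cass} constructs a network $N$ with $r(N) = p$.

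First I would recall that \textsc{Cass}, when asked whether a solution with reticulation number exactly $k$ exists, searches all (possibly padded) maximal ST-set tree sequences of length at most $k$; it therefore implicitly computes the MST lower bound $p$ and will only succeed with $k = p$ if some length-$p$ maximal ST-set tree sequence can be ``hung back'' into a genuine network. The key claim is that for a length-$p$ maximal ST-set tree sequence $(S_1,\dots,S_p)$ of $\cC = Cl(\{T_1,T_2\})$, the outward phase of \textsc{Cass} succeeds using exactly one reticulation per step, producing a network with $r(N) = p$. The natural way to prove this is to show that Lemma~\ref{lem:hangback} in the two-tree case ($m=2$) only ever needs a witness set $W$ with $|W| \leq 1$, i.e., that a single reticulation edge suffices; more precisely, that \textsc{Cass}'s hanging-back move -- subdividing (at most) two edges and introducing one reticulation node -- is exactly what is demanded. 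When $m=2$, Lemma~\ref{lem:hangback} gives $r(N') \leq r(N) + (m-1) = r(N) + 1$, matching one reticulation per step over $p$ steps, hence total $\leq p$. So the target is to argue that \textsc{Cass}'s specific (greedy, ST-set-collapsing) implementation of the outward phase realizes this bound rather than overshooting it.

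The heart of the argument, and the main obstacle, is showing that \textsc{Cass}'s aggressive collapsing of \emph{all} maximal ST-sets into meta-taxa at every inward iteration does \emph{not} cause trouble in the two-tree case -- this is precisely the step that fails for four trees in Section~\ref{subsec:cassbad}. I would argue this by showing that, for $\cC = Cl(\{T_1,T_2\})$, the construction behind Corollary~\ref{cor:tightbound} / Theorem~\ref{thm:tightbound} can be made to respect the meta-taxon structure: the sequence $\cT_0 = \{T_1,T_2\}, \cT_1, \cT_2, \dots$ of pairs of trees obtained by successively pruning the $S_i$ (as in the proof of Theorem~\ref{thm:tightbound}) remains a pair of trees throughout, so every intermediate cluster set $\cC_i = Cl(\cT_i)$ still comes from two trees; therefore the MST lower bound is still tight at every stage, the collapsed meta-taxa correspond to genuine common subtrees (Corollary~\ref{cor:stedges}), and decollapsing them in the outward phase re-inserts exactly those subtrees under cut-edges, adding no reticulations. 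In other words, for two trees the ``locally optimal'' collapsing move is also globally optimal because the two-tree property is preserved under ST-set removal and the lower bound $p$ remains tight along the whole recursion.

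Assembling these pieces: \textsc{Cass} finds some length-$\leq p$ maximal ST-set tree sequence (one exists by Lemma~\ref{lem:lowerbound2}), runs its inward collapsing phase, and in the outward phase hangs back each $S_i$ using one reticulation each (justified by the $m=2$ case of Lemma~\ref{lem:hangback} together with the meta-taxon-preservation argument above), producing a simple network $N$ with $r(N) \leq p$. Combined with $r(\cC) \geq \ell(\cC) \geq p$ and $r(N) \geq r(\cC)$, we get $r(N) = \ell(N) = r(\cC) = \ell(\cC) = p$, which is the claimed chain of equalities. (The further identification of this common value with the hybridization number of $T_1, T_2$ is a separate statement handled elsewhere; here I would only note it follows by combining Corollary~\ref{cor:tightbound} with the known characterization of hybridization number, but it is not needed for the equalities asserted in Theorem~\ref{thm:casscorrect}.) I expect the only genuinely delicate point to be verifying that \textsc{Cass}'s outward phase, with its edge-pair guesses and its decollapsing steps interleaved, can always reproduce the specific network built in Theorem~\ref{thm:tightbound} -- this requires checking that the edges \textsc{Cass} needs to subdivide are always present at the moment it needs them, which is where I would spend the most care.
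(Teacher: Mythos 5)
Your plan follows essentially the same route as the paper's proof: separating implies simple (so level equals reticulation number), the MST lower bound $p$ is tight by Corollary~\ref{cor:tightbound}, and the network is built by invoking Lemma~\ref{lem:hangback} with $m=2$ once per element of a length-$p$ maximal ST-set tree sequence, each step adding exactly one reticulation, with \textsc{Cass}'s exhaustive search guaranteed to find the sequence and the right edges. The one point you flag as delicate (the interaction with meta-taxon collapsing) is dispatched in the paper simply by observing that Lemma~\ref{lem:hangback} places no assumption on the structure of the intermediate network beyond which clusters it represents, and that \textsc{Cass} tries subdividing \emph{all} pairs of edges (including the dummy-root edge), so the edges above the witnesses are always among its candidates--no separate meta-taxon-preservation argument is needed.
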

\begin{proof}
Let the MST lower bound for $\cC$ be $p$. Recall that, by Corollary \ref{cor:tightbound}, in this case $p=r(N)$. We know that there is a 
maximal ST-set tree sequence $ (S_1, \ldots, S_p)$. As explained in Section \ref{subsec:casshigh} {\cass}
will eventually find this maximal ST-set tree sequence. Now, Theorem \ref{thm:tightbound} essentially
works by invoking Lemma \ref{lem:hangback} $p$ times, and in the statement of Lemma \ref{lem:hangback} there
are absolutely no assumptions made about the structure of ``$N$'', other than that it represents a certain set of clusters. Hence
``$N$'' can just as well be one of the intermediate networks constructed by \textsc{Cass}. It remains only to show that {\cass}
can simulate the hanging-back construction described in the proof of Lemma \ref{lem:hangback}. This is definitely so, because
the proof of Lemma \ref{lem:hangback} requires the edges entering two witnesses (or the edge entering one witness and, to simulate the attachment
of a root edge, the edge connecting a dummy root to the real root) to be subdivided. {\cass} tries subdividing
all pairs of edges, including the edge between the dummy root and the real root, and hence will eventually subdivide the correct two edges.
\end{proof}

\noindent
 %
%

The proof of Theorem \ref{thm:casscorrect} not only shows that {\cass} is optimal for sets of clusters obtained from two not necessarily binary trees, but also that in this very special case the ``hanging back'' (i.e. outward) phase of {\cass} is in some sense
completely redundant. In particular: if we have already computed a maximal ST-set tree sequence, then we can easily compute a witness set for each of the maximal ST-sets in the sequence,
and these witness sets directly specify a sufficient set of edges to subdivide when hanging back the maximal ST-sets. So in the case of clusters coming from two trees {\cass} wastes rather a lot of time trying to hang back maximal ST-sets
from all possible pairs of edges, when in fact the information is already available to make this blind search unnecessary.\\
\\
Theorem \ref{thm:casscorrect} can actually be re-formulated and extended to general (i.e. not necessarily separating) sets of clusters $\cC$ obtained from two not necessarily binary trees. Indeed,  in Theorem \ref{th:cassworksnonbinary}, we will prove that, for such cluster sets, \cassinden
reconstructs a network $N$ such that 
%
$r(N) = r(\cC)$ and $\ell(N) = \ell(\cC)$.  However, we first need to prove Theorem \ref{thm:clusdecom} below, which is interesting in its own right, and  several auxiliary results.

\begin{observation}
\label{obs:sepstayssep}
Let $\cC$ be a set of clusters on $\cX$ and let $U \subseteq \cX$ be an 
unseparated set with respect to $\cC$. Then for
any $P \subseteq \cX$, $U \setminus P$ is unseparated with respect to $\cC 
| (\cX \setminus P)$.
\end{observation}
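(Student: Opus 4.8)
The plan is to prove the statement by contradiction: assume $U\setminus P$ is \emph{separated} with respect to $\cC|(\cX\setminus P)$ and derive that $U$ is separated with respect to $\cC$, contradicting the hypothesis. First I would unwind the restriction operator. Since $P\subseteq\cX$ we have $\cX\setminus(\cX\setminus P)=P$, so $\cC|(\cX\setminus P)=\cC\setminus P$; in particular every cluster of $\cC|(\cX\setminus P)$ is exactly of the form $C\setminus P$ for some $C\in\cC$. Hence a witness to the separation of $U\setminus P$ is some $C\in\cC$ such that $C\setminus P$ is incompatible with $U\setminus P$ (viewed as subsets of the ground set $\cX\setminus P$).

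Next I would invoke that $U$ is unseparated with respect to $\cC$, so $C$ is compatible with $U$, and consider the three defining cases. If $C\cap U=\emptyset$, then $(C\setminus P)\cap(U\setminus P)\subseteq C\cap U=\emptyset$. If $C\subseteq U$, then $C\setminus P\subseteq U\setminus P$. If $U\subseteq C$, then $U\setminus P\subseteq C\setminus P$. In each case $C\setminus P$ is compatible with $U\setminus P$ — disjointness, or a nesting, survives the deletion of the fixed set $P$ from both sides — contradicting the choice of the witness. Therefore no such witness exists and $U\setminus P$ is unseparated with respect to $\cC|(\cX\setminus P)$.

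I do not expect a genuine obstacle here: the statement is essentially a direct unfolding of the definitions of ``restriction'', ``compatible'' and ``(un)separated''. The only points requiring a moment's care are purely bookkeeping — that the ground set of the restricted instance is $\cX\setminus P$ and that its clusters are the traces $C\setminus P$ of clusters of $\cC$, and that set-theoretic compatibility is monotone under deleting a fixed element set from both arguments, so that clusters whose traces become empty or improper cause no trouble (the empty set, and the whole ground set, are compatible with everything). Once these are in place, the three-case check above is the entire proof.
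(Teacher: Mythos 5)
Your proposal is correct and is essentially the paper's own proof: both arguments reduce to the same three-case check that disjointness and the two nestings each survive deleting $P$ from both sides, with your contradiction framing being only a cosmetic repackaging of the paper's direct verification. The bookkeeping point that $\cC|(\cX\setminus P)=\cC\setminus P$ is consistent with the paper's definition of restriction, so nothing is missing.
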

\begin{proof}
$U$ is unseparated with respect to $\cC$ so for each cluster $C \in \cC$ 
we have either that
$C \cap U = \emptyset$, $C \subseteq U$ or $U \subseteq C$. For the case 
$C \cap U = \emptyset$ it is clear
that $(C \setminus P) \cap (U \setminus P) = \emptyset$. For the case $C 
\subseteq U$ we have that
$(C \setminus P) \subseteq (U \setminus P)$, and for the case $U \subseteq 
C$ we have that $(U \setminus P) \subseteq
(C \setminus P)$. 
\end{proof}

\begin{observation}
\label{obs:ccunsep}
Let $\cC$ be a set of clusters on $\cX$ and let $U$ be the union of the 
set of clusters in a connected
component $K$ of $IG(\cC)$. Then $U$ is unseparated.
\end{observation}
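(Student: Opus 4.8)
The plan is to prove Observation~\ref{obs:ccunsep} by contradiction: suppose $U = \bigcup K$ is separated, so there is a cluster $C \in \cC$ incompatible with $U$. The first step is to observe that $C$ cannot itself lie in the connected component $K$. Indeed, if $C \in K$ then $C \subseteq U$ by definition of $U$, and a cluster can never be incompatible with a set that contains it; so $C \notin K$.

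Next I would unpack what it means for $C$ to be incompatible with $U = \bigcup_{D \in K} D$. Incompatibility means $C \cap U \neq \emptyset$, $C \not\subseteq U$, and $U \not\subseteq C$. From $C \cap U \neq \emptyset$ together with $U = \bigcup_{D\in K} D$, there is some $D_1 \in K$ with $C \cap D_1 \neq \emptyset$. From $C \not\subseteq U$ there is a taxon $x \in C \setminus U$, hence $x \notin D$ for every $D \in K$; in particular $x \in C \setminus D$ for all $D \in K$. From $U \not\subseteq C$ there is a taxon $y \in U \setminus C$, so $y \in D_2 \setminus C$ for some $D_2 \in K$.

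The key step is to walk along a path in $K$ from $D_1$ to $D_2$ (which exists since $K$ is connected) and argue that at some edge of this path $C$ must become incompatible with one of the clusters on the path, contradicting $C \notin K$. Concretely, consider the clusters $D_1 = F_0, F_1, \ldots, F_t = D_2$ along such a path. Each $F_j$ is a subset of $U$, so $x \notin F_j$ for every $j$, i.e.\ $F_j \not\subseteq C$ for all $j$ (since $x \in C \setminus F_j$ would be needed for $F_j \subseteq C$; more precisely $x \notin F_j$ but we need $C \not\subseteq F_j$ --- note $x \in C\setminus F_j$ gives exactly $C \not\subseteq F_j$). We know $C \cap F_0 \neq \emptyset$ and, since $y \in F_t \setminus C$, the set $F_t \setminus C$ is nonempty. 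Track the quantity ``$C$ intersects $F_j$ but $C \not\supseteq F_j$ and $C \not\subseteq F_j$'' --- this is precisely ``$C$ incompatible with $F_j$.'' If $C$ were incompatible with some $F_j$, we would have $\{C, F_j\}$ an edge of $IG(\cC)$, forcing $C \in K$, contradiction. So for each $j$, $C$ is compatible with $F_j$: either $C \cap F_j = \emptyset$, or $C \subseteq F_j$, or $F_j \subseteq C$. Since $F_j \not\subseteq C$ always holds (as $x \in C \setminus F_j$, wait --- I need $x \notin F_j$, which I have, but $C \not\subseteq F_j$ requires a witness in $C \setminus F_j$; take $x$: $x \in C$ and $x \notin F_j$, good). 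Hence for each $j$ either $C \cap F_j = \emptyset$ or $F_j \subseteq C$. Now $j=0$: $C \cap F_0 \neq \emptyset$, so $F_0 \subseteq C$. And $j = t$: $y \in F_t$, $y \notin C$, so $F_t \not\subseteq C$, hence $C \cap F_t = \emptyset$. Thus the property ``$F_j \subseteq C$'' holds at $j=0$ and fails at $j=t$, so there is a first index $j^\ast \geq 1$ where it flips: $F_{j^\ast-1} \subseteq C$ but $F_{j^\ast} \not\subseteq C$, hence (by the dichotomy) $C \cap F_{j^\ast} = \emptyset$. But $F_{j^\ast-1}$ and $F_{j^\ast}$ are adjacent in $K$, so they are incompatible, meaning $F_{j^\ast-1} \cap F_{j^\ast} \neq \emptyset$; combined with $F_{j^\ast - 1} \subseteq C$ this gives $C \cap F_{j^\ast} \supseteq F_{j^\ast-1}\cap F_{j^\ast} \neq \emptyset$, contradicting $C \cap F_{j^\ast} = \emptyset$.

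The main obstacle is getting the path argument clean: one must be careful that ``incompatible in $IG(\cC)$'' requires genuine incompatibility (nonempty intersection, neither containment), and that the monotone quantity being tracked ($F_j \subseteq C$) indeed has the claimed boundary values and flips at an adjacency where the two clusters must intersect. Once that bookkeeping is set up the contradiction is immediate, so I expect the write-up to be short.
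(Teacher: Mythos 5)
Your proof is correct and takes essentially the same route as the paper: both arguments reduce to the dichotomy that every cluster of $K$ must be either disjoint from $C$ or contained in $C$ (since $C\not\subseteq U$ rules out $C\subseteq F_j$), with both types occurring, and then exploit connectivity of $K$ to reach a contradiction. The paper phrases the last step as a partition with no crossing edges (hence $K$ disconnected), while you walk a path and locate the flipping edge --- two formulations of the same fact.
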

\begin{proof}
Suppose $U$ is \emph{not} unseparated. Then there must exist some $C \in 
\cC$ such that $C \not \subseteq U$,
$U \not \subseteq C$ and $C \cap U \neq \emptyset$. Clearly $C$ cannot be 
incompatible with any cluster in the
connected component $K$, because
then $C$ would also be in the connected component $K$ and thus $C 
\subseteq U$. Hence every cluster in the connected
component $K$ is either disjoint from $C$, or a subset of it, and there is 
at least one of each type of cluster because
$U \setminus C \neq \emptyset$ and $U$ is the union of all the clusters in 
$K$. However, clusters
in $K$ that are disjoint from $C$, are always compatible with clusters in 
the connected
component that are contained inside $C$, so the connected component is not 
connected, contradiction. 
\end{proof}

\begin{observation}
\label{obs:subsetST}
Given a set of clusters $\cC$ on $\cX$, let $S$ be an ST-set with respect to $\cC$ and let $U$ be an unseparated set such that $U \subseteq S$. Then $U$ is also an ST-set  for $\cC$. 
\end{observation}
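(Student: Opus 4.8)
The plan is to verify directly that $U$ satisfies both defining conditions of an ST-set with respect to $\cC$: that $U$ is not separated by $\cC$, and that any two clusters in $\cC|U$ are compatible. The first condition is immediate from the hypothesis: $U$ is assumed to be unseparated, which is exactly the statement that no cluster of $\cC$ is incompatible with $U$. So the only real work is the laminarity condition on $\cC|U$.

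For the second condition, observe that $\cC|U$ is the restriction of $\cC$ to $U$, and since $U \subseteq S$, every cluster of the form $C \cap U$ with $C \in \cC$ can be rewritten as $(C \cap S) \cap U = (C \cap U)$. The key point is that $\cC|U$ is obtained from $\cC|S$ by further restricting to $U$: formally, $\cC|U = (\cC|S)|U$, because restriction is transitive when $U \subseteq S$. Now take two clusters $D_1, D_2 \in \cC|U$; write $D_i = C_i \cap U$ for $C_i \in \cC$. Then $C_i \cap S \in \cC|S$, and since $S$ is an ST-set, $C_1 \cap S$ and $C_2 \cap S$ are compatible, i.e. one of them is contained in the other or they are disjoint. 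Intersecting both with $U$ preserves this: if $(C_1 \cap S) \subseteq (C_2 \cap S)$ then $D_1 \subseteq D_2$, if they are disjoint then $D_1 \cap D_2 = \emptyset$, and symmetrically. Hence any two clusters of $\cC|U$ are compatible.

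Combining the two verified conditions, $U$ is an ST-set with respect to $\cC$, as required. I do not expect any genuine obstacle here; the only point requiring a moment's care is the transitivity of restriction ($\cC|U = (\cC|S)|U$ when $U \subseteq S$) and the elementary fact that intersecting a pair of compatible (nested-or-disjoint) sets with a common set $U$ again yields a compatible pair. Both are routine set-theoretic manipulations.
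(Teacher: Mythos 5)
Your proof is correct, but it takes a genuinely different route from the paper's for the only non-trivial part, namely the laminarity of $\cC|U$. The paper argues by contradiction: if $C_1,C_2\in\cC|U$ were incompatible, each would have to be a non-empty proper subset of $U$, and the unseparation of $U$ then forces the original clusters of $\cC$ giving rise to them to be contained in $U$, hence equal to $C_1$ and $C_2$; these are then two incompatible clusters of $\cC|S$ sitting inside $S$, contradicting that $S$ is an ST-set. You instead observe that $\cC|U=(\cC|S)|U$ and that intersecting a nested-or-disjoint pair with a common set preserves the nested-or-disjoint property, so laminarity of $\cC|S$ passes directly to $\cC|U$. Your argument is the more economical one: it never invokes the hypothesis that $U$ is unseparated in the laminarity step (that hypothesis is needed only for the first ST-set condition, which is immediate), and it therefore proves the slightly stronger fact that $\cC|U$ is laminar for \emph{every} subset $U\subseteq S$ of an ST-set, unseparated or not. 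The paper's contradiction argument, by contrast, leans on unseparation to pull the offending restricted clusters back to genuine clusters of $\cC$. Both are valid; yours isolates more cleanly which hypothesis is responsible for which conclusion.
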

\begin{proof}
We only need to show that all pairs of clusters in $\cC|U$ are compatible. Clearly, for each $C \in \cC$ we have that $C|U \subseteq U$. Now, recall that, because $U$ is unseparated, for each $C \in \cC$ we have either $C \subseteq U$, $U \subseteq C$ or $C \cap U = \emptyset$.  Suppose by contradiction that for some $C_1 \neq C_2 \in \cC|U$, $C_1$ and $C_2$ are incompatible. But then $\emptyset \subset C_1, C_2 \subset U$. But in that case $C_1, C_2 \in \cC$ and $C_1, C_2 \subset S$, contradicting the fact that $S$ was an ST-set.
\end{proof}

For a set of clusters $\cC$ and an unseparated set $U$ with 
respect to $\cC$, $\cC_{U \rightarrow u}$
denotes the new cluster set obtained by replacing all elements of $U$ with a 
single new taxon $u$ (i.e. ``collapsing''
$U$ into a single taxon). 

\begin{figure*}
  \centering
  \includegraphics[scale=0.15]{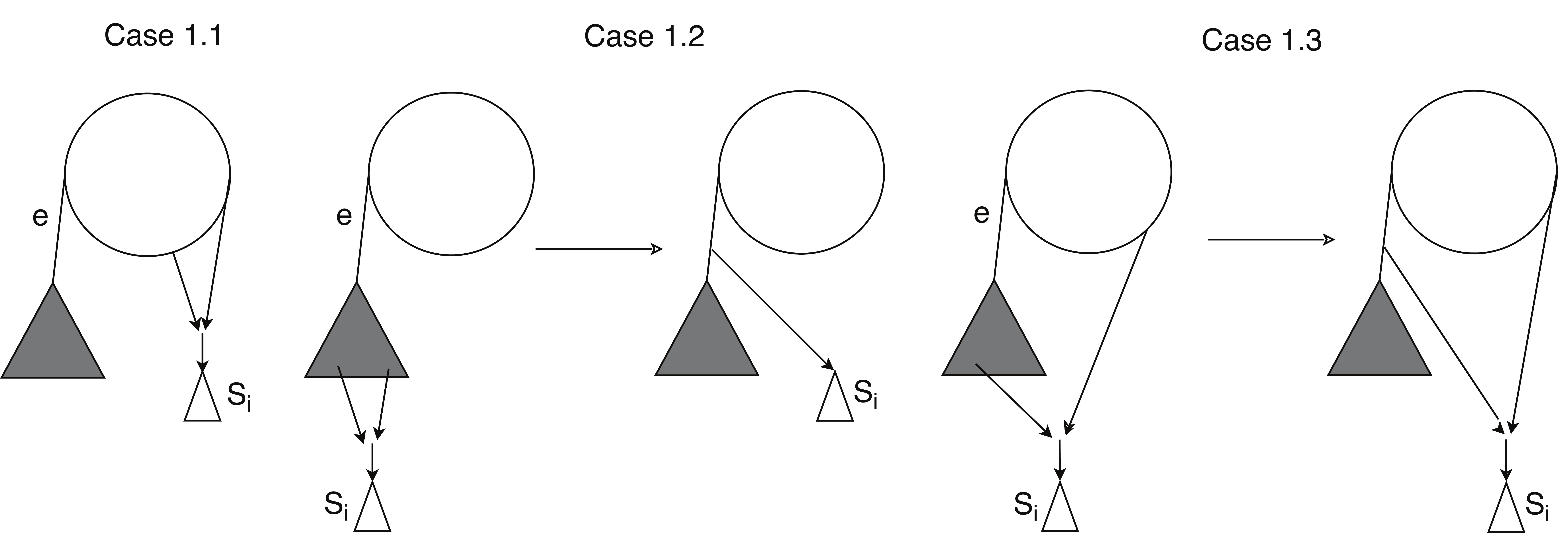}
  \caption{The three subcases covered by Case 1, which concerns the case 
when the ST-set $S_i$ that is 
being hung back, is disjoint from $U$. The parts of the network containng elements of $U$ are depicted in grey.}
  \label{fig:case1}
\end{figure*}

\begin{theorem}
\label{thm:clusdecom}
Let $\cT = \{T_1, T_2\}$ be two not necessarily binary trees on $\cX$, and 
let $\cC = Cl(\cT)$. Let $U \subseteq \cX$
be an unseparated set with respect to $\cC$. Then $r(\cC)= r(\cC|U) + 
r(\cC_{U \rightarrow u})$.
\end{theorem}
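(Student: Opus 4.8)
The plan is to prove the equality $r(\cC) = r(\cC|U) + r(\cC_{U\to u})$ by two inequalities. First I would establish the "$\leq$" direction, which should be the easier half. Here I take an optimal network $N_U$ for $\cC|U$ and an optimal network $N_u$ for $\cC_{U\to u}$, and glue them: in $N_u$ the taxon $u$ is a leaf, and since $U$ is unseparated with respect to $\cC$, by Lemma \ref{lem:nonbinaryunseptree} (more precisely its ST-set corollaries, noting $U$ need not itself be an ST-set, so I will instead use the fact that $U$ is represented by a node or edge in each of the two trees $T_1,T_2$, and hence the clusters crossing the ``boundary'' of $U$ behave well) I can replace the leaf $u$ by (a network computing) $\cC|U$. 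The resulting network has reticulation number $r(\cC|U)+r(\cC_{U\to u})$; one checks it represents every $C\in\cC$ by splitting into the cases $C\subseteq U$, $U\subseteq C$, and $C\cap U=\emptyset$, exactly as in the proof of Lemma \ref{lem:hangback}. The point $U\subseteq C$ requires that $C\setminus U$ together with the contracted taxon $u$ is a cluster of $\cC_{U\to u}$, which holds by definition of that operation and by unseparatedness.

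The "$\geq$" direction is the main obstacle and is where I expect the $m=2$ hypothesis (two trees) to be essential — note the statement is false for three or more trees, mirroring the failure of the unification results of \cite{twotrees}. The strategy is: take an optimal network $N$ representing $\cC$ with $r(N)=r(\cC)$, and show it can be "decomposed along $U$" into two networks, one representing $\cC|U$ and one representing $\cC_{U\to u}$, whose reticulation numbers sum to at most $r(N)$. Concretely I would induct on $r(N)$, peeling off SBRs using Lemma \ref{lem:treeseq} and Observation \ref{obs:SBR_ST}. Given an SBR $T'$ on taxa set $\cX'$ (an ST-set of $\cC$), there are two cases. If $\cX'\cap U=\emptyset$, then removing it commutes cleanly with the contraction of $U$ (using Observation \ref{obs:sepstayssep} that $U$ stays unseparated in $\cC\setminus\cX'$), and the network $N\setminus_{T'}$ has one fewer reticulation, so induction applies after re-attaching $T'$ on the $\cC_{U\to u}$ side. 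If $\cX'$ meets $U$, then since $\cX'$ is an ST-set and $U$ is unseparated, by Observation \ref{obs:subsetST} either $\cX'\subseteq U$ or $\cX'\cap U=\emptyset$ or $U\subseteq\cX'$ — actually one argues $\cX'\subseteq U$ in the relevant situation, so $T'$ lives entirely "inside" $U$ and can be peeled off on the $\cC|U$ side. Choosing at each step an SBR that lies either wholly inside $U$ or wholly outside it (here is exactly where I expect two trees to guarantee such a "clean" SBR exists while three trees do not) lets the induction go through, and the bookkeeping yields $r(\cC)\geq r(\cC|U)+r(\cC_{U\to u})$.

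The subtle point — and the part I would spend the most care on — is justifying the existence at each inductive step of an SBR of the current network that is comparable with $U$ (either contained in $U$ or disjoint from it), so that the two "sides" never get entangled. I suspect this is handled by Corollary \ref{cor:tightbound} (the MST lower bound being tight for two trees) together with a careful choice of the maximal ST-set tree sequence that respects the partition induced by $U$ on the leaf set: one takes a maximal ST-set tree sequence of $\cC|U$ and one of $\cC_{U\to u}$ and interleaves them into a single maximal ST-set tree sequence of $\cC$ of total length $r(\cC|U)+r(\cC_{U\to u})$, which by Lemma \ref{lem:lowerbound2} forces $r(\cC)$ to be at most this — and combined with the $\leq$ direction, an equality. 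So in fact the cleanest route may be to prove $\geq$ via $r(\cC)\leq$ the interleaved-sequence length using Theorem \ref{thm:tightbound}/Corollary \ref{cor:tightbound}, rather than by direct surgery on $N$; I would pursue both and present whichever keeps the case analysis shortest.
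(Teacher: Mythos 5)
Your ``$\leq$'' direction (replace the leaf $u$ in an optimal network for $\cC_{U\rightarrow u}$ by a network for $\cC|U$, and check the three cases $C\subseteq U$, $U\subseteq C$, $C\cap U=\emptyset$) is fine and is exactly what the paper leaves implicit. The gap is in the ``$\geq$'' direction, in both of your proposed routes. Your fallback route (b) has the inequality pointing the wrong way: exhibiting an interleaved maximal ST-set tree sequence of $\cC$ of length $r(\cC|U)+r(\cC_{U\rightarrow u})$ only shows, via Lemma \ref{lem:lowerbound2} and the two-tree tightness of Corollary \ref{cor:tightbound}, that $r(\cC)\leq r(\cC|U)+r(\cC_{U\rightarrow u})$ --- i.e.\ the same upper bound you already have from gluing; two upper bounds cannot be ``combined into an equality''. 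What is needed is the reverse extraction: starting from the tight sequence of length $r(\cC)$ for $\cC$ itself, one must recover the two summands. Your route (a) is closer in spirit but rests on a claim that is not available: Observation \ref{obs:subsetST} does not give the trichotomy ``$\cX'\subseteq U$ or $\cX'\cap U=\emptyset$ or $U\subseteq\cX'$'' for an ST-set $\cX'$ versus an unseparated set $U$; an SBR's taxa set can genuinely straddle $U$, and nothing guarantees that some optimal network possesses an SBR comparable with $U$. That straddling configuration is precisely the crux, and hoping that ``two trees guarantee a clean SBR'' is the missing idea, not a detail.

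The paper resolves this not by surgery on an arbitrary optimal network but by rebuilding one: take the maximal ST-set tree sequence $(S_1,\ldots,S_p)$ with $p=r(\cC)$ (Corollary \ref{cor:tightbound}) and redo the witness-based hang-back of Lemma \ref{lem:hangback}, adding exactly one reticulation per step, while maintaining the invariant that the elements of $U$ currently present are \emph{exactly} the taxa below a single cut-edge (Lemma \ref{lem:moveST} supplies the base case). This forces a case analysis on how $S_i$ and its two witnesses $w_1,w_2$ sit relative to $U$: $S_i\cap U=\emptyset$, $S_i\subseteq U$, and the straddling case $S_i\cap U\neq\emptyset\neq S_i\setminus U$, where one shows $R=S_i\setminus U$ is itself an ST-set and prunes/regrafts the subtree on $R$ onto the cut-edge above $U$ before falling back to the contained case. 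The resulting network has $r(\cC)$ reticulations and decomposes along the cut-edge into a network representing $\cC|U$ and one representing $\cC_{U\rightarrow u}$, giving $r(\cC)\geq r(\cC|U)+r(\cC_{U\rightarrow u})$. Without this invariant-maintaining reconstruction (or a genuine substitute handling the straddling case), your argument does not close.
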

\begin{proof} Let $p = r(\cC)$. We know by Corollary \ref{cor:tightbound} that there 
exists a
maximal ST-set tree sequence $(S_1, \ldots, S_p)$. As usual we define $\cC_i$, $1 
\leq i \leq p$, as
$\cC \setminus S_1 \setminus ... \setminus S_i$, and we let $\cC_0 = \cC$. 
We let $\cX_i = \cup_{C \in \cC_i} C$
where $\cX_{0} = \cX$. (Note that $\cX_i=\cX \setminus S_1 \setminus ... \setminus S_i$). Since $U \setminus S_1 \setminus ... \setminus S_i=  \cX_i \cap U$,  by repeated 
application of Observation \ref{obs:sepstayssep},
$\cX_i \cap U$ is unseparated with respect to $\cC_i$ for $0 \leq i \leq 
p$. Now, recall (see proof of Lemma \ref{lem:hangback}) that for each $S_i$ we can identify
a set of two \emph{witnesses} (where perhaps one of the witnesses 
is the symbol $\rho$ representing the root). 

Here we show, for each $S_i$, how to hang back a tree representing 
$\cC_{i-1}|S_i$ from a network $N$ 
representing $\cC_i$ to obtain a network $N'$
representing $\cC_{i-1}$ where $r(N')=r(N)+1$ \emph{and} in $N'$ the taxa 
$\cX_{i-1} \cap U$ are \emph{exactly} the set of
taxa below some cut-edge. The witnesses of $S_i$ guide us
how to do this. If we repeat this $p$ times we will obtain a 
network with $p$ reticulations (and reticulation number $p$) that represents
$\cC$ and such that the taxa in $U$ are exactly the subset of taxa below 
some cut-edge. The theorem will then follow.
%

The first thing to do is to study the earliest point at which some 
elements of 
$U$ are added back into the network.
This is 
an important ``base case''.
Let us thus consider the largest value of $i$ such that $\cX_{i} \cap U 
\neq \emptyset$. Let $i' $ be equal to this
value. Now, suppose $i'  = p$.
We saw that by repeated 
application of Observation \ref{obs:sepstayssep} $\cX_p \cap U$ is 
unseparated with respect to $\cC_p$. Furthermore
we know that the clusters $\cC_{p}$ can be represented by a tree, so 
$\cX_p \cap U$ is actually an ST-set. Hence
we can assume without loss of generality (by Lemma \ref{lem:moveST})
that the tree that represents 
$\cC_{p}$, has a cut-edge such that
$\cX_{p} \cap U$ is exactly the set of taxa beneath it. Alternatively,
suppose $i'  < p$. In this case the first elements of $U$ that are
reintroduced into the network are a (not necessarily strict) subset
of $S_{i' +1}$, so we have $\cX_{i' } \cap U$ = $S_{i' +1} \cap U$. Given that 
$\cX_{i' } \cap U$ is unseparated w.r.t. $C_{i' }$, and $S_{i' +1} \cap U$
is a subset of an ST-set, it follows by Observation \ref{obs:subsetST} that $S_{i' +1} \cap U$ is also
an ST-set. We may thus assume without loss of generality that
$\cX_{i' } \cap U$ is exactly the set of taxa below a cut-edge (again
thanks to Lemma \ref{lem:moveST}).

Henceforth we may assume that the network 
$N$ that we want to hang (a tree corresponding to) $\cC_{i-1}|S_i$ back from, contains at least one 
taxon of $U$. We will make heavy use of this fact. Let $e$ be the cut-edge 
of $N$ which the elements of $\cX_{i} \cap U$ are below. Let $w_1, w_2$ be 
the two witnesses 
for $S_i$, where in some cases $w_2 = \rho$
(representing the root).\\

There are several cases to consider. The first case is when the tree that 
we are hanging back, is disjoint from $U$. See also Figure 
\ref{fig:case1}.\\
\\
\textbf{Case 1) $S_i \cap U = \emptyset$}\\
\\
\textbf{Subcase 1.1) $\{w_1, w_2\} \cap U = \emptyset$}. In this case we 
can simply hang back from $\{w_1, w_2\}$ because we are not adding any new 
elements of $U$ and we are not subdividing any edge reachable from $e$. 
Hence $e$ remains the cut-edge which all present elements of $U$ are 
below.\\
\\
\textbf{Subcase 1.2) $\{w_1, w_2\} \subseteq U$}. If we 
simply hang $S_i$ back from $\{w_1, w_2\}$ then we obtain a network that 
represents $\cC_{i-1}$. However, we see from this that every cluster $C$ 
in $\cC_{i-1}$ that is a strict superset of $S_i$, must contain at least 
one element of $U$. $S_i$ is disjoint from $U$ so by unseparation $C$ must 
also contain all other elements of $U$ in the network. Hence we 
do not actually need to put $S_i$ below a reticulation
at all: we can simply attach it to a single new cut-edge that subdivides 
$e$. (This case cannot actually occur because it saves a reticulation and 
hence implies that $r(\cC) < p$.)\\
\\
\textbf{Subcase 1.3) (wlog) $w_1 \in U$, $w_2 \not \in U$}. Suppose 
we simply hang back from $\{w_1, w_2\}$, obtaining a network $N'$ that 
represents $\cC_{i-1}$. Note that
after doing this any cluster that passes through the reticulation edge 
starting just above $w_1$, must (because of the unseparation of $U$)
contain all elements of $U$ that are in the network. So we can subdivide 
the cut-edge $e$ and move the tail of of that reticulation edge to the 
newly created vertex.\\
\begin{figure*}
  \centering
  \includegraphics[scale=0.15]{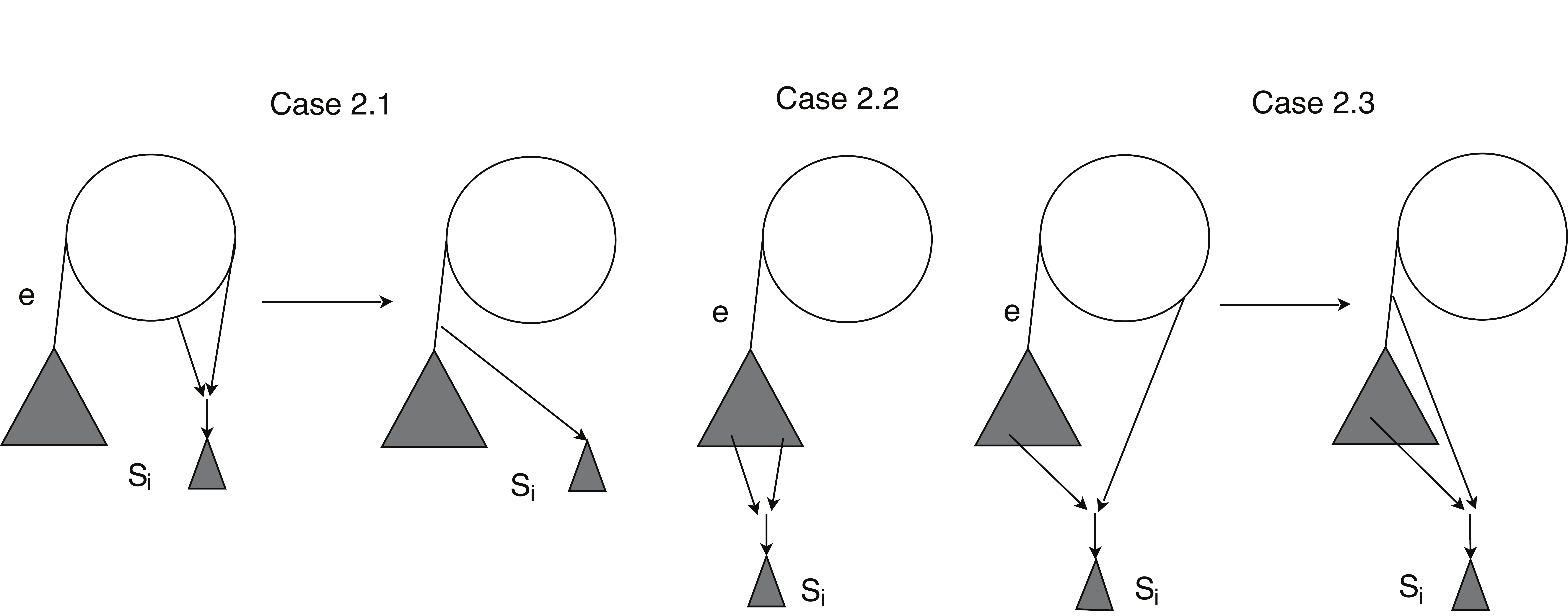}
  \caption{The three subcases covered by Case 2, which concerns the case 
when the ST-set $S_i$ that is 
being hung back, is a subset of $U$. The parts of the network containing elements of $U$ are depicted in grey.}
  \label{fig:case2}
\end{figure*}

The next case is when all elements of $S_i$ are in $U$, see also Figure \ref{fig:case2}.\\
\\
\textbf{Case 2) $S_i \subseteq U$}\\
\\
\textbf{Subcase 2.1) $\{w_1, w_2\} \cap U = \emptyset$}. In this case we 
don't need 
a reticulation at all: we can just hang $S_i$ from a single new cut-edge 
that subdivides $e$. (This case cannot actually happen because it saves a 
reticulation: see case 1.2).\\
\\
\textbf{Subcase 2.2) $\{w_1, w_2\} \subseteq U$}. This case is fine 
because if we hang back from $w_1$ and $w_2$ all
the elements of $U$ remain below the cut-edge $e$.\\
\\
\textbf{Subcase 2.3) (wlog) $w_1 \in U$, $w_2 \not \in U$}. Suppose 
we hang back from $w_1$ and $w_2$ to obtain a network that represents 
$\cC_{i-1}$. In this case we 
could move the reticulation edge that starts just above $w_2$ (or at the 
root, in the case that $w_2 = \rho$) to subdivide the cut-edge $e$.\\
\begin{figure}
  \centering
  \includegraphics[scale=0.18]{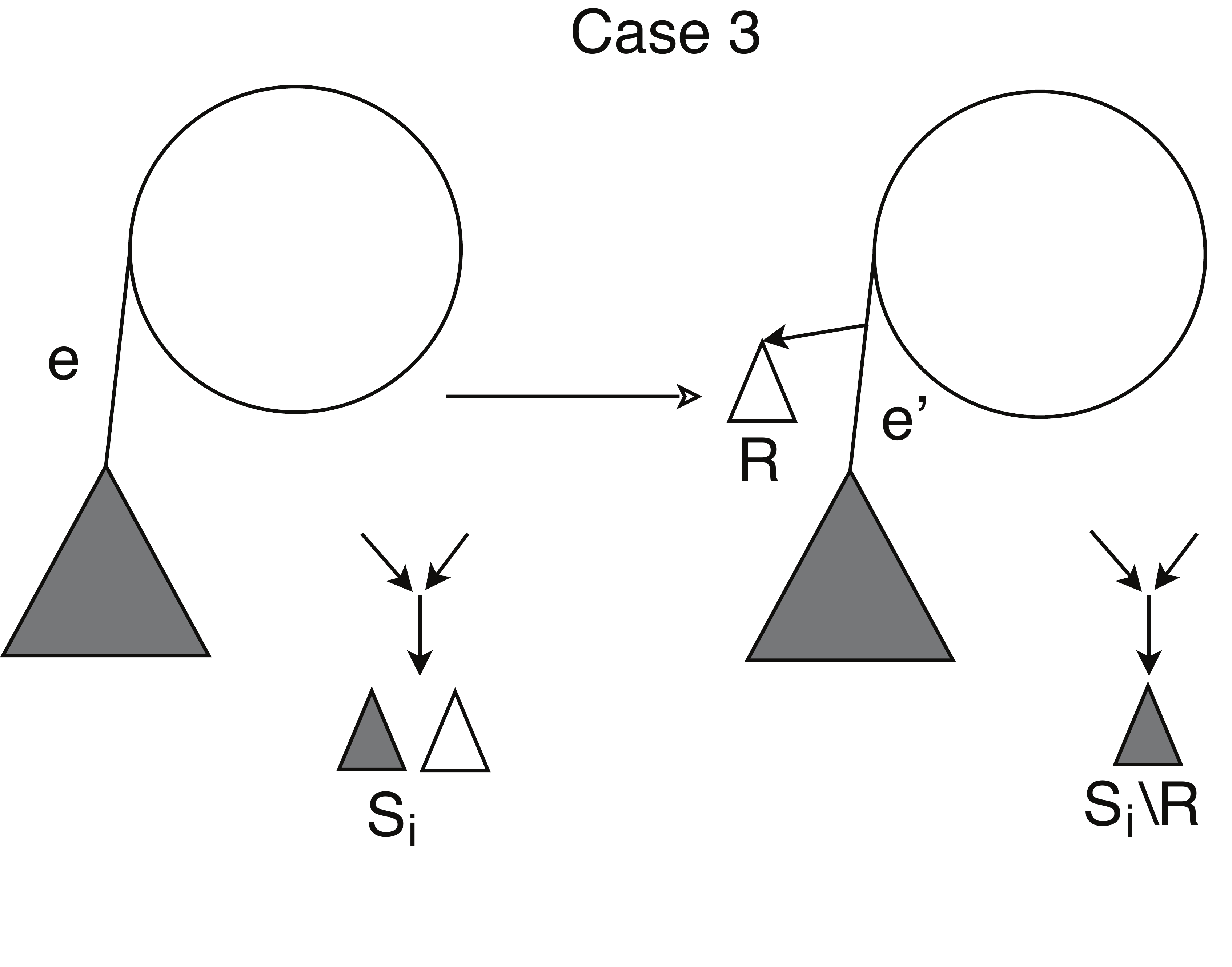}
  \caption{Case 3, which concerns the case 
when the ST-set $S_i$ that is 
being hung back, contains elements of $U$ \emph{and} elements not in $U$.  The parts of the network containing elements of $U$ are depicted in grey. Note that in this case we do not care where the reticulation edges connect to the rest of the network. Here $R = S_i \setminus U$.}
  \label{fig:case3}
\end{figure}

The final case is where $S$ contains at least one element of $U$ and at 
least one element not in $U$. See also Figure \ref{fig:case3}.\\
\\
\textbf{Case 3) $S_i \cap U \neq \emptyset$ and $S_i \setminus U \neq 
\emptyset$}\\
\\
In this case we will apply a transformation which brings us back into Case 
2. Let $R = S_i \setminus U$. Suppose we hang
$S_i$ back from its two witnesses $w_1$ and $w_2$, to obtain a network 
$N'$ that represents $\cC_{i-1}$. Consider any
cluster $C \in \cC_{i-1}$ such that $C \cap R \neq \emptyset$. Then either 
$C \subseteq R$ or
$\cX_{i-1} \cap U \subseteq C$. (The second condition holds because, if $C 
\not \subseteq R$, then it must contain some element in $S_i \cap U$, and 
hence all elements of $U$ in the network). So a cluster $C$
that is a \emph{strict} subset of $S_i$ is either a subset of $R$ or 
disjoint from $R$. Indeed, if $C$ contains one element of $R$ and one of the unseparated set $U$, $C$ must contain all elements of $U$, a contradiction since $C$ is a {strict} subset of $S_i$.
It follows that $R$ is unseparated so by Observation \ref{obs:subsetST} 
$R$ is an ST-set. 
Hence we may assume (without loss of generality) that $R$ is the taxa set 
of some subtree $T'$ below a cut-edge in the tree representing 
$\cC_{i-1}|S_i$ that we hung back. Now, we can prune $T'$ and regraft it 
back onto the network at a new vertex obtained by subdividing $e$. It can 
be verified that after this prune/regraft move the resulting network still 
represents $\cC_{i-1}$. It remains only to apply Case 2, taking $w_1, w_2$ 
as the witnesses and $S_i \setminus R$ as the ST-set that we want to hang 
back. 
\end{proof}

\begin{theorem}
\label{th:cassworksnonbinary}
Let $\cT=\{T_1,T_2\}$ be two not necessarily binary trees on  $\cX$, and let $\cC = Cl(\cT)$.
When given $\cC$ as input, {\cassinden} computes a level-$\ell(\cC)$ network 
$N$ with reticulation number $r(\cC)$ that represents $\cC$.
\end{theorem}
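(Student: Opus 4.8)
The plan is to reduce the general (possibly non-separating) case to the separating case already handled by Theorem \ref{thm:casscorrect}, using the divide-and-conquer structure of {\cassinden} together with the decomposition result Theorem \ref{thm:clusdecom}. Recall that {\cassinden} processes each non-trivial connected component $K$ of $IG(\cC)$ independently: it turns the clusters in $K$ into a tangled (hence separating) cluster set, runs {\cass} on it, and then glues the resulting simple networks back together along cut-edges. The final network $N$ has $r(N)$ equal to the sum of the reticulation numbers of the per-component simple networks and $\ell(N)$ equal to their maximum level. So the theorem will follow if I can show two things: (i) for each component, the separating cluster set handed to {\cass} is of the form $Cl(\{T_1',T_2'\})$ for two trees, so Theorem \ref{thm:casscorrect} applies and {\cass} returns a simple network of optimal reticulation number $=$ level for that piece; and (ii) the sum/maximum taken over components actually equals $r(\cC)$ and $\ell(\cC)$ respectively.

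First I would set up the decomposition. The non-trivial connected components $K_1,\dots,K_t$ of $IG(\cC)$ determine disjoint unseparated sets $U_1,\dots,U_t \subseteq \cX$ (the unions of the clusters in each component), by Observation \ref{obs:ccunsep}; clusters outside all components are compatible with everything and form the ``tree part''. By iterating Theorem \ref{thm:clusdecom} — peeling off one unseparated set $U_j$ at a time and collapsing it to a meta-taxon — I get $r(\cC) = \sum_{j} r(\cC|U_j) + r(\cC^{\text{collapsed}})$, where the fully collapsed cluster set is laminar and contributes $0$. Here I need to check that collapsing one $U_j$ does not disturb the other $U_{j'}$ (they stay unseparated and the component structure restricted to them is preserved — this uses Observations \ref{obs:sepstayssep} and \ref{obs:ccunsep}), and that $\cC|U_j$ is still the cluster set of two trees, namely $\{T_1|U_j, T_2|U_j\}$ after suppressing degree-two nodes; this is where Lemma \ref{lem:nonbinaryunseptree} (each $U_j$ is represented by an edge or node of each $T_i$) does the work. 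For level, the analogous statement $\ell(\cC) = \max_j \ell(\cC|U_j)$ is exactly the content of the {\cass}-based decomposition from \cite{cass} already invoked in the proof of Theorem \ref{thm:core}, combined with the fact that each $\cC|U_j$ is separating so that on that piece level $=$ reticulation number.

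Then I would invoke Theorem \ref{thm:casscorrect} on each piece: $\cC|U_j$ (equivalently the tangled set {\cassinden} actually feeds to {\cass}) is separating and equals $Cl(\{T_1|U_j, T_2|U_j\})$, so {\cass} returns a simple network $N_j$ with $r(N_j) = \ell(N_j) = r(\cC|U_j) = \ell(\cC|U_j)$. Gluing these together as {\cassinden} does yields $N$ with $r(N) = \sum_j r(N_j) = \sum_j r(\cC|U_j) = r(\cC)$ and $\ell(N) = \max_j \ell(N_j) = \max_j \ell(\cC|U_j) = \ell(\cC)$, as required. One technical point to handle carefully is that the tangling transformation {\cassinden} applies to a component $K_j$ (adding the root taxon and so forth, as in \cite{cass,HusonRuppScornavacca10}) should not change whether the cluster set comes from two trees nor change its reticulation number/level; I would argue this is a routine consequence of the fact that the transformation only adds trivial structure above the component.

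The main obstacle I expect is bookkeeping rather than a deep new idea: making precise that $\cC|U_j = Cl(\{T_1|U_j, T_2|U_j\})$ after the restriction-and-suppression, and — more delicately — checking that the tangled set {\cassinden} actually runs {\cass} on is still ``two trees restricted appropriately'' so that Theorem \ref{thm:casscorrect} genuinely applies to it and not merely to $\cC|U_j$. The other place requiring care is reconciling the iterated application of Theorem \ref{thm:clusdecom} (which is stated for a single unseparated set) with the simultaneous multi-component decomposition, i.e. verifying that peeling components in any order gives the same additive split and that collapsing already-peeled components to meta-taxa keeps the remaining $T_i$ restrictions well-defined as trees on the meta-taxa. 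Once these structural facts are nailed down, the equalities $r(N)=r(\cC)$ and $\ell(N)=\ell(\cC)$ fall out immediately from Theorem \ref{thm:casscorrect} applied component-wise.
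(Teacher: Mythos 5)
Your overall architecture is the paper's: decompose along the connected components of $IG(\cC)$ via Observation \ref{obs:ccunsep} and Theorem \ref{thm:clusdecom}, apply Theorem \ref{thm:casscorrect} per component, and glue. However, there is a genuine gap in the step where you invoke Theorem \ref{thm:casscorrect}: you assert that $\cC|U_j$ ``is separating,'' and this is false in general. Connectivity of $IG(\cC_K)$ does not imply that every pair of taxa in $U_j$ is separated — a component can contain a non-singleton maximal ST-set (e.g.\ two taxa forming a cherry in both trees inside a reticulate region), and such a set is by definition unseparated. Since Theorem \ref{thm:casscorrect} is stated only for separating cluster sets, it cannot be applied to $\cC|U_j$ directly. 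You gesture at the fix via the parenthetical ``the tangled set {\cassinden} actually feeds to {\cass},'' but you then mischaracterize that transformation as ``adding the root taxon and so forth'' and ``only adding trivial structure above the component.'' What {\cassinden} actually does is collapse every maximal ST-set of $\cC_K$ into a meta-taxon, i.e.\ it alters structure \emph{inside} the component, and the resulting set $\cC'_K$ is a different cluster set on a different taxon set. The claim you need — that this collapsing preserves the reticulation number, $r(\cC_K)=r(\cC'_K)$ — is not ``routine bookkeeping'': it is exactly a second application of Theorem \ref{thm:clusdecom}, this time with $U$ ranging over the maximal ST-sets of $\cC_K$ (which are unseparated and pairwise disjoint by Corollary \ref{cor:mostn}), giving $r(\cC_K)=\sum_{S} r(\cC_K|S)+r(\cC'_K)=r(\cC'_K)$ since $r(\cC_K|S)=0$ for every ST-set $S$.

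So the repair does not require a new idea — the decomposition theorem you are already iterating over components must simply be iterated once more, over the maximal ST-sets within each component — but as written your proposal applies Theorem \ref{thm:casscorrect} to a cluster set that need not satisfy its hypothesis, and the justification you offer for why the discrepancy is harmless rests on a wrong description of the tangling step. A secondary point you flag but do not resolve is whether $\cC'_K$ is still of the form $Cl(\{T_1',T_2'\})$; this does hold, because by Lemma \ref{lem:nonbinaryunseptree} each maximal ST-set corresponds to an edge or node of each tree, so collapsing it amounts to contracting a common subtree to a leaf in both trees, but this too needs to be said for Theorem \ref{thm:casscorrect} to apply.
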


\begin{proof}
Observation \ref{obs:ccunsep} and  
Theorem \ref{thm:clusdecom} ensure that we can analyze each connected component  of the incompatibility graph $IG(\cC)$ separately, which (as mentioned) is exactly what {\cassinden} does.  (To see this it is helpful to note that any subset of $\cC$ can also, with the
possible exception of some superfluous singleton clusters, be expressed as the
set of clusters in two trees).

Let $K$ be a connected component and  denote by $\cC_K$ the set of clusters in $K$. Let $\cX_K$ be the set of taxa equal to the union of all clusters in $\cC_K$.  
Note that $\cC_K$ is not necessarily a tangled set. Indeed, while $IG(\cC_K)$ is connected, the second property of  a tangled set (every pair of taxa $x$ and $y$ in $\cX_K$ is separated by $\cC_K$, Section \ref{sec:prelim}) does not always hold. To ensure that the latter condition holds \cassinden~simply computes all the maximal ST-sets $\{S_1, ..., S_k\}$ for $\cC_K$ and for each  of them replaces all elements of $S_i$ with a single new taxon $s_i$ in $\cC_K$, to
obtain a new cluster set $\cC'_K$ that is tangled.

To see that the constructed network has minimum level, observe that (by Theorem \ref{thm:casscorrect}) {\cass} correctly
computes minimum level solutions for the $\cC'_{K}$ cluster sets mentioned above. In \cite{cass} it is proven that combining minimum-level
solutions for the various $\cC'_{K}$ yields a minimum-level solution for $\cC$. 

We now need to prove that the constructed network has reticulation number $r(\cC)$. Since maximal ST-sets are unseparated and all mutually disjoint (see Corollary \ref{cor:mostn}), it follows from Theorem \ref{thm:clusdecom}  that:\\
\[r(\cC_K) = \displaystyle\sum_{
\begin{subarray}{c}
 \mbox{S is a maximal} \\[1mm]
 \mbox{ST-set of $\cC_K$}  
\end{subarray}
}
r(\cC_K|S) + 
r(\cC'_K) .\]
Since $r(\cC_K|S)$  always equals zero when $S$  is an ST-set, $r(\cC_K) = r(\cC'_K)$. Moreover, since the sets $\cX_K$ are also unseparated (from Observation \ref{obs:ccunsep}) and all mutually disjoint, then from Theorem \ref{thm:clusdecom}  we have that:\\
\[r(\cC) = \displaystyle\sum_{
\begin{subarray}{c}
 \mbox{K is a connected} \\[1mm]
 \mbox{component of $IG(\cC)$}  
\end{subarray}
}
r(\cC|\cX_K) + 
r(\cC') ,\]
where $\cC'$ is obtained from $\cC$ by 
replacing all elements of $\cX_K$ with a single new taxon $x_K$. Obviously, $r(\cC')=0$. 
Since $r(\cC|\cX_K)=r(\cC_K) = r(\cC'_K)$, and $r(\cC'_K)= \ell(\cC'_K)$ because $\cC'_K$ is separating, 
 this concludes the proof that the constructed network has reticulation number
$r(\cC)$. 

\end{proof}

\subsection{{\cass} can be used to compute the hybridization number of two not necessarily binary trees\label{sec:cassHybrid}}


It is important to understand the relationship between the results presented in the previous section and the extensive literature on computing the hybridization distance of two \emph{trees} 
\cite{bordewich,bordewich2,sempbordfpt2007,quantifyingreticulation,linzsemple2009} i.e. the problem of displaying the trees themselves, and not just their clusters.

 For two binary trees $\cT = \{T_1, T_2\}$ on $\cX$ the hybridization distance $h(\cT)$ is defined as  the minimum
reticulation number $r_t(\cT)$ of any network that displays both the trees in $\cT$. In \cite{twotrees} we showed that $r_t(\cT) = r(Cl(\cT) )$. As discussed in \cite{twotrees}
that means that both positive and negative results for the computation of $r_t(\cT)$ transfer automatically to computation of $r(Cl(\cT))$ (for two binary trees). Negative results
are NP-hardness and APX-hardness; positive results include fixed parameter tractability
and running time improvements based on an increasingly
deep understanding of maximum acyclic agreement forests. 

Before proceeding there are some technical issues regarding the definition of hybridization number of a set $\cT$ of two \emph{not necessarily binary} trees on $\cX$. This can be attributed to the fact that
several different definitions have appeared in the literature:
\begin{enumerate}
\item $h^{+}(\cT)$ : the minimum reticulation number of $N$ ranging over all networks $N$ that display in a strict topological sense all the trees in $\cT$. This is exactly the
definition of \emph{display} given in Section \ref{sec:prelim}. In Figure 8 of \cite{twotrees} this strict definition was used.
\item $h^{0}(\cT)$: the minimum reticulation number of $N$ ranging over all networks $N$ that display some not necessarily binary refinement of each tree in $\cT$. Recall that a (binary) \emph{refinement} of a tree $T$ on $\cX$ is any (binary) tree $T'$ on $\cX$ such that $Cl(T) \subseteq Cl(T')$. (Note that a tree is generically considered to be a refinement of itself).
\item $h^{-}(\cT)$: the minimum reticulation number of $N$ ranging over all networks $N$ that display a binary refinement of each tree in $\cT$. This was the definition
used in \cite{linzsemple2009}. 
\end{enumerate}

Two of the definitions, $h^{0}$ and $h^{-}$, turn out to be equivalent. We clarify this, extend an equivalence result from \cite{twotrees} and thus show that {\cassinden} correctly computes the hybridization number of two not necessarily binary trees in the sense of $h^{0}$, equivalently $h^{-}$. In Figure 8 of \cite{twotrees} a set $\cT$ of two trees
is given, one of which is non-binary, such that $r(Cl(\cT)) < h^{+}(\cT)$. A result showing that {\cass} computes $h^{+}$ was thus already excluded.

\begin{observation}
\label{obs:treedefequiv}
$h^{0}(\cT) = h^{-}(\cT)$ for all sets $\cT$ of not necessarily binary trees on the same taxa set $\cX$. 
\end{observation}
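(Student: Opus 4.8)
The plan is to establish the two inequalities separately. One direction, $h^{0}(\cT)\le h^{-}(\cT)$, is immediate and requires no work: a binary refinement of a tree is in particular a refinement, so every network witnessing $h^{-}(\cT)$ also witnesses $h^{0}(\cT)$ with the same reticulation number. All the content lies in $h^{-}(\cT)\le h^{0}(\cT)$, i.e.\ in showing that any network eligible for the $h^{0}$ minimum can be converted, without increasing its reticulation number, into one eligible for the $h^{-}$ minimum. So I would fix a network $N$ with $r(N)=h^{0}(\cT)$ that displays, for each $T_i\in\cT$, some (not necessarily binary) refinement $T_i'$ of $T_i$, and aim to build from it a binary network $N'$ with $r(N')=r(N)$ that displays a binary refinement of each $T_i$.

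First I would refine $N$ itself into a binary network $N'$, using exactly the local ``binarising'' operations from the proof of Lemma~\ref{lem:transfBinary}: expand each tree vertex of out-degree $d\ge 3$ into a caterpillar on its $d$ children (with an arbitrarily fixed order), replace each reticulation of in-degree $d\ge 3$ by a cascade of $d-1$ in-degree-two reticulations, insert a new tree vertex just below each reticulation of out-degree $\ge 2$, and make the analogous adjustments at the root. Using $r(N)=|E|-|V|+1$ one checks that each operation leaves the reticulation number unchanged, keeps the graph acyclic, and preserves the leaf labelling, so $N'$ is a genuine binary phylogenetic network on $\cX$ with $r(N')=r(N)$.

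Next I would observe that each such operation replaces a single vertex of $N$ by a connected ``gadget'' in $N'$, so any embedded subtree of $N$ certifying that $N$ displays $T_i'$ lifts to an embedded subtree $M_i'$ of $N'$: keep every edge of the embedding and replace each of its vertices by the portion of that vertex's gadget actually traversed. Suppressing the degree-two vertices of $M_i'$ then produces a binary tree $\hat{T}_i$ displayed by $N'$. Since the gadget expansions never change which leaves of $\cX$ lie below a given branch vertex, every cluster of $T_i'$ survives as a cluster of $\hat{T}_i$, so $\hat{T}_i$ refines $T_i'$ and is therefore a binary refinement of $T_i$. Hence $N'$ displays a binary refinement of every $T_i$, giving $h^{-}(\cT)\le r(N')=r(N)=h^{0}(\cT)$, which closes the argument.

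The step I expect to be the main obstacle --- or at least the one that looks worrying at first glance --- is that distinct trees $T_i$ might seem to ``demand'' incompatible resolutions of a shared high-degree vertex of $N$. This turns out to be a non-issue precisely because the binary refinements $\hat{T}_i$ are not fixed in advance but read off from $N'$ after its resolution has been chosen, so each $\hat{T}_i$ merely inherits whatever caterpillar structure $N'$ imposes along the relevant paths. The rest is routine bookkeeping: checking that the binarising operations preserve reticulation number and acyclicity, that embeddings lift through them, and spelling out the standard clean-up conventions at the root of $N$ and of the displayed trees.
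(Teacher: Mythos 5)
Your proposal is correct and takes essentially the same route as the paper: the inequality $h^{0}(\cT)\leq h^{-}(\cT)$ is immediate from the definitions, and for the converse the paper also takes a network attaining $h^{0}(\cT)$ and applies the binarising transformation (as in Lemma~\ref{lem:transfBinary}, i.e.\ Lemma~2 of \cite{twotrees}) to obtain, without changing the reticulation number, a binary network displaying a binary refinement of each input tree. The only difference is presentational: the paper dismisses the embedding-lifting step as ``not too difficult to see'', whereas you spell it out explicitly.
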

\begin{proof}
The fact that $h^{0}(\cT) \leq h^{-}(\cT)$ follows immediately from the definitions. Suppose by way of contradiction that there exists a set of
not necessarily binary trees $\cT$ such that $h^{-}(\cT) > h^{0}(\cT)$. Let $N$ be any network with reticulation number $h^{0}(\cT)$ that
displays some refinement of each tree in $\cT$. Now, it is not too difficult to see (using for example the transformation described in Lemma 2
of \cite{twotrees}) that we can create a binary network $N'$ such that $r(N') = r(N)$ and such that $N'$ displays a binary refinement of
each of the trees in $\cT$, yielding a contradiction.
\end{proof}


Given a set of trees $\cT$, recall that we define $r_{tr}(\cT)$ to be the minimum reticulation number of any network that displays $Tr(\cT)$ (i.e. all the rooted triplets in the input trees). We have the following result:

\begin{theorem}
\label{theo:twotrees-1extended}
Let $\cT = \{T_1,T_2\}$ be two not necessarily binary trees on $\cX$. Then 
$h^{0}(\cT) = r_{tr}(\cT)$. 
\end{theorem}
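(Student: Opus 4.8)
The plan is to establish the two inequalities $h^{0}(\cT) \geq r_{tr}(\cT)$ and $h^{0}(\cT) \leq r_{tr}(\cT)$ separately, routing both through the cluster quantity $r(Cl(\cT))$ and exploiting the fact that, for two trees, the MST lower bound is tight (Corollary \ref{cor:tightbound}). First I would note the elementary direction: if $N$ displays a binary refinement $T_i'$ of each $T_i$, then $Tr(T_i) \subseteq Tr(T_i')$, so $N$ is consistent with $Tr(\cT)$; hence any network realizing $h^{-}(\cT)$ (equivalently $h^{0}(\cT)$, by Observation \ref{obs:treedefequiv}) displays all triplets in $Tr(\cT)$, giving $r_{tr}(\cT) \leq h^{0}(\cT)$.

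For the reverse inequality I would argue $h^{0}(\cT) \leq r(Cl(\cT)) \leq r_{tr}(\cT)$. The first step here is Theorem \ref{thm:tightbound} together with Corollary \ref{cor:tightbound}: for $m=2$ trees, $r(Cl(\cT))$ equals the MST lower bound $p$, and the constructive proof of Theorem \ref{thm:tightbound} builds a network with exactly $p$ reticulations that represents $Cl(\cT)$ by iteratively hanging back the maximal ST-sets. I would then observe that this same construction in fact produces a network displaying a (not necessarily binary) refinement of each $T_i$ — essentially because each maximal ST-set $S_j$ in the tree sequence corresponds, in each tree, to a subtree (Corollary \ref{cor:stedges} in the binary case, Lemma \ref{lem:nonbinaryunseptree} in general), and the hanging-back operation guided by the witnesses reattaches that subtree in a position consistent with the original tree topology. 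This shows $h^{0}(\cT) \leq r(Cl(\cT))$. For the second step, $r(Cl(\cT)) \leq r_{tr}(\cT)$, I would use Proposition \ref{prop:clustrip}: any network consistent with $Tr(\cT)$ — in particular a minimum-reticulation such network — represents $Cl(\cT)$; wait, Proposition \ref{prop:clustrip} gives the converse direction (networks representing $Cl(\cT)$ are consistent with $Tr(\cT)$), so instead I would invoke an equivalence from \cite{twotrees} stating that for two trees $r(Cl(\cT)) = r_{tr}(\cT)$, or reprove it by noting that a network consistent with all rooted triplets of two trees must (for two trees) represent all their clusters, which is precisely the two-tree unification phenomenon. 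Chaining $r_{tr}(\cT) \leq h^{0}(\cT) \leq r(Cl(\cT)) = r_{tr}(\cT)$ forces equality throughout.

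The main obstacle I anticipate is the middle claim that the network constructed in the proof of Theorem \ref{thm:tightbound} (via repeated application of Lemma \ref{lem:hangback}) not only represents $Cl(\cT)$ but actually \emph{displays a refinement} of each input tree. Representing the clusters is strictly weaker than displaying the tree, so this requires a careful look at the hanging-back construction: one must check that when the subtree $T_{S}$ on a maximal ST-set $S$ is hung back from its two witnesses, there is a switching of the new reticulation that, restricted appropriately, yields each $T_i$ with possibly some multifurcations resolved. The witness in tree $T_i$ sits just above the edge (or node) representing $S$ in $T_i$, so switching on the reticulation edge originating at that witness should recover the local structure of $T_i$; the bookkeeping needed to see that this holds simultaneously across all $p$ iterations, and that multifurcations are only ever refined and never destroyed, is the delicate part. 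An alternative, cleaner route that avoids this would be to cite the relevant equivalence theorem from \cite{twotrees} directly for the binary case and then lift it to the non-binary case using Observation \ref{obs:treedefequiv} and the refinement machinery, reducing the new content of the theorem to the $h^{0} = h^{-}$ identification plus a routine transfer argument.
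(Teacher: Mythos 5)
Your easy direction ($r_{tr}(\cT)\leq h^{0}(\cT)$) matches the paper, but your route for the hard direction has two genuine gaps. First, the claim that the network built in Theorem \ref{thm:tightbound} (repeated application of Lemma \ref{lem:hangback}) displays a refinement of each $T_i$ is false. The hanging-back there is guided by \emph{witnesses}, which are arbitrary leaf descendants of the attachment node $u_i$, and the new reticulation edge is attached just above that single leaf (or at the root), not at $u_i$. Concretely, take $\cX=\{s,a,b,c\}$, $T_1=(((s,a),b),c)$ and $T_2=(((a,b),s),c)$. The sequence $(\{s\})$ is a maximal ST-set tree sequence, the minimum-cardinality witness set is $\{a\}$ (so $|W|<m$ and a root edge is used), and Lemma \ref{lem:hangback} produces the network obtained from $((a,b),c)$ by hanging $s$ below a reticulation whose incoming edges start above $a$ and at the root. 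Its two switchings display $(((s,a),b),c)$ and $((a,b),c,s)$: the first contains the cluster $\{s,a\}$, which is incompatible with $\{a,b\}\in Cl(T_2)$, and the second does not contain $\{a,b,s\}$, so neither is a refinement of $T_2$, even though the network does represent $Cl(\cT)$. Representing each cluster in \emph{some} displayed tree is strictly weaker than having a \emph{single} displayed tree containing all of $Cl(T_i)$, and the witness-based construction only guarantees the former. (Some network with $r(Cl(\cT))$ reticulations displaying refinements does exist, but that is essentially the theorem itself, not a property of this particular construction.)

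Second, your step $r(Cl(\cT))\leq r_{tr}(\cT)$ is not available in the non-binary setting you need it in. What \cite{twotrees} provides is the opposite inequality $r_{tr}(\cT)\leq r(Cl(\cT))$ (for general trees) and equalities only for two \emph{binary} trees; the equality $r_{tr}(\cT)=r(Cl(\cT))=h^{0}(\cT)$ for two not necessarily binary trees is Lemma \ref{lemma:twotrees-cor2extended} of this paper, which is \emph{deduced from} Theorem \ref{theo:twotrees-1extended}, so invoking it here is circular; and the fallback assertion that any network consistent with $Tr(\cT)$ must represent $Cl(\cT)$ is an unproven (and as a pointwise statement dubious) claim, essentially as hard as the theorem. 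The paper's actual proof avoids clusters altogether: it takes a minimal counterexample, picks a minimum-reticulation network $N_{tr}$ consistent with $Tr(\cT)$, uses an SBR of $N_{tr}$ with taxa set $S$ to argue (via triplet consistency) that $S$ corresponds in each $T_i$ to an edge or a subset of children of a node, prunes $S$, applies minimality to the reduced pair, and then hangs a common \emph{binary refinement} $T_S$ of $T_1|S$ and $T_2|S$ below one new reticulation whose edges extend the embeddings of the refined reduced trees -- i.e.\ it attaches $S$ at the correct embedded positions, which is precisely what your witness-based attachment fails to do. To salvage your outline you would need an argument of that kind for the step $h^{0}(\cT)\leq r_{tr}(\cT)$ (or $h^{0}\leq r(Cl)$ with correctly chosen attachment edges), not the construction of Lemma \ref{lem:hangback} as it stands.
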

\begin{proof}
Obviously, for any set $\cT$ of not necessarily binary trees on $\cX$, $r_{tr}(\cT) \leq h^{0}(\cT)$.  It remains to show that this inequality is always tight. Suppose that it is not always tight. 
Let then $\cT = \{T_1, T_2\}$ be 
two smallest 
(in terms of the size of $|\cX|=n$) trees
such that $h^{0}(\cT) > r_{tr}(\cT)$. 
Clearly $n>2$. Now, let $N_{tr}$ 
be any network that is consistent with
$Tr(\cT) = R$ such that $r(N_{tr}) = r_{tr}(R)$. If $r_{tr}(R) = 0$ we 
have a contradiction because this only occurs if
$T_1$ and $T_2$ have a common 
refinement, in which case $h^{0}(\cT) = 
0$. Hence $r_{tr}(R)>0$. This means that
$N_{tr}$ has at least one SBR with taxa set $S$. Now, we claim that, for 
each $i \in \{1,2\}$, $S$ is the set
of taxa reachable from an edge $e_i$ (in $T_i$) or as the set of taxa 
reachable from some subset of the children
of some node $u_i$ (in $T_i$). If this is not so then there exists some 
triplet $xy|z$ in $R$ such that
$x \not \in S$ and $y,z \in S$. However, $N_{tr}$  cannot  be 
consistent with $xy|z$ since $S$ is the taxa set of a SBR, which by definition sits below
a cut-edge, yielding a contradiction. 
(If $e_i$ exists assume without loss of generality that $u_i$ is its head).
Let $Q' = \{v_1, \ldots, v_k\}$ be the set of children of $u_i$ such that for each $v \in Q'$,  $\cX(T_{v})$ contains at least one element of $S$. Now let $T'_i$ be the tree $T_i \setminus {T_{v_1}} \ldots \setminus{T_{v_k}}$ and let $\cT'=\{T'_1, T'_2\}$.
 It is 
clear that $r_{tr}(\cT') \leq r_{tr}(R)-1$. By
the assumption of minimality on $|\cX|$ we have that $h^{0}(\cT') = 
r_{tr}(\cT')$. 
Now, let $N^{*}$ be a network with a minimum reticulation number that 
displays some refinement of each tree in $\cT'$. 
Observe that there
exists a binary tree $T_{S}$ on taxa set $S$ such that  $T_{S}$ is a binary 
refinement of both $T_1|S$ and $T_2|S$, otherwise
$R|S$ would not be an SBR (i.e. it would contain at least one 
reticulation). 

But we can obtain 
a network with $r(N^{*})+1$ reticulations
that displays some  refinement of $T_1$ and $T_2$ by hanging $T_{S}$ back below 
a single new reticulation in $N^{*}$
such that the tails of the two reticulation edges extend the embeddings of 
the refinements of $T'_1$ and $T'_2$ in $N^{*}$. Hence $h^{0}(\cT) \leq r_{t}(\cT') +1 \leq r_{t}(\cT) < h^{0}(\cT)$,
contradiction. 
\end{proof}

\noindent
The following lemma extends a result from \cite{twotrees}:

\begin{lemma}
\label{lemma:twotrees-cor2extended}
If $\cT$ consists of  two not necessarily binary phylogenetic trees on the same set of taxa, $r_{tr}(\cT) = h^{0}(\cT)=r(Cl(\cT))$ .
\end{lemma}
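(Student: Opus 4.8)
The plan is to assemble the claimed triple equality from results already established in the excerpt. The statement $r_{tr}(\cT) = h^{0}(\cT) = r(Cl(\cT))$ is a chain of two equalities for a set $\cT=\{T_1,T_2\}$ of two not necessarily binary trees, and both halves are essentially already available.

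First I would prove $r_{tr}(\cT)=h^{0}(\cT)$. This is exactly Theorem \ref{theo:twotrees-1extended}, which was just established for two not necessarily binary trees, so nothing further is needed here beyond citing it.

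Next I would prove $h^{0}(\cT)=r(Cl(\cT))$. The inequality $r(Cl(\cT))\le r_{tr}(\cT)$ follows because any network displaying $Tr(\cT)$ need not represent $Cl(\cT)$ directly, so instead I would argue $r(Cl(\cT))\le h^{0}(\cT)$: given a network $N$ realizing $h^{0}(\cT)$, it displays a refinement $T_i'$ of each $T_i$, and since $Cl(T_i)\subseteq Cl(T_i')$, every cluster of $T_i$ is represented by the displayed tree $T_i'$ in $N$, hence $N$ represents $Cl(\cT)$, giving $r(Cl(\cT))\le r(N)=h^{0}(\cT)$. For the reverse inequality $h^{0}(\cT)\le r(Cl(\cT))$, I would invoke Theorem \ref{th:cassworksnonbinary} (or, via Corollary \ref{cor:tightbound} and Theorem \ref{thm:tightbound}, the construction underlying it): \cassinden{} produces a network $N$ with $r(N)=r(Cl(\cT))$ that represents $Cl(\cT)$. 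The point is that this constructed network — being built by repeatedly hanging back trees $T_S$ corresponding to maximal ST-sets, exactly as in Lemma \ref{lem:hangback} and Theorem \ref{thm:tightbound} — in fact displays a refinement of each $T_1$ and $T_2$, not merely their clusters. This is because at each hang-back step the subtree attached is the unique binary tree on the ST-set $S$ representing $Cl(\cT)|S$, which refines both $T_1|S$ and $T_2|S$, and the reticulation edges can be switched so as to extend the embeddings of the (refinements of the) residual trees; this is precisely the kind of argument made in the proof of Theorem \ref{theo:twotrees-1extended} when hanging $T_S$ back below a new reticulation. Hence the \cassinden{} network witnesses $h^{0}(\cT)\le r(Cl(\cT))$, and combined with $r(Cl(\cT))\le h^{0}(\cT)$ we get equality.

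The main obstacle is the last step: one must be careful that the network constructed by Theorem \ref{th:cassworksnonbinary} genuinely displays refinements of the two input trees and not just their cluster sets, since for more than two trees (or for the strict notion $h^{+}$) this fails, as Figure 8 of \cite{twotrees} shows. The safest route is to not re-prove this from scratch but to observe that $r(Cl(\cT))=r_{tr}(\cT)$ already follows from Theorem \ref{theo:twotrees-1extended} together with $r(Cl(\cT))\le h^{0}(\cT)=r_{tr}(\cT)$ and the fact (Proposition \ref{prop:clustrip}, applied to a minimum network representing $Cl(\cT)$) that any network representing $Cl(\cT)$ is consistent with $Tr(\cT)$, giving $r_{tr}(\cT)\le r(Cl(\cT))$; chaining these yields $r(Cl(\cT))=r_{tr}(\cT)=h^{0}(\cT)$ with no delicate refinement-displaying argument required.

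So, in summary, the proof would run: (i) $r_{tr}(\cT)=h^{0}(\cT)$ by Theorem \ref{theo:twotrees-1extended}; (ii) $r(Cl(\cT))\le h^{0}(\cT)$ since a network displaying refinements of both trees represents all their clusters; (iii) $r_{tr}(\cT)\le r(Cl(\cT))$ by Proposition \ref{prop:clustrip} applied to an optimal network for $Cl(\cT)$; (iv) combine to obtain $r_{tr}(\cT)=h^{0}(\cT)=r(Cl(\cT))$.
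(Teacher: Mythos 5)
Your final argument (steps (i)--(iv)) is correct and is essentially the paper's own proof: the paper also chains $r_{tr}(\cT)\leq r(Cl(\cT))$ (cited from \cite{twotrees}, which is what Proposition \ref{prop:clustrip} gives when applied to an optimal network), $r(Cl(\cT))\leq h^{0}(\cT)$ via the observation that displaying a refinement implies representing the clusters, and Theorem \ref{theo:twotrees-1extended}. You were right to abandon the earlier, more delicate route through the \cassinden{} construction; the simple inequality chain is all that is needed.
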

\begin{proof}
In \cite{twotrees} it is proven that for a not necessarily binary set of trees $\cT$ on the same set of taxa, $r_{tr}(\cT) \leq r(Cl(\cT))$.
Now, if a network displays some refinement of a not necessarily binary tree $T$, then it represents all the clusters in $Cl(T)$ (and
possibly more). Hence we also have that $r(Cl(\cT)) \leq h^{0}(\cT)$. Combining this with Theorem \ref{theo:twotrees-1extended}
gives the result.
\end{proof}

Combining all these results we finally obtain the following theorem.

\begin{theorem}
\label{theo:casshybridnumer}
Let $\cT=\{T_1,T_2\}$ be two not necessarily binary trees on $\cX$, and let $\cC = Cl(\cT)$.
When given $\cC$ as input, {\cassinden} computes a network 
$N$ such that $r(N)=h^{0}(\cT)=h^{-}(\cT)$.
\end{theorem}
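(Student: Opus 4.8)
The plan is to assemble Theorem \ref{theo:casshybridnumer} directly from the pieces already proven, so that no new combinatorial argument is needed. First I would recall from Theorem \ref{th:cassworksnonbinary} that when {\cassinden} is given $\cC = Cl(\cT)$ as input it outputs a network $N$ with $r(N) = r(\cC) = r(Cl(\cT))$. Then I would invoke Lemma \ref{lemma:twotrees-cor2extended}, which states precisely that for two not necessarily binary trees $r(Cl(\cT)) = h^{0}(\cT)$, giving $r(N) = h^{0}(\cT)$. Finally I would use Observation \ref{obs:treedefequiv}, which establishes $h^{0}(\cT) = h^{-}(\cT)$ for any set of not necessarily binary trees, to conclude $r(N) = h^{0}(\cT) = h^{-}(\cT)$.

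I would also mention, for completeness, that the \emph{level} guarantee carries over in exactly the same way: Theorem \ref{th:cassworksnonbinary} already gives $\ell(N) = \ell(\cC)$, so the network {\cassinden} produces is simultaneously reticulation-optimal and level-optimal, and by the equivalences above it attains the hybridization number of the two input trees (in the sense of $h^{0}$, equivalently $h^{-}$) rather than merely the reticulation number needed to represent the clusters. This is the conceptual payoff: the cluster-based algorithm, in the two-tree case, is \emph{also} solving the tree-display problem.

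Since essentially everything has been done, the write-up is a short chain of citations rather than a genuine argument; there is no real obstacle. The only point requiring a sentence of care is making sure the reader sees that $h^{0}$ (not $h^{+}$) is the right notion here — i.e. that {\cassinden} does not and cannot compute $h^{+}$, as already witnessed by Figure~8 of \cite{twotrees} — so I would explicitly flag that the theorem is stated in terms of $h^{0}=h^{-}$ and not the stricter $h^{+}$. With that caveat in place the proof is simply: combine Theorem \ref{th:cassworksnonbinary}, Lemma \ref{lemma:twotrees-cor2extended}, and Observation \ref{obs:treedefequiv}.

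\begin{proof}
By Theorem \ref{th:cassworksnonbinary}, on input $\cC = Cl(\cT)$ the algorithm {\cassinden} outputs a network $N$ with $r(N) = r(\cC) = r(Cl(\cT))$ (and also $\ell(N) = \ell(\cC)$). By Lemma \ref{lemma:twotrees-cor2extended}, for two not necessarily binary trees we have $r(Cl(\cT)) = h^{0}(\cT)$, hence $r(N) = h^{0}(\cT)$. Finally, by Observation \ref{obs:treedefequiv}, $h^{0}(\cT) = h^{-}(\cT)$, so $r(N) = h^{0}(\cT) = h^{-}(\cT)$, as required. (Note that, as shown by Figure 8 of \cite{twotrees}, the analogous statement for $h^{+}$ does not hold.)
\end{proof}
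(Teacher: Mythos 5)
Your proof is correct and is exactly what the paper intends: the theorem is stated as following by "combining all these results," namely Theorem \ref{th:cassworksnonbinary} (giving $r(N)=r(Cl(\cT))$), Lemma \ref{lemma:twotrees-cor2extended} (giving $r(Cl(\cT))=h^{0}(\cT)$), and Observation \ref{obs:treedefequiv} (giving $h^{0}(\cT)=h^{-}(\cT)$). Your chain of citations matches the paper's argument, and the caveat about $h^{+}$ is consistent with the paper's own remarks.
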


\section{Conclusion and open problems}

The largest open problem emerging from this article is whether there exists a ``reasonable'' polynomial-time algorithm for
constructing phylogenetic networks of bounded reticulation number (or level) that represent a given set of clusters. The result in Section \ref{sec:theory}, which shows
a theoretical polynomial-time algorithm for constructing networks of bounded level, does not lend itself to a real-world implementation.
On the other hand we have seen that for clusters obtained from \emph{binary} trees a relatively simple and efficient algorithm can be used. At the moment
however there is no reasonable polynomial-time algorithm for general cluster sets i.e. those obtained from sets of potentially non-binary
trees. We have shown that {\cass}, which has a reasonable running time, is in general not optimal (although we had to explicitly engineer a highly
synthetic counter-example to determine this). {\cass} is, however, an \emph{extremely} greedy algorithm, in the sense that at every iteration
it assumes that \emph{all} maximal ST-sets are below cut-edges. Can we relax this assumption in some way to yield a slightly less greedy
version of {\cass} that is optimal? A less urgent problem, but nevertheless very interesting, is the question whether {\cass} is optimal for clusters obtained from exactly three
binary trees on $\cX$.


\appendix

\section{Appendix \label{AppendixA}} 

\subsection{Proofs deferred to the appendix \label{app:severalProof}}

For a tree-node $u$ with set of
children $Q$, we say that a $Q'$-\emph{refinement}, where $Q' \subset Q$ and $|Q'| \geq 2$, is the
network obtained by deleting all edges between $u$ and elements of $Q'$, adding a new node $u'$,
adding the edge $(u,u')$, and adding edges between $u'$ and each element of $Q'$. We say that a
tree-node can be refined if there exists some $Q'$-refinement of it. Note that if a network $N$
represents a set of clusters $\cC$, then the network $N'$ obtained by refining some tree-node of $N$
still represents $\cC$.\\
\\
\textbf{Observation \ref{obs:nocutfreedom}.} \emph{Let $\cC$ be a 
separating set of clusters on $\cX$. Let
$N$ be any network that represents $\cC$. Then each node of $N$ has at most one leaf child and for
each cut-edge $(u,v)$ in $N$, $|\cX(v)|=1$ or $\cX(v)=\cX$}.
\begin{proof} Suppose
$N$ contains a cut-edge $(u,v)$ such that $1 < |\cX(v)| < |\cX|$. Let $C$ be any cluster represented
by $N$. Then either $C \cap \cX(v) = \emptyset$, $\cX(v) \subseteq C$ or $C \subseteq \cX(v)$,
because $(u,v)$ is a cut-edge. Hence $\cX(v)$ is unseparated, contradiction. Now, suppose some
node of $N$ has two leaf-children $x \neq y \in \cX$. Then every non-singleton cluster that contains $x$, also
contains $y$, and vice-versa, meaning that $\{x,y\}$ is an unseparated set, contradiction.
\end{proof}

\noindent
\textbf{Lemma \ref{lem:simpleexists}. }\emph{Let $\cC$ be a  
separating set of clusters on $\cX$. Let $N$ be
any network that represents $\cC$. Then there exists a simple network $N^{*}$ with at most one
leaf-child per node such that $\ell(N^{*}) \leq \ell(N)$}.

 \begin{proof} Clearly
Observation \ref{obs:nocutfreedom} holds for $N$. We will transform $N$ to obtain $N^{*}$. We
repeatedly and arbitrarily apply any of the following four operations until any of them can be applied. (a) If $N$ contains a node $u$ that has indegree and outdegree both larger than 1 then
create a new node $u'$, add the edge $(u,u')$ and move the tails of all edges that leave $u$, to
$u'$. (b)  If $N$ contains a cut-edge $(u,v)$ such that $\cX(v) = \cX$, then keep all nodes and
edges reachable from $v$ by directed paths, take $v$ as a new root, and discard the rest of the
network. (c) If $N$ contains a cut-edge $(u,v)$ such that $\cX(v)=\{x\}$ where $x \in \cX$, but $v$
is not labelled by $x$, then delete all nodes and edges reachable from $v$ by directed paths (but
not $v$), and label $v$ with taxon $x$. (d) If some tree-node of $N$ can be refined to create a
cut-edge $(u,u')$ such that $u'$ is not a leaf, then do that.

Note that after applying each operation the resulting network still represents $\cC$. Furthermore,
and less obviously, the operations do not raise the level of the network. This is clear for the ``deletion''
operations (b) and (c) but for operations (a) and (d) the critical observation is that two
biconnected components of $N$ overlap on at most one node, and in particular are edge disjoint.
Let $N^{*}$ be network obtained after no more operations (a)-(d) can be applied. Clearly, $N^{*}$
represents $\cC$, and Observation \ref{obs:nocutfreedom} applies to it. If $N^{*}$ is simple we are
done. If $N^{*}$ is not simple then it must contain a cut-node $u$ which (when removed) disconnects
$N^{*}$ into two or more non-trivial components, because $N^{*}$ 
 does not contain
any cut-edges with this property by Observation  \ref{obs:nocutfreedom}. Let $P = \{ p_1, \ldots, p_i\}$  ($i \geq 0$) be the parent
nodes of $u$  and let $Q = \{q_1, \ldots, q_j\}$ $(j \geq 1)$ be the children of $u$. For a
node $v \neq u$ let $R_{u}(v)$ be the set of nodes connected to $v$ by an undirected path,
after deletion of node $u$. We know that deleting $u$ splits $N^{*}$ into at least two non-trivial
components i.e. components that contain at least one edge. Observe that each such component will be
the union of one or more of the $(i+j)$ $R_{u}(v)$ sets obtained by taking $v \in P \cup Q$.
Secondly, it is useful to note that $R_u(p_1)=R_u(p_2)=...=R_u(p_i)$. This follows because in a
phylogenetic network every node can be reached from the root by some directed path, and hence the
parents of $u$ can still reach each other with undirected paths (via the root if necessary) after
deletion of $u$. So at least one of the non-trivial components created by deletion of $u$ is equal
to $\cup_{v \in Q'} R_u(v)$ where $Q' \subseteq Q$. Consider the case that $Q' = Q$. Then we cannot
have that $i=0$, because this would mean that $u$ is not a cut-node. So $i \geq 1$. If $i=1$ then we
conclude that $(p_1, u)$ is a cut-edge in $N^{*}$ and thus that $u$ is a leaf, contradicting the
assumption that $j \geq 1$. If $i \geq 2$ then $j=1$ (because all reticulations in $N^{*}$ have
outdegree 1) and hence we can conclude that $(u,q_1)$ was a cut-edge in $N^{*}$. This means that in
$N^{*}$, $q_1$ is a leaf labelled by a single taxon and hence that $R_u(q_1)$ is a trivial component.
But this contradicts the fact that at least two non-trivial components are created by deletion of
$u$.

Consider finally the case that $Q' \subset Q$.  Suppose $|Q'|=1$ and let (wlog) $q_1$ be the only
element of $Q'$. Then the edge $(u,q_1)$ is actually a cut-edge in $N^{*}$, meaning that $q_1$ is a
leaf labelled by a single taxon and (again) that $R_u(q_1)$ is a trivial component, contradiction.
Hence $|Q'| \geq 2$ and $j \geq 3$. Note that we can construct a new network $N^{**}$ which still
represents $\cC$ by performing a $Q'$-refinement on $u$.  Furthermore, the edge $(u,u')$ created by
the refinement must be a cut-edge. But then we could have applied a type-(d) operation to $N^{*}$,
contradicting the assumption that no more applications of operations (a)-(d) were possible.

Hence we conclude that $N^{*}$ is a simple network with at most one leaf-child per node that
represents $\cC$ and, because the four operations (a)-(d) do not increase level, $\ell(N^{*}) \leq \ell(N)$.
\end{proof}

\noindent
\textbf{Lemma \ref{lem:transfBinary}.} \emph{Let $N$ be a phylogenetic network on $\cX$. Then we can
transform N into a binary phylogenetic network $N'$ such that $N'$ has the same reticulation number
and level as $N$ and all clusters represented by $N$ are also represented by $N'$.}

\begin{proof} The following transformation is similar to one described in Lemma 2 of
\cite{twotrees}. (a) For each reticulation node $u$ with outdegree 2 or higher,  introduce a new
node $u'$, add an edge $(u,u')$ and move the tails of edges that leave $u$, to $u'$. This ensures
that all reticulation nodes have outdegree 1. (b) For each reticulation node $u$ with indegree
$d$ ($d \geq3$), we can replace $u$ with a chain of $(d-1)$ reticulation nodes of indegree 2. For
example, if $u$ has 3 parents $p_1, p_2, p_3$ and child $q_1$ we delete $u$, add 2 new nodes
$u_1, u_2$ and new edges $(p_1, u_1), (p_2, u_1), (u_1,u_2), (p_3,u_2),  (u_2, q_1)$. To see that
steps (a) and (b) does not increase the level of the network recall that biconnected components
intersect on at most one node, and (as observed in the proof of Lemma \ref{lem:simpleexists}) a
reticulation node and all its parents are always in the same biconnected component.

(c) The only nodes we still need to consider are tree-nodes. Let $u$ be a tree-node with
outdegree $d$ ($d \geq 3$). If at least one child of $u$ is a leaf, but not all children of $u$ are
leaves, let $q_1$ be a leaf-child of $u$ and $q_2$ be a non-leaf child of $u$. In this case we
delete the edges $(u, q_1), (u, q_2)$, add a new node $u'$, and add new edges $(u,u'), (u',q_1),
(u',q_2)$.  We repeat this process until all tree-nodes $u$ with outdegree 3 or higher are such that
either all their children are leaves, or none of their children are leaves. For each tree-node $u$
with outdegree 3 or higher whose children are all leaves, let $\cX' = \cX(u)$, delete all children
$Q$ of $u$  and identify $u$ with the root of an arbitrary binary tree on taxa set $\cX'$. Now, all
remaining tree-nodes with outdegree 3 or higher will be such that none of their children are leaves.
The main subtlety here is to avoid a situation where, by refining a tree-node in the wrong way, we
merge two biconnected components and thus raise the level of the network. To avoid this we apply the
following operation as often as possible: (d) Let $u$ be any tree-node with outdegree 3 or higher
such that there exists a biconnected component $K$ which contains $u$ and at least two, but not all,
of the edges leaving $u$. Let $Q'$ be the children of $u$ that $K$ contains.  Perform a
$Q'$-refinement on $u$.

After no more applications of operation (d) are possible there is no danger of merging biconnected
components by refining tree-nodes so 
we can apply the last refinement step: (e) we replace each remaining tree-node $u$ with outdegree $d$
($d \geq 3$) by a chain of $(d-1)$ tree-nodes with outdegree 2. For example, if $u$ has 3 children
$q_1, q_2, q_3$ then we delete $u$, create two new nodes $u_1, u_2$, add edges $(u_1,q_1),
(u_1,u_2), (u_2,q_2), (u_2,q_3)$ and (if $u$ had a parent) add an edge from the former parent of $u$
to $u'$. This completes the transformation. \end{proof}


\noindent
\textbf{Lemma \ref{lem:stcompute}.} \emph{ Let $\mathcal{C}$ be a set of clusters on $\cX$ and let $S_1 \neq S_2$ be two ST-sets of $\mathcal{C}$. If $S_1 \cap S_2 \neq \emptyset$ then $S_1 \cup S_2$ is an ST-set.}
\begin{proof} 
If $S_1 \subseteq S_2$ or $S_2 \subseteq S_1$ then we are done, so we assume neither such condition holds. Now, suppose that some cluster $C \in \cC$ is incompatible with $S_1 \cup S_2$. We will derive a contradiction. Note that $C$ is compatible with  $S_1$ and $S_2$ since  
both $S_1$ and $S_2$ are ST-sets. Then we cannot have that $C \cap S_1 = \emptyset$ or 
$C \subseteq S_1$ because then $C$ would  be compatible with $S_1 \cup S_2$. It follows that 
$S_1 \subseteq C$.
With the same reasoning we obtain that 
$S_2 \subseteq C$. So $S_1 \cup S_2 \subseteq C$. Hence $C$ is compatible with $S_1 \cup S_2$, a contradiction. 
It remains to show is that any pair of clusters in $\cC | (S_1 \cup S_2)$ are compatible. Consider any cluster $C$ in $\cC | (S_1 \cup S_2)$. Obviously, $C \subseteq S_1 \cup S_2$ and there are three possibilities: (i) $C = S_1 \cup S_2$, (ii) $C \subseteq S_1 \cap S_2$, or (iii) $C \cap (S_1 \cap S_2) = \emptyset$. To see this, suppose there exists a cluster $C \in \cC | (S_1 \cup S_2)$ that violates all three possibilities. Then, without loss of generality, there is by violation of (ii) some $x \in S_1$ such that $x\not \in S_2$ and $x \in C$.
 There is also, by violation of (iii), some $x' \in S_1 \cap S_2$ such that $x' \in C$. This means that $S_1,S_2 \cap C \neq \emptyset$. 
Let $C'$ be a cluster in $\cC$  such that $C' | (S_1 \cup S_2) = C$.
 It cannot be that $C' \subseteq S_2$ (because of $x$) so $S_2 \subseteq C'$ (because $S_2$ is an ST-set in $\cC$
and $S_2 \cap C \neq \emptyset$).  Now, because (i) was violated it follows that there is some $x'' \in S_1$ such that $x'' \not \in C$ and thus $x'' \not \in C'$, 
so $S_1 \not \subseteq C'$. Now, since $S_2 \subseteq C'$ 
and we assumed that  $S_2 \not \subseteq S_1$, we have that  $C' \not \subseteq S_1$. But $C' \cap S_1 \neq \emptyset$, so $C'$ is incompatible with $S_1$ and  $S_1$ is not an ST-set, a contradiction. Hence (i)-(iii) are the only possibilities.

Now, suppose there are two clusters $C_1, C_2 \in \cC | (S_1 \cup S_2)$ which are incompatible. Let $C'_1$ and $C'_2$ be the clusters in $\cC$ such that $C'_1 | (S_1 \cup S_2) = C_1$ and $C'_2| (S_1 \cup S_2) = C_2$. 
Now, clearly type (i) clusters cannot cause incompatibilities. Furthermore, type (ii) and (iii) clusters are disjoint, so $C_1, C_2$ are both type (ii) clusters \emph{or} are both type (iii) clusters. Suppose they are both type (ii) clusters i.e. both are entirely contained inside $S_1 \cap S_2$. If $C'_1 = C_1$ and $C'_2 = C_2$ then we obtain a contradiction, since $S_1 \cap S_2 \subset S_1$. Specifically, we have that $C'_1, C'_2 \subset S_1$ and thus $S_1$ cannot be an ST-set. 
Hence, without loss of generality, we assume that $C'_1$ contains an element $x \not \in S_1 \cup S_2$. But then $C'_1 \not \subseteq S_1$, $S_1 \not \subseteq C'_1$ (because $S_1 \cap S_2 \subset S_1$) and $S_1 \cap C'_1 \neq \emptyset$. Again, we conclude that $S_1$ is not an ST-set. 
The final case is that both $C_1$ and $C_2$ are type (iii) clusters. Clearly either $C_1, C_2 \subseteq S_1 \setminus S_2$ or $C_1, C_2 \subseteq S_2 \setminus S_1$, otherwise  $C_1$ and $C_2$ would be compatible. 
Assume without loss of generality that $C_1, C_2 \subseteq S_1 \setminus S_2$. Since $S_1 \setminus S_2  \subset S_1$,
 if $C'_1 = C_1$ and $C'_2 = C_2$  then $S_1$ cannot be an ST-set, a contradiction. So assume $C'_1$ contains an element $x \not \in S_1 \cup S_2$. So $C'_1 \not \subseteq S_1$. Furthermore, $S_1 \not \subseteq C'_1$ because $S_1 \cap S_2 \neq \emptyset$. But $S_1 \cap C'_1 \neq \emptyset$, so $S_1$ is not an ST-set, again a contradiction.  \end{proof}


\noindent
\textbf{Lemma \ref{lem:polymax}.} \emph{The maximal ST-sets of a set of clusters $\cC$ on $\cX$ can be computed in polynomial time..}
\begin{proof}
Start with a set $\cS$ of $n$ singleton ST-sets. If there are two distinct ST-sets $S_1, S_2 \in \cS$ such that $S_1 \cup S_2$ is an ST-set, then remove $S_1$ and $S_2$ from $\cS$ and add $S_1 \cup S_2$ to $\cS$. Repeat this until it is no longer possible. This results in a set $\cS$ of ST-sets that partitions $\cX$. Let $\mathcal{M}$ be the set of maximal ST-sets of $\cC$. We will show that $\cS = \mathcal{M}$.

Observe that for each $S \in \cS$ and each $M \in \mathcal{M}$, either $M \cap S = \emptyset$ or $S \subseteq M$. Indeed, from Lemma \ref{lem:stcompute}, if some $S$ and $M$ were incompatible then $S \cup M$ would be an ST-set too, contradicting the maximality of $M$. So each $M \in \mathcal{M}$ is partitioned by one or more elements from $\cS$. Now, suppose some $S \in \cS$ is not maximal. Then there exists some $M \in \mathcal{M}$ such that $S \subset M$. Furthermore, there must also exist some $S' \neq S$ such that $S' \subset M$ i.e. $M$ is partitioned by at least two elements from $\cS$. Hence we can write $M = S_1 \cup \ldots \cup S_{k}$ where $k \geq 2$ and each $S_i \in \cS$. We denote by $\cS'$ the set of ST-sets of $\cS$ partitioning $M$. Now, consider the set of clusters $\cC | M$. 
All clusters in $\cC | M$ are mutually compatible, because $M$ is an ST-set.
Furthermore, every cluster in $\cC | M$ is compatible with every $S_i$ contained inside $M$
because the $S_i$ are ST-sets 
and $\cS'$ partitions $M$.


Now, for a cluster $C \in
\cS$, let $nb(C)$ be the number of  $S_i \in \cC|M$
that
it contains. Let $b = min \{ nb(C) |C \in \cC|M$ and $nb(C) \geq 2 \}$. Suppose
$b$ is not defined. 
Let  $S_i \neq S_j$ be {\em any} two ST-sets  of $\cS'$. We claim that $S_i \cup S_j$ is an ST-set.
First, we need to prove that $S_i\cup S_j$ is not separated by $\cC$. Let us suppose this is not true and there exists a cluster $C \in \cC$ such that 
 $C  \cap (S_i \cup S_j ) \neq \emptyset$, $C \not\subseteq S_i \cup S_j$  and  $S_i \cup S_j \not \subseteq C$.  It follows that $C \not\subseteq S_i$ and $C \not\subseteq S_j$. Moreover, since $S_i \cup S_j \not \subseteq C$, $C$ does not contain at least one among $S_i$ and  $S_j$, say $S_i$. Then $C  \cap S_i = \emptyset$, otherwise $S_i$ is not an ST-set. Then we have that $C  \cap S_j \neq \emptyset$. If $ S_j \not \subseteq C$ we have again a contradiction. So we must have $ S_j  \subset C$ and $C  \cap S_i = \emptyset$. But since $C$ cannot contain more than one ST-set of $\cS'$, $C$ separates $M$, a contradiction. It follows that $S_i\cup S_j$ is not separated by $\cC$. Moreover, 
since no cluster of $\cC|M$ contains more that one ST-set of $\cS$, each cluster in $\cC|M$ is entirely contained inside some single ST-set of  $\cS'$. It follows that $\cC| (S_i \cup S_j)$
is a compatible set of clusters 
because in this case $\cC| (S_i \cup S_j) = \cC|S_i \cup \cC|S_j$ and $S_i \cap S_j = \emptyset$.
Then $S_i \cup S_j$ is an ST-set, contradicting the termination condition for the algorithm. 

Now, suppose that $b$ is well defined. Let $C$ be any cluster in $\cC|M$
such that $nb(C) = b$. Let $S_i \neq S_j$ be any two ST-sets in $\cS'$
that are contained in $C$. We claim that $S_i \cup S_j$ is an ST-set.
We first show that $S_i \cup S_j$ is unseparated. Suppose this is
not true. Then there exists $C' \in \cC$ such that $C' \cap (S_i \cup S_j)
\neq \emptyset$, $C' \not \subseteq (S_i \cup S_j)$ and $(S_i \cup S_j)
\not \subseteq C'$. With the same argument as before we
have (without loss of generality) that $S_j \subset C'$ and $C' \cap S_i =
\emptyset$. This means that $M \not \subseteq C'$. Combined with the
fact that $C' \cap M \neq \emptyset$ and that $M$ is a maximal ST-set
of $\cC$, we have that $C' \subseteq M$ and thus $C' \in \cC|M$. $C'$ and $C$ are compatible
(because all clusters in $\cC|M$ are mutually compatible), so combining
that $C' \cap C \neq \emptyset$ and that $C \not \subseteq C'$, we have
that $C' \subset C$. Observe that $C'$ is partitioned by ST-sets from
$\cS'$. Now, we cannot have that $nb(C') \leq 1$ because
$S_j$ is a \emph{strict} subset of $C'$. So $2 \leq nb(C') < b$,
contradiction. 
It remains only to prove that all the clusters in $\cC | (S_i \cup S_j)$
are mutually compatible. $S_i \cup S_j$ is unseparated by $\cC$ so
for each $C \in \cC$ we have that $C \cap (S_i \cup S_j) = \emptyset$,
$S_i \cup S_j \subseteq C$ or $C \subseteq S_i \cup S_j$. For each
$C \in \cC$ such that $C \subseteq S_j \cup S_j$ we have either
that $C = S_i \cup S_j$, $C \subseteq S_i$ or $C \subseteq S_j$ (because
$C$ is compatible with both $S_i$ and $S_j$). Since $S_i \cap S_j = \emptyset$,  it is not possible
for $\cC | (S_i \cup S_j)$ to contain two incompatible clusters, and again we conclude
that $S_i \cup S_j$ is an ST-set.


\end{proof} 

\noindent
\textbf{Lemma \ref{lem:moveST}.} 
\emph{Let $N$ be a network that represents a set of clusters $\cC$. Let $S$ be a non-trivial ST-set
with respect to $\cC$. Then there exists a network $N'$ such that $r(N') \leq r(N)$, $\ell(N') \leq \ell(N)$,
$S$ is under a cut-edge in $N'$ and for each ST-set $S'$ such that $S' \cap S = \emptyset$ and $S'$ is
under a cut-edge in $N$, $S'$ is also under a cut-edge in $N'$.}
\begin{proof}
We obtain $N'$ from $N$ by the following transformation. Let $x$ be any element of $S$. Recall that leaves
in $N$ always have indegree-1 and outdegree-0, so let $(u,v)$ be the edge in $N$ such that $v$ is
labelled by $x$. This is trivially a cut-edge. (a) Delete in $N$ all taxa in $S$
(but not the leaves they label, we will deal with this in step (c)).
(b) Identify $v$ with the root of a tree $T_{S}$ on $S$ that represents $\cC|S$. (c) Tidy up redundant parts of the
network possibly created in step (a) by applying in an arbitrary order
any of the following steps until no more can be applied: deleting any nodes with outdegree-0 that are not labelled
by a taxon; suppressing any nodes with indegree-1 and outdegree-1; replacing any multi-edges with a single edge; deleting
any node with indegree-0 and outdegree-1.
This concludes the transformation. Let $N'$ be the resulting network. To see that $N'$ represents $\cC$ consider
any $C \in \cC$. If $C \subseteq S$ then clearly some edge in the tree we added in step (b) represents
$C$. Otherwise there are only two possibilities (because $S$ is unseparated): $S \cap C = \emptyset$ or $S \subset C$.
In either case we know that there exists some tree $T$ on $\cX$ which is displayed by $N$ and such that $T$
represents $C$. By applying steps (a)-(c) simultaneously to $T$ we obtain a new
tree $T'$ that is displayed by $N'$ and which represents $C$. The critical reason this works is that, if
$C \not \subseteq S$, $x \in C$ (where $x$ is the element we selected at the beginning of the proof) if and only if $S \subseteq C$.
The tidying-up operations in step (c) clearly do not raise the reticulation number or the level of the network.
Furthermore they leave untouched any taxa not in $S$. So any other ST-sets that were already under a cut-edge
in $N$, are also under a cut-edge in $N'$.
\end{proof}

\subsection{Proof that {\cass} does not construct a simple level-$\leq 
3$ network for the input $\cC$ described in Figure \ref{fig:4t}. \label{app:proofCassBreaks}}

Let us begin with the following observation.

\begin{observation}
\label{obs:atleast3}
If a phylogenetic network $N$ represents $\cC$, then $r(N) \geq 3$. 
\end{observation}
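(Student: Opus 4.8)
The plan is to derive the bound from the maximal ST-set (MST) lower bound developed in Section~\ref{sec:MSTlowerbound}. By Lemma~\ref{lem:lowerbound2}, $r(\cC)$ is at least the length of the shortest maximal ST-set tree sequence for $\cC$, and since $r(N)\ge r(\cC)$ for every network $N$ that represents $\cC$, it suffices to show that \emph{every} maximal ST-set tree sequence for $\cC$ has length at least $3$. Unwinding the definition, this is exactly three assertions: (i) $\cC$ is not a compatible (laminar) set of clusters; (ii) for every maximal ST-set $S_1$ of $\cC$, the set $\cC\setminus S_1$ is not compatible; and (iii) for every maximal ST-set $S_1$ of $\cC$ and every maximal ST-set $S_2$ of $\cC\setminus S_1$, the set $\cC\setminus S_1\setminus S_2$ is not compatible. (Assertions (i) and (ii) together say there is no tree sequence of length at most $1$; adding (iii) rules out length $2$.)

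First I would compute the maximal ST-sets of $\cC$ via the collapsing procedure of Lemma~\ref{lem:polymax}; by Corollary~\ref{cor:mostn} there are at most $|\cX|$ of them and they partition $\cX$, so there are only a constant number of choices at each of the (at most two) removal steps. For assertion~(i) it is enough to exhibit a single pair of incompatible clusters in $\cC$. For assertion~(ii), for each maximal ST-set $S_1$ I would exhibit two clusters of $\cC$ that \emph{survive} the deletion of $S_1$ — i.e.\ neither is contained in $S_1$ — and that remain incompatible after the taxa of $S_1$ are removed; because an ST-set is unseparated, every cluster either contains $S_1$, is contained in $S_1$, or is disjoint from $S_1$, so which clusters survive and what they become after the collapse can be read off directly. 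Assertion~(iii) is handled in the same way, now ranging over the (recomputed) maximal ST-sets $S_2$ of each $\cC\setminus S_1$. Each of (i)--(iii) is a finite check over $O(|\cX|^2)$ combinations, after which the shortest maximal ST-set tree sequence has length $\ge 3$, hence $r(\cC)\ge 3$, and therefore $r(N)\ge 3$ for every $N$ representing $\cC$.

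The main obstacle is the bookkeeping in step~(iii): deleting a maximal ST-set can \emph{merge} ST-sets that were distinct maximal ST-sets of $\cC$, so the second-level maximal ST-sets must be recomputed for each choice of $S_1$, and one must carefully track which clusters of $\cC$ still witness incompatibility after two collapsing operations. This is routine but tedious; the number of cases stays small because the maximal ST-sets form a partition and there are only a handful of them. (An alternative, less economical route would be to use Lemmas~\ref{lem:simpleexists} and~\ref{lem:transfBinary} to reduce to simple binary networks of level at most $2$, enumerate the level-$1$ generator and the four level-$2$ generators of Figure~\ref{fig:gen}, and run the side/generator analysis of Lemma~\ref{lem:core} — in particular the $\rightarrow$ relation — to show that none of the resulting networks represents $\cC$; this works but requires checking substantially more configurations.)
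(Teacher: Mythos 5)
Your reduction to ``every maximal ST-set tree sequence for $\cC$ has length at least $3$'' is exactly where the argument breaks: that claim is false for this cluster set, so the case-check in your step (iii) would not come out the way you need. Concretely, $\{1\}$ is a maximal ST-set of $\cC$, $\{5\}$ is a maximal ST-set of $\cC\setminus\{1\}$, and $\cC\setminus\{1\}\setminus\{5\}$ is compatible --- it is precisely the cluster set of the base tree of Figure \ref{fig:basetree} (with the meta-taxon $34$ decollapsed into the cherry on $\{3,4\}$). Hence $(\{1\},\{5\})$ is a maximal ST-set tree sequence of length $2$, the MST lower bound of $\cC$ equals $2$, and Lemma \ref{lem:lowerbound2} can never certify $r(\cC)\geq 3$. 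This is not an accident of bookkeeping: in the optimal network of Figure \ref{fig:4tL3L4}(a), removing taxon $1$ and then taxon $5$ destroys the three reticulations in only two SBR removals (the second removal kills two reticulations at once), which is exactly the slack the paper warns about when it calls the MST bound ``rather weak'' and proves tightness only for clusters coming from \emph{two} trees (Corollary \ref{cor:tightbound}); here $\cC$ comes from four trees, and the padded sequence $(\{1\},\{5\},\emptyset)$ is literally case 1 of the appendix's analysis of \textsc{Cass}. So no amount of careful recomputation of second-level maximal ST-sets will close the gap --- the lower bound you are invoking is genuinely $2$ here.

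The paper's own proof avoids the MST bound entirely. It first gets $r(N)\geq 2$ by counting: the clusters $\{8,5\},\{1,5\},\{5,6\}\in\cC$ are pairwise incompatible, a network with reticulation number $r$ displays at most $2^{r}$ trees, and each displayed tree can represent at most one of these three clusters. To exclude $r(N)=2$ it uses that $\cC$ is separating, so every SBR of a representing network is a single leaf; it then argues that one of the leaves $1,5,6,8$ must be an SBR (otherwise deleting some other leaf would leave a reticulation-number-$\leq 1$ network still representing the three clusters above), and in each of the four cases deleting that leaf leaves a network with reticulation number at most $1$ that would have to represent three pairwise incompatible clusters of the restricted cluster set --- a contradiction. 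Your alternative suggestion (reduce via Lemmas \ref{lem:simpleexists} and \ref{lem:transfBinary} to simple binary level-$\leq 2$ networks and rule out the generators of Figure \ref{fig:gen}) is sound in principle but much heavier; the fix for your main argument is to replace the MST-bound step by a counting/SBR-removal argument of the kind just described rather than trying to push the ST-set lower bound up to $3$.
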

\begin{proof}
First, note that $r(N) \geq 2$. This follows
because the clusters $\{8,5\}, \{1,5\}, \{5,6\}$ are
all in $\cC$. A network with reticulation number $r$ can display at
most $2^{r}$ mutually non-isomorphic trees, and each tree displayed
by the solution $N$ can represent at most one of those clusters. Now, suppose 
some network $N$ exists that
represents $\cC$ such that $r(N)=2$. All SBRs of $N$ are single leaves, 
because otherwise $\cC$ would not be  
separating
. Now, observe that
at least one of the leaves $\{1,5,6,8\}$ must be an SBR. Suppose this is not
so, and let taxon $i \not \in \{1,5,6,8\}$ be a single leaf SBR. But then
removing leaf $i$ from $N$ and tidying up the resulting network would give us a network $N'$ with
$r(N') \leq 1$ such that $N'$ displays the three clusters described
at the beginning of the proof; contradiction. So suppose 1 is the SBR.
If we remove taxon 1 then the remaining network $N'$ has $r(N') \leq 1$ and
represents $\{8,5\}$, $\{5,6,7\}$ and $\{3,4,5\}$, contradiction. If
taxon 5 is the SBR then the same argument holds for $\{1,2\}$, $\{3,4,1\}$
and $\{7,6,1\}$. (In fact, taxon 5 is symmetrical to taxon 1). If taxon
6 is the SBR then clusters $\{5,8\}$, $\{5,7\}$, $\{1,3,4,5\}$ cause
the contradiction. If taxon 8 is the SBR then clusters $\{1,3,4\}$, $\{1,2\}$,
$\{1,5\}$ cause the contradiction. 
\end{proof}  

\noindent
To prove the overall result we thus exhaustively consider all maximal ST-set
tree sequences of length exactly 3. These sequences can either be
derived by hand, or generated computationally. Many maximal ST-set
sequences of length 3 do not remove all incompatibilities in the clusters
i.e. they are not tree sequences. We can thus automatically exclude these
sequences from our analysis (because {\cass} will immediately conclude that it
is not possible to construct a simple level-$\leq 3$ network that represents $\cC$ from such a sequence). Additionally, many of the \emph{tree} sequences
can be excluded by using an argument similar to that used
in Observation \ref{obs:atleast3}. In particular, if two maximal ST-sets
have already been removed, but the input still contains three mutually
incompatible clusters, then we discard this tree sequence. This is because
any network that represents the remaining clusters (i.e. those obtained after removing the first
two maximal ST-sets) must have reticulation number at least 2, and subsequently hanging back the first two maximal ST-sets below reticulations raises the reticulation
number of the network to at least 4.  In other words, all attempts of {\cass} to construct
a simple level-$\leq 3$ network representing $\cC$ using this maximal ST-set tree sequence are doomed to fail.

It turns out that after applying these two filtering rules\footnote{The automated case-analysis
can be downloaded from \url{http://skelk.sdf-eu.org/clustistic/caseanalysis.txt}} there only remains a very
small number of cases to consider. These are :

\begin{enumerate}
\item $(\{1\}, \{5\}, \emptyset)$
\item $(\{8\}, \{1\}, \{5\})$
\item $(\{1\}, \{8\}, \{5\})$
\item $(\{5\}, \{1\}, \emptyset)$
\item $(\{2\}, \{5\}, \{1\})$
\item $(\{5\}, \{2\}, \{1\})$
\end{enumerate}

Note that, because of symmetries in the cluster set, sequence 4 is entirely symmetrical to sequence 1, sequence 5 is entirely symmetrical to sequence 2 and  sequence 6 is entirely symmetrical to sequence 3. Hence we can actually restrict
our attention to only sequences 1-3.\\

\begin{figure}[h]
  \centering
  \includegraphics[scale=.27]{./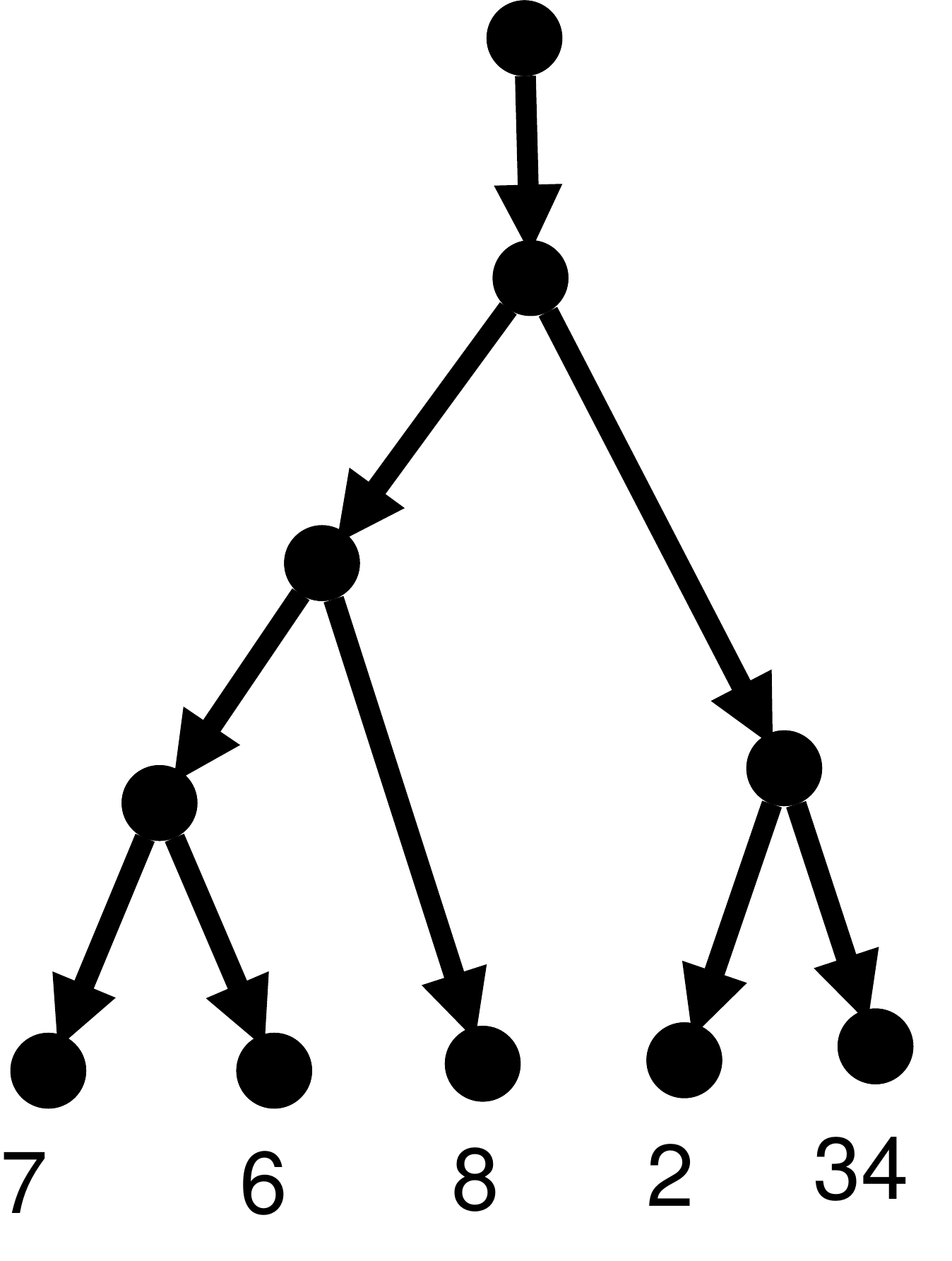}
  \caption{The ``base tree'' $T$  in the case $(\{1\}, \{5\}, \emptyset)$.}
  \label{fig:basetree}
\end{figure}

\textbf{Case} $(\{1\}, \{5\}, \emptyset$)\\
\\
We can see already that this is a potentially important
tree sequence because it is closely linked to the network shown in Figure \ref{fig:4tL3L4}(a). Specifically:
remove taxon 1, causing the reticulation number to drop by 1, and then remove taxon 5 which causes the
reticulation number to drop by 2 (because one of the two remaining reticulations is a child of the other). Here
we show why {\cass} nevertheless cannot reconstruct the network.

Before removing taxon 1 the maximal ST-sets are $\{i\}$, $1 \leq i \leq 8$. After removing taxon 1 the maximal 
ST-sets are $\{2\}$, $\{3,4\}$, $\{5\}$, $\{6\}$, $\{7\}$, $\{8\}$. $\{3,4\}$ is a (new) non-singleton maximal 
ST-set so the {\cass} algorithm collapses taxa 3 and 4 into a new meta-taxon, let us call this $34$ for simplicity. 
After collapsing the maximal ST-sets are thus $\{2\}$, $\{34\}$, $\{5\}$, $\{6\}$, $\{7\}$, $\{8\}$. Subsequently 
taxon 5 is removed and there are no more incompatible clusters in the input. We construct the unique tree 
$T$ that represents exactly the remaining clusters (and add a dummy root): see Figure \ref{fig:basetree}. For each
taxon $i$ in $T$ let $e_i$ be the edge in $T$ that feeds into it.

Note how meta-taxon 34 is still collapsed. The constructive phase of {\cass} will now proceed as follows: (a) try 
and hang back a dummy taxon (i.e. the empty set $\emptyset$) from $T$ below a reticulation; (b) try and hang taxon 
5 back below a second reticulation; (c) decollapse meta-taxon 34; (d) try and hang taxon 1 back below a third 
reticulation. Now, after decollapsing meta-taxon 34 we have a network $N'$ on taxa set $\{2,3,4,5,6,7,8\}$ with a 
cherry on taxa $\{3,4\}$. We argue that in steps (a) and (b) certain edges of $T$ will definitely
have been subdivided by the tails of reticulation edges. Note that $e_6$ 
has definitely been subdivided to make cluster $\{5,6\}$ possible. Similarly edge $e_8$ will have been
subdivided to make cluster $\{5,8\}$ possible. And edge $e_{34}$ will have been subdivided to make
cluster $\{5,34\}$ possible. Steps (a) and (b) can subdivide at most four edges of $T$, but observe
that if four edges are subdivided then one of the two reticulations in $N'$ wil not have been ``used'' i.e. there
will only hang a dummy taxon beneath it. Hence $N'$ cannot simultaneously represent $\{5,6\}$, $\{5,8\}$ and $\{5,3,4\}$, because a network with reticulation number at least 2 is required to resolve these incompatibilities.
Steps (c) and (d) cannot remedy this so we can exclude the case that four edges of $T$ have been subdivided. Hence we can safely conclude that edge $e_2$ of $T$ has not yet been subdivided in $N'$.

Consider step (d). We have to hang back taxon 1 beneath a single reticulation i.e. a 
reticulation with two incoming edges. Observe that at least one of those two reticulation edges subdivides (i.e. 
starts from) the edge entering taxon 3, otherwise the cluster $\{1,3\} \in \cC$ cannot be represented. 
Now, note also that $\{1,2\} \in \cC$. To obtain this cluster the edge $e_2$ has to be subdivided, because
otherwise every cluster that contains both 1 and 2 will also contain 3 and 4. But if both the two reticulation 
edges are used in this way, then any cluster that contains a taxon from $\{6,7,8\}$ and contains taxon 1, must also 
contain $\{2,3,4\}$. This is not satisfactory, however, because cluster $\{1,5,6,7\}$ is in $\cC$. In other words: 
at least three reticulation edges are needed to hang back taxon 1, yielding a contradiction. 

It will be clear from this example that the problem with {\cass} is that it collapses maximal ST-set
$\{3,4\}$, forcing the iteration in which taxon 1 is hung back to use at least 3 reticulation
edges instead of 2.

\begin{figure}[h]
  \centering
  \includegraphics[scale=.25]{./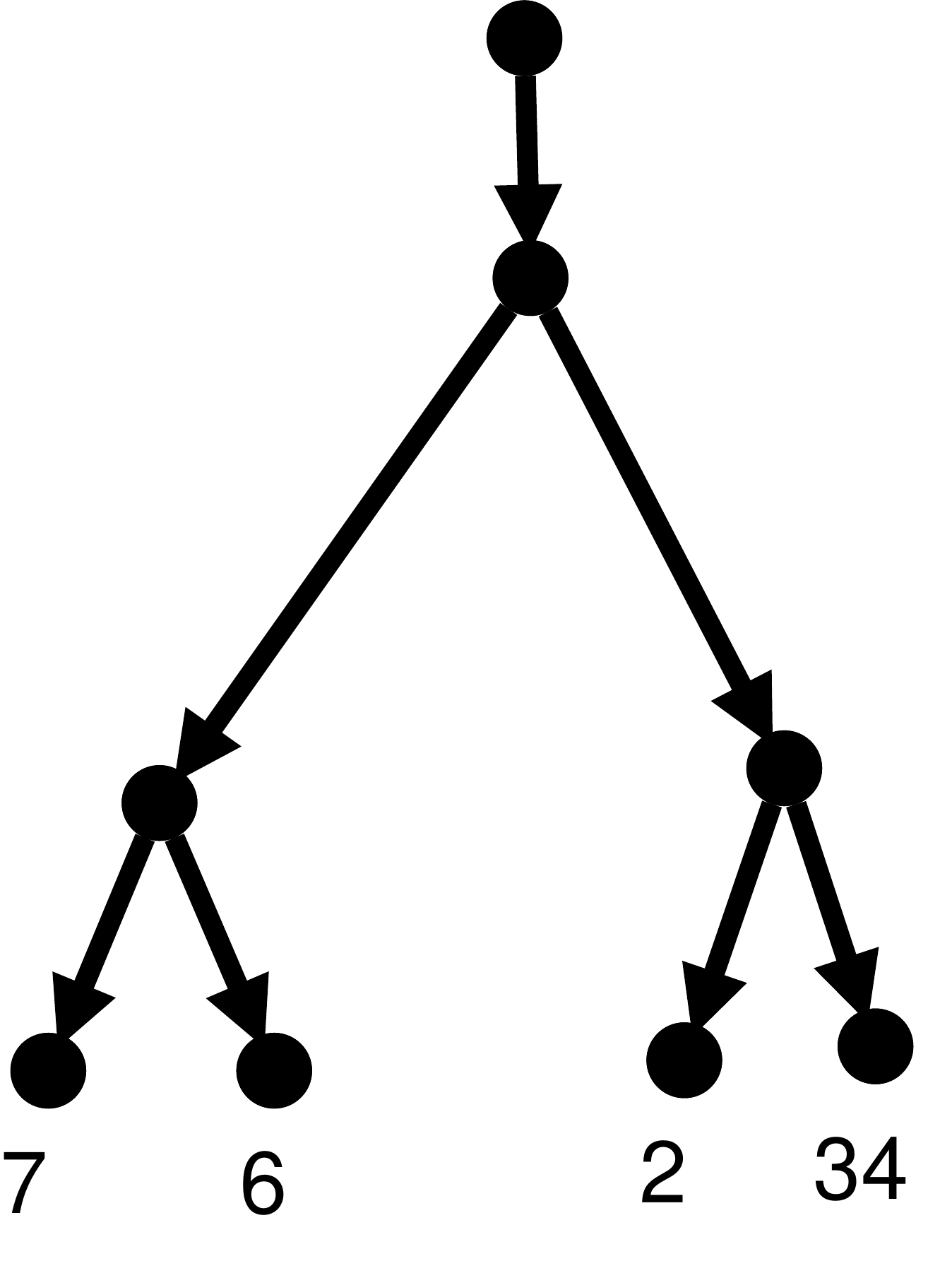}
  \caption{The ``base tree'' $T$ for the cases $(\{8\}, \{1\}, \{5\})$ and $(\{1\}, \{8\}, \{5\})$ .}
  \label{fig:basetree815}
\end{figure}

\noindent
\textbf{Case $(\{8\}, \{1\}, \{5\})$}\\
\\
Let $T$ be the tree obtained after removing these three maximal ST-sets. After removing the maximal ST-set $\{1\}$ taxa 3 and 4 will, as in the previous case, have been
collapsed into the meta-taxon 34. Hence $T$ will look as in Figure \ref{fig:basetree815}. As in the previous case let $e_i$ be the edge feeding into taxon $i$.
Observe that to hang back maximal ST-set $\{5\}$ it is essential to subdivide $e_6$ and $e_{34}$; this is the only way to ensure that the clusters $\{5,6\}$ and $\{5, 34\}$ are represented. Let $N'$ be the network obtained by doing this and subsequently expanding meta-taxon 34 into a cherry on taxa $\{3,4\}$. Now, we need to hang maximal ST-set $\{1\}$ back from $N'$ increasing the reticulation number exactly by one. 
Note that in any case the edge entering
taxon $3$ in $N'$ will have to be subdivided (to represent the cluster $\{1,3\}$), and also the edge entering taxon $2$ will need to be subdivided (to represent the cluster
$\{1,2\}$).  However, the cluster $\{1,5,6,7\}$ will then definitely not be represented. Hence
we conclude that actually at least three reticulation edges are required to hang back $\{1\}$, and thus that it is impossible to hang back $\{1\}$ increasing the reticulation number exactly by one. \\
\\
\noindent
\textbf{Case $(\{1\}, \{8\}, \{5\})$}\\
\\
The case has the same base tree $T$ as the case $(\{8\}, \{1\}, \{5\})$, and again in this case there is only one way to hang back maximal ST-set $\{5\}$ as an SBR, yielding
the same network $N'$. Now, we first need to hang back maximal ST-set $\{8\}$ in such a way that the reticulation number rises by exactly 1. Let $l$ be the child of the real root of $N'$ which lies on a directed path
from the real root to taxon 6. We say that an edge of $N'$ is a \emph{left edge} if there is a directed path from the real root that contains both the tail and head of the edge
and which also passes through $l$. Observe that when hanging back $\{8\}$ at least one left edge of $N'$ has to be subdivided by a reticulation edge, otherwise the cluster
$\{5,6,7,8\}$ will not be represented by the resulting network $N''$.  Suppose the second reticulation edge (when hanging back $\{8\}$) subdivided neither the edge entering taxon 2, nor the edge entering
taxon 3, in $N'$. But then, when subsequently hanging back $\{1\}$ from $N''$, both these edges will have to be subdivided (to ensure that the clusters $\{1,2\}$ and $\{1,3\}$ are represented), and such a network cannot possibly represent the cluster
$\{1,5,6,7\}$. Suppose then that, when hanging back $\{8\}$, the edge entering taxon 2 in $N'$ was subdivided by a reticulation edge $e$. But when hanging back $\{1\}$ the edge entering
taxon 3 in $N''$ will definitely have to be subdivided, and also the edge entering taxon 2 or (alternatively) $e$. But, again, the cluster $\{1,5,6,7\}$ is not represented by such a network. An essentially identical
argument applies if, when hanging back $\{8\}$, only the edge entering taxon 3 was subdivided. Hence hanging back $\{8\}$ and then $\{1\}$ causes the reticulation number to rise by at least 3, rendering
it impossible to construct a simple level-$\leq 3$ network.


\bibliographystyle{IEEEtran}
\bibliography{elusive_arxiv}

\end{document}